\newif\ifcomments 
\newif\ifauthors  
\pgfplotsset{compat=1.13}
\renewcommand{\backref}[1]{}
\renewcommand{\backrefalt}[4]{%
\ifcase #1 %
\or
[p.\ #2]%
\else
[pp.\ #2]%
\fi}
\newtheorem{theorem}{Theorem}
\newtheorem{claim}[theorem]{Claim}
\newtheorem{conjecture}[theorem]{Conjecture}
\newtheorem{corollary}[theorem]{Corollary}
\newtheorem{definition}[theorem]{Definition}
\newtheorem{fact}[theorem]{Fact}
\newtheorem{lemma}[theorem]{Lemma}
\newtheorem{proposition}[theorem]{Proposition}
\newtheorem{property}[theorem]{Property}
  \newcommand{\comment}[1]{{\color{red}#1}}
  \newcommand{\comment}[1]{}
\newcommand{\mparen}[1]{\mleft(#1\mright)}
\newcommand{\mbracket}[1]{\mleft[#1\mright]}
\newcommand{\abs}[1]{\left|#1\right|}
\newcommand{\eps}{\varepsilon}
\newcommand{\sgn}{\mathrm{sgn}}
\newcommand{\equad}{\mathrel{\phantom{=}}}
\newcommand{\E}{\mathop{\mathbf{E}}}
\newcommand{\Naturals}{\mathbb{N}}
\newcommand{\Reals}{\mathbb{R}}
\newcommand{\poly}{\mathrm{poly}}
\newcommand{\negl}{\mathrm{negl}}
\newcommand{\polylog}{\mathrm{polylog}}
\newcommand{\quasipoly}{\mathrm{quasipoly}}
\newcommand{\AND}{\normalfont\textsc{AND}} 
\newcommand{\OR}{\normalfont\textsc{OR}}
\newcommand{\NOT}{\normalfont\textsc{NOT}}
\newcommand{\trnc}{\mathrm{trnc}}
\newcommand{\diag}{\mathrm{diag}}
\newcommand{\TD}{\mathrm{TD}}
\newcommand{\TVD}{\mathrm{TVD}}
\newcommand{\ind}{\mathop{\mathds{1}}}
\newcommand{\ttable}{\mathrm{tt}}
\newcommand{\var}{{\mathop{\bf Var\/}}}
\newcommand{\cov}{{\mathop{\bf Cov\/}}}
\begin{document}

\title{Quantum Cryptography in Algorithmica}

\newcommand{\email}[1]{\href{mailto:#1}{\texttt{#1}}}

\ifauthors
\author{William Kretschmer\thanks{University of Texas at Austin. \ Email:
\email{kretsch@cs.utexas.edu}. \ Supported by an NDSEG Fellowship.} \and 
Luowen Qian\thanks{Boston University. \ Email: \email{luowenq@bu.edu}. \ Supported by DARPA under Agreement No.\ HR00112020023.}
\and
Makrand Sinha\thanks{Simons Institute and University of California at Berkeley. \ Email:
\email{makrand@berkeley.edu}. \ Supported by a Simons-Berkeley Postdoctoral Fellowship.}
\and
Avishay Tal\thanks{University of California at Berkeley. \ Email: \email{atal@berkeley.edu}. \ Supported by a Sloan Research Fellowship and NSF CAREER Award CCF-2145474.}
}
\fi

\date{}
\maketitle

\begin{abstract}
    We construct a classical oracle relative to which $\mathsf{P} = \mathsf{NP}$ yet single-copy secure pseudorandom quantum states exist.
    In the language of Impagliazzo's five worlds, this is a construction of pseudorandom states in ``Algorithmica,''  and hence shows that in a black-box setting, quantum cryptography based on pseudorandom states is possible even if one-way functions do not exist.
    \comment{W: Remove this? Our result can equivalently be understood as showing that in the random oracle model, there exist pseudorandom quantum states that are secure against $\mathsf{BQP}^\mathsf{PH}$ adversaries.}
    As a consequence, we demonstrate that there exists a property of a cryptographic hash function that simultaneously (1) suffices to construct pseudorandom states, (2) holds for a random oracle, and (3) is independent of $\mathsf{P}$ vs.\ $\mathsf{NP}$ in the black-box setting.
    \comment{W: Remove this? These pseudorandom states are based on interleaving Hadamard gates with random diagonal unitaries.}
    We also introduce a conjecture that would generalize our results to multi-copy secure pseudorandom states.
    
    We build on the recent construction by Aaronson, Ingram, and Kretschmer (CCC 2022) of an oracle relative to which $\mathsf{P} = \mathsf{NP}$ but $\mathsf{BQP} \neq \mathsf{QCMA}$, based on hardness of the $\OR \circ \textsc{Forrelation}$ problem.
    Our proof also introduces a new discretely-defined variant of the Forrelation distribution, for which we prove pseudorandomness against $\mathsf{AC^0}$ circuits. This variant may be of independent interest.
\end{abstract}

\section{Introduction}

One-way functions (OWFs) have played a central role in computational cryptography since its birth \cite{DH76-crypto}. 
On the one hand, the existence of one-way functions would separate $\mathsf P$ from $\mathsf{NP}$ in an average case sense; on the other, their existence has proven to be necessary for almost all classical cryptographic tasks \cite{IL89-owf,Gol90-prg}. This reveals a fundamental tension in the classical world: we cannot expect to solve all problems in $\mathsf{NP}$ extremely well but also have useful cryptography at the same time.
In the language of Impagliazzo's five worlds \cite{Imp95-average}, there is no hope of constructing much useful cryptography in ``Algorithmica,'' a world in which $\mathsf{P} = \mathsf{NP}$. Rather, the ``Minicrypt'' world, where one-way functions exist, is generally considered to capture the bare minimum of cryptography, because one-way functions are implied by almost all other cryptographic primitives. At the same time, even just assuming the existence of one-way functions, a wide range of other cryptographic primitives are possible, including pseudorandom generators, pseudorandom functions, symmetric-key encryption schemes, and digital signatures.



A growing body of work has shown that $\mathsf P \neq \mathsf{NP}$ may not be necessary to construct various useful \textit{quantum} cryptography. Quantum key distribution (QKD) \cite{BB84-qkd} is arguably the earliest demonstration of this idea: it enables two parties to securely exchange a secret key, assuming only that they share an untrusted quantum channel and an authenticated classical channel. The security proof of QKD is information-theoretic, and relies on no computational assumptions \cite{Ren08-qkd}. By contrast, classical key exchange lies in Impagliazzo's ``Cryptomania'' world, meaning that it relies on computational assumptions that appear to be even stronger than the existence of one-way functions \cite{IR89-limits}.

Unfortunately, much like classically, many interesting cryptographic tasks still remain impossible for information-theoretically secure quantum protocols \cite{May97-commitment,LC97-commitment}, and thus require some assumptions on the model, e.g.\ a computational bound on the adversary.
More recent works have demonstrated the possibility of building computationally secure quantum cryptography based on computational assumptions that are plausibly weaker than the existence of one-way functions. A prominent example is the construction of cryptography based on \textit{pseudorandom quantum states} (PRSs), introduced by Ji, Liu, and Song \cite{JLS18-prs}.
Informally, an ensemble of quantum states is pseudorandom if the states can be efficiently generated, and if no polynomial-time quantum adversary can distinguish a random state drawn from the ensemble from a Haar-random state.
PRSs can be defined with either single- \cite{MY22-prs} or multi-copy security \cite{JLS18-prs}, depending on whether the adversary is allowed a single copy, or any polynomial number of copies of the unknown state, respectively.
\cite{JLS18-prs} also showed that the existence of quantum-secure one-way functions is sufficient to construct multi-copy pseudorandom states, although the converse is not known. Hence, assuming the existence of PRSs is no stronger than assuming the existence of quantum-secure OWFs.

Despite appearing weaker than one-way functions, pseudorandom states are surprisingly powerful, and suffice to construct a wide variety of cryptography.
Even with only single-copy secure pseudorandom states, we can already construct commitment schemes and some form of non-trivial one-time signatures \cite{AQY22-prs,MY22-prs}, the former of which are equivalent to secure multiparty computation and computational zero knowledge proofs \cite{YWLQ15-commitment,Yan22-commitments,BCKM21-multiparty,GLSV21-miniqcrypt,AQY22-prs,BCQ23-efi}.
From the more standard notion of multi-copy security, we can also achieve private-key query-secure quantum money \cite{JLS18-prs} and non-trivial one-time encryption \cite{AQY22-prs}.

Nevertheless, the evidence to suggest that PRSs are actually a weaker assumption than OWFs is extremely limited. Indeed, other than the basic intuition that there is no obvious way to construct OWFs out of PRSs, the only provable separation between these primitives is the result of Kretschmer \cite{Kre21-pseudorandom}, who constructed an oracle relative to which $\mathsf{BQP} = \mathsf{QMA}$ and yet pseudorandom states exist. This shows that, in the black box setting, quantum-secure OWFs are not implied by PRSs, because quantum algorithms can efficiently invert any classical function if $\mathsf{NP} \subseteq \mathsf{BQP}$.

However, Kretschmer's result comes with the major caveat that the oracle achieving this separation is quantum, meaning that the oracle is some arbitrary unitary transformation that only a quantum algorithm can query. Hence, perhaps it is unsurprising that this quantum oracle lets us achieve quantum cryptography (PRSs) but not classical cryptography (OWFs)---there is no meaningful way to define classical queries to a unitary oracle!
In other words, even though $\mathsf{BQP} = \mathsf{QMA}$ implies that quantum-secure OWFs do not exist, the statement ``$\mathsf{BQP}^\mathcal{O} = \mathsf{QMA}^\mathcal{O}$ implies that quantum-secure OWFs do not exist relative to $\mathcal{O}$'' is less meaningful when $\mathcal{O}$ is a quantum oracle, because any construction of OWFs cannot depend on $\mathcal{O}$.
Furthermore, quantum oracle separations are conceptually weaker than classical oracle separations, because they can produce consequences that fail relative to all classical oracles.
Indeed, Aaronson \cite{Aar09-qma1} observed that that there exist inclusions of complexity classes that trivially hold relative to all classical oracles (e.g.\ $\mathsf{BQP} \subseteq \mathsf{ZQEXP}$), but that can be separated relative to certain quantum oracles. Thus, for all we know, the result of \cite{Kre21-pseudorandom} could be merely an artifact of the quantumness of the oracle, and there could be a classical relativizing proof that PRSs imply OWFs!

Another conceptual limitation of Kretschmer's separation is that the pseudorandom state construction involves directly generating the state using a Haar-random oracle. However, unlike the classical random oracle model, we very much do not know how to even heuristically instantiate such a Haar random oracle in the real world, other than via ad-hoc approaches such as random quantum circuits. Therefore, \cite{Kre21-pseudorandom} offers little insight into plausible constructions of PRSs without OWFs in a non-oracular setting.

\subsection{Our Results}

In this work, we overcome these limitations of quantum oracles by constructing a separation of PRSs and OWFs relative to a \textit{classical} oracle. Our main result is the following:

\begin{theorem}[\Cref{prop:oracle_p_equals_np} and \Cref{thm:single_copy_prs_prob_1}, informal]
\label{thm:main_informal}
    There exists a classical oracle relative to which $\mathsf{P} = \mathsf{NP}$ and single-copy pseudorandom state ensembles exist.
\end{theorem}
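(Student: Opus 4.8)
The plan is to use a two-part classical oracle $\mathcal{O}=(\mathcal{H},\mathcal{P})$, following the skeleton of the Aaronson--Ingram--Kretschmer construction. The first part $\mathcal{H}$ is a classical random oracle that, for every security parameter $\lambda$, every key $k\in\{0,1\}^{\lambda}$, and every round $j\in[L(\lambda)]$, outputs a function $g_{\lambda,k,j}$ defining a diagonal $\pm 1$-phase unitary $D_{\lambda,k,j}$ on $n(\lambda)$ qubits, sampled not uniformly but from the \emph{discrete \textsc{Forrelation} distribution} introduced for this purpose; here $n,L$ are polynomial in $\lambda$ and the key length is large enough that the resulting state ensemble is statistically a $1$-design. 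The generator, given $k$, uses $L$ oracle queries to prepare $|\psi_k\rangle = D_{\lambda,k,L}H^{\otimes n}\cdots H^{\otimes n}D_{\lambda,k,1}H^{\otimes n}|0^n\rangle$, so that each amplitude of $|\psi_k\rangle$ is an $L$-fold iterated \textsc{Forrelation} sum in the planted functions, and the family over $k$ is morally an $\OR$ over \textsc{Forrelation}-type instances. The second part $\mathcal{P}$ is a $\mathsf{PSPACE}$-complete oracle relative to $\mathcal{H}$.

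The $\mathsf{P}=\mathsf{NP}$ half (\Cref{prop:oracle_p_equals_np}) does not depend on how $\mathcal{H}$ is drawn and follows from the relativizing inclusion $\mathsf{NP}^{\mathcal{O}}\subseteq\mathsf{NP}^{\mathsf{PSPACE}^{\mathcal{H}}}=\mathsf{PSPACE}^{\mathcal{H}}=\mathsf{P}^{\mathsf{PSPACE}^{\mathcal{H}}}\subseteq\mathsf{P}^{\mathcal{O}}$. In particular any polynomial-time adversary against the PRS lies in $\mathsf{BQP}^{\mathcal{O}}$, which --- since the $\mathsf{PSPACE}^{\mathcal{H}}$ part subsumes direct queries to $\mathcal{H}$ --- should be thought of as an efficient quantum algorithm together with an oracle for the polynomial hierarchy in the bits of $\mathcal{H}$.

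The PRS half (\Cref{thm:single_copy_prs_prob_1}) is the heart of the argument. Because the ensemble is a $1$-design, a single copy of $|\psi_k\rangle$ for a uniform $k$ has the same reduced density matrix $I/2^n$ as a single copy of a Haar-random state, so an adversary gains nothing from the challenge alone; all of its advantage must come from the oracle correlation between the challenge copy and the key-$k$ block of $\mathcal{H}$, which is present only in the pseudorandom case. To rule this out, I would (i) fix a polynomial-time oracle adversary $A$ making $q$ queries and, by a hybrid over queries together with the polynomial method, bound $A$'s distinguishing advantage by the amount a bounded-degree function of the few touched cells of $\mathcal{H}$ can correlate with the single challenge copy; (ii) fold $A$'s polynomial-hierarchy (equivalently $\mathsf{PSPACE}^{\mathcal{H}}$) queries into a quasipolynomial-size $\mathsf{AC^0}$ circuit over $\mathcal{H}$, via the usual simulation of bounded-depth $\mathsf{PH}$ computations by $\mathsf{AC^0}$; and (iii) run a hybrid over the $L$ interleaved phase layers, replacing one planted layer at a time by a genuinely uniform one and charging the difference to pseudorandomness of the discrete \textsc{Forrelation} distribution against $\mathsf{AC^0}$ --- the new technical lemma. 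After $L$ steps the challenge is independent of $\mathcal{H}$, matching the Haar case, and the telescoped advantage is $\negl(\lambda)$. A Borel--Cantelli union bound over all (countably many) polynomial-time oracle adversaries and all $\lambda$ then promotes ``secure with overwhelming probability over $\mathcal{H}$'' to ``single-copy pseudorandom states exist relative to $\mathcal{O}$ with probability $1$.''

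The main obstacle is steps (ii)--(iii). First, one must \emph{define} a discretely supported variant of the \textsc{Forrelation} distribution --- one that can actually be sampled and written into a finite classical oracle --- and re-establish the Raz--Tal-style Fourier/Gaussian-smoothing analysis showing it still $\negl$-fools $\mathsf{AC^0}$; this is where pseudorandomness against $\mathsf{AC^0}$ circuits is proved. Second, the distinguisher holds one copy of the very state $|\psi_k\rangle$ whose amplitudes \emph{are} \textsc{Forrelation} sums, so the hybrid must be arranged so that measuring that copy cannot hand the $\mathsf{AC^0}$ post-processor a shortcut; pinning down this interaction is exactly where the single-copy restriction is used, and why the argument does not reach multi-copy security (which the paper treats only under a conjecture). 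A further subtlety is ensuring that the $\mathsf{PSPACE}^{\mathcal{H}}$ layer --- needed at full strength for $\mathsf{P}=\mathsf{NP}$ --- adds no distinguishing power beyond what the $\mathsf{AC^0}$ reduction captures, which may force one to work level-by-level within the hierarchy as actually queried by a fixed polynomial-time adversary, or to argue directly about how $\mathsf{PSPACE}^{\mathcal{H}}$ queries depend on $\mathcal{H}$.
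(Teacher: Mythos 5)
Your high-level skeleton is close in spirit to the paper (a random oracle $\mathcal{H}$ plus a collapsing oracle, a Forrelation-state ensemble, an $\mathsf{AC^0}$-pseudorandomness lemma for a discretely defined Forrelation distribution, a hybrid argument, and a Borel--Cantelli wrap-up), but there are three concrete places where the plan as stated would break, all of which the paper handles differently.

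\paragraph{The collapsing oracle must be $\mathsf{PH}$-like, not $\mathsf{PSPACE}$-like.} You propose $\mathcal{P}$ to be $\mathsf{PSPACE}$-complete relative to $\mathcal{H}$. This does make $\mathsf{P}=\mathsf{NP}$ relativize, but it destroys the only tool you have for the security half: the $\mathsf{PH}$-to-$\mathsf{AC^0}$ connection (Furst--Saxe--Sipser). Each bit of a $\mathsf{PSPACE}^{\mathcal{H}}$ oracle is not computable by a quasipolynomial-size \emph{constant-depth} circuit in the bits of $\mathcal{H}$, so you cannot ``fold the adversary's queries into $\mathsf{AC^0}$'' as your step (ii) demands, and the Forrelation-fools-$\mathsf{AC^0}$ lemma has nothing to latch onto. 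You flag this as a subtlety but leave it unresolved; it is in fact the crux. The paper sidesteps it by defining the second oracle $B$ recursively so that it answers exactly $\mathsf{NP}$ predicates over $(A,B)$ at smaller input lengths (\Cref{def:PH_oracle}). This still yields $\mathsf{P}^{\mathcal{O}[A]}=\mathsf{NP}^{\mathcal{O}[A]}$ immediately (\Cref{prop:oracle_p_equals_np}), and, crucially, each bit of $B$ \emph{is} computable by an exponential-size, constant-depth $\mathsf{AC^0}$ circuit in $A$ (\Cref{lem:PH_to_AC0}), which is precisely what the security reduction needs.

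\paragraph{The construction's phases must be uniform; the Forrelation distribution lives only inside the hybrids.} You have $\mathcal{H}$ hand out phases drawn from the discrete Forrelation distribution, and you appeal to the ensemble being a $1$-design. But the $1$-design fact the paper uses (\Cref{claim:h3_h4}) is that $\E_h\bigl[\ket{\Phi_h}\!\bra{\Phi_h}\bigr]=I/2^n$ when $h$ is \emph{uniform}; it is not clear this holds, and it is not needed, for Forrelation-sampled phases. In the paper the oracle $A$ is genuinely uniformly random, and the ensemble consists of $2$-Forrelation states $\ket{\varphi_k}=U_{g_k}HU_{f_k}\ket{+^n}$ with uniform $f_k,g_k$. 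The discrete distribution $\mathcal{F}_n$ appears only as a \emph{reparametrization} of the uniform distribution in Hybrid $\mathsf{H_1}$: one samples a Forrelated pair $(f'_{k^*},g'_{k^*})\sim\mathcal{F}_n$ and a uniform shift $h$, and sets $g_{k^*}=g'_{k^*}\cdot h$, which is still uniform marginally (\Cref{claim:h0_h1}).

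\paragraph{The hybrid that detaches the challenge from the oracle is missing.} Your step (iii) hybridizes over the $L$ phase layers, replacing a planted layer by a uniform one, and asserts that ``after $L$ steps the challenge is independent of $\mathcal{H}$.'' That conclusion does not follow: the generator still queries $\mathcal{H}$ to build $\ket{\psi_k}$, so uniformizing the planted distribution leaves the challenge fully correlated with the oracle. The paper's key quantitative step is entirely absent from your plan: \Cref{lem:tau_rho_sigma} shows that the conditional input state under $\mathsf{H_2}$ satisfies $\TD(\sigma_A,\,\eps\rho_A+(1-\eps)I/2^n)\le\negl(\kappa)$, i.e., the $\mathsf{H_2}$ challenge $\ket{\Phi_h}$ is statistically a small-$\eps$ mixture of the actual pseudorandom state $\rho_A$ and the maximally mixed state. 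This is what simultaneously (a) ties the PRS distinguishing advantage to the $\OR\circ\textsc{Forrelation}$-style distinguishing advantage with only a $\poly(\kappa)=1/\eps$ loss, and (b) hands the adversary a state \emph{independent} of the oracle so that the $\mathsf{AC^0}$-pseudorandomness lemma can be invoked in \Cref{thm:h2_h3}. Without an analogue of this mixture lemma, the reduction from PRS security to Forrelation hardness is not established, and a layer-by-layer hybrid alone does not recover it.

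In short: keep the $\mathsf{AC^0}$ idea and the discrete Forrelation distribution, but replace $\mathsf{PSPACE}$ with the recursively defined $\mathsf{NP}$-iterated oracle, use a genuinely uniform random oracle $A$ with $t=2$ Forrelation states, and insert the mixture lemma ($\mathsf{H_1}\leftrightarrow\mathsf{H_2}$) that makes the rest of the argument go through.
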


\Cref{thm:main_informal} can thus be taken as a relativized construction of PRSs in Impagliazzo's ``Algorithmica'' \cite{Imp95-average}. Since OWFs do not exist if $\mathsf{P} = \mathsf{NP}$, our result shows that OWFs are not necessary to construct PRSs in the classical black box setting, answering a question of Ji, Liu, and Song \cite{JLS18-prs}. Note that our result is formally incomparable to the result of \cite{Kre21-pseudorandom}: on the one hand, our separation is conceptually stronger, because our oracle is classical, rather than quantum. On the other hand, we achieve single-copy PRSs, whereas \cite{Kre21-pseudorandom} achieves multi-copy PRSs.\footnote{Another technical difference is that we construct a world where $\mathsf{P} = \mathsf{NP}$, whereas \cite{Kre21-pseudorandom} constructs a world where $\mathsf{BQP} = \mathsf{QMA}$, and these are also incomparable \cite[Theorems 4 and 9]{AIK21-acrobatics}.} 

We briefly describe the oracle and the associated construction of pseudorandom states. Our oracle $\mathcal{O} = (A, B)$ consists of two parts: a random oracle $A$, and an oracle $B$ that is defined recursively to answer all possible $\mathsf{NP}$ predicates of either $A$ or $B$. Note that similar oracles were used in \cite{BM99-relativized, AIK21-acrobatics}, and that $\mathsf{P}^\mathcal{O} = \mathsf{NP}^\mathcal{O}$ essentially by definition. Furthermore, $B$ is constructed so that queries to $B$ are roughly equivalent in power to queries to $\mathsf{PH}^A$. So, our result can also be interpreted as showing that in the random oracle model, there exist PRSs that are secure against $\mathsf{BQP}^\mathsf{PH}$ adversaries.

Our pseudorandom state ensemble is defined using what we call \textit{$t$-Forrelation states}, which are $n$-qubit states $\ket{\Phi_F}$ of the form:
\[
\ket{\Phi_F} = U_{f^t} \cdot H \cdot U_{f^{t-1}} \cdot H \cdots H \cdot U_{f^1} \ket{+^n},
\]
where $F = (f^1, f^2, \ldots, f^t)$ is a $t$-tuple of $n$-bit Boolean functions $f^i: \{\pm 1\}^n \to \{\pm 1\}$,  $U_{f^i}$ is the phase oracle corresponding to $f^i$,
and $H$ is the $n$-qubit Hadamard transform. $t$-Forrelation states are a generalization of so-called ``phase states'', which correspond to the case $t = 1$ \cite{JLS18-prs, BS19-binary, INNRY22-qma-search, ABDY22-phase}.
$t$-Forrelation states are so called because of their connection to the $t$-fold $\textsc{Forrelation}$ problem \cite{AA18-forrelation, BS21}. 
We take the states in our PRS ensemble to be a set of randomly-chosen $2$-Forrelation states with $F$ specified by the random oracle. That is, we view $A$ as defining a pair of random functions $(f_k, g_k)$ for each key $k \in \{0,1\}^\kappa$ (where $\kappa$ is the security parameter), and we take the pseudorandom state keyed by $k$ to be:
\[
\ket{\varphi_k} \coloneqq \ket{\Phi_{(f_k, g_k)}} = U_{g_k} \cdot H \cdot U_{f_k} \ket{+^n}.
\]
Here, we can pick $n$ to be any polynomial in $\kappa$, although since we only achieve single-copy security, it is only non-trivial if $n > \kappa$, as also observed in prior works \cite{JLS18-prs,MY22-prs}.

We remark that similar ``Hadamard/phase cocktails''\footnote{We credit Scott Aaronson (personal communication) for suggesting this term.} have appeared elsewhere in quantum information \cite{AK07-qcma-qma,NHMW17-2design,NHKW17-pseudorandom}, including also in Ji, Liu, and Song's candidate construction of \textit{pseudorandom unitaries} \cite[Section 6.2]{JLS18-prs}, which are a strengthening of pseudorandom states.

\subsection{Proof Overview}

Our proof builds on the recent construction by Aaronson, Ingram, and Kretschmer \cite{AIK21-acrobatics} of an oracle relative to which $\mathsf{P} = \mathsf{NP}$ and $\mathsf{BQP} \neq \mathsf{QCMA}$. Their proof of this separation involves showing that an oracle distinguishing problem called $\OR \circ \textsc{Forrelation}$ is not in $\mathsf{BQP}^\mathsf{PH}$. Informally, in the $\OR \circ \textsc{Forrelation}$ problem, we are given an exponentially-long list $\{(f_k, g_k)\}_{k \in \{0,1\}^\kappa}$ of pairs of $n$-bit Boolean functions, and we must distinguish between:
\begin{itemize}
    \item[(YES)] There exists a single $k \in \{0,1\}^\kappa$ such that the functions $f_k$ and $g_k$ are \textit{Forrelated}, meaning that $\left|\braket{+^n | \Phi_{(f_k, g_k)}}\right| \ge \eps$ for some $\eps \ge 1/\poly(\kappa)$,\footnote{The Forrelation between $f$ and $g$ is usually defined directly in terms of the correlation between $\hat{f}$ and $g$ \cite{Aar10-bqp-ph, AA18-forrelation}, but our definition is equivalent. Indeed, $\left|\braket{+^n | \Phi_{(f_k, g_k)}}\right|^2$ is exactly the acceptance probability of the $2$-query quantum algorithm for estimating the Forrelation between $f$ and $g$ \cite[Section 3.2]{Aar10-bqp-ph}.} or
    \item[(NO)] For every $k \in \{0,1\}^\kappa$, $f_k$ and $g_k$ are uniformly random, in which case $\left|\braket{+^n | \Phi_{(f_k, g_k)}}\right|$ is negligible for every $k$, with high probability.
\end{itemize}

Our main insight is that viewing the $\textsc{Forrelation}$ problem as a state overlap problem allows us to relate $\OR \circ \textsc{Forrelation}$ to the $2$-Forrelation state PRS distinguishing task.
In fact, we will formally relate these problems via a reduction: we show that any $\mathsf{BQP}^\mathsf{PH}$ adversary that distinguishes the PRS ensemble from random would give rise to a $\mathsf{BQP}^\mathsf{PH}$ algorithm for solving $\OR \circ \textsc{Forrelation}$. Given an instance $\{(f'_k, g'_k)\}_{k \in \{0,1\}^\kappa}$ of $\OR \circ \textsc{Forrelation}$, we choose a uniformly random function $h: \{\pm 1\}^n \to \{\pm 1\}$. Then, the PRS adversary is given the input state $\ket{\Phi_h} = U_h \ket{+^n}$, and it is allowed queries to the oracle $\{(f_k, g_k)\}_{k \in \{0,1\}^\kappa}$ defined by $(f_k, g_k) \coloneqq (f'_k, g'_k \cdot h)$.

Observe that if $\{(f'_k, g'_k)\}_{k \in \{0,1\}^\kappa}$ is a YES instance of $\OR \circ \textsc{Forrelation}$, then there exists some $k$ such that
\[
\left|\braket{\Phi_h | \varphi_k}\right| = \left|\braket{\Phi_h | \Phi_{(f_k, g_k)}}\right| = \left|\braket{+^n | U_h U_h | \Phi_{(f'_k, g'_k)}}\right| = \left|\braket{+^n | \Phi_{(f'_k, g'_k)}}\right| \ge \eps.
\]
In other words, the state $\ket{\Phi_h}$ given to the adversary has some non-negligible overlap with a state $\ket{\varphi_k}$ drawn from the PRS ensemble. On the other hand, if $\{(f'_k, g'_k)\}_{k \in \{0,1\}^\kappa}$ is a NO instance of $\OR \circ \textsc{Forrelation}$, then $\ket{\Phi_h}$ is far from \textit{all} states in the PRS ensemble, with high probability.

Though this reduction does not perfectly map the $\OR \circ \textsc{Forrelation}$ problem onto the security challenge of the PRS, we nevertheless show that our reduction is quantitatively ``close enough,'' at least in the single-copy case. Specifically, we prove that the distinguishing advantage of the adversary between the YES and NO instances of $\OR \circ \textsc{Forrelation}$ is polynomially related to its distinguishing advantage between the pseudorandom and Haar-random security challenges of the PRS. This polynomial dependence scales with the parameter $\eps$.

Proving this dependence requires a delicate analysis based on a carefully constructed distributional version of the $\OR \circ \textsc{Forrelation}$ problem. Along the way, we introduce a new variant of the \textit{Forrelation distribution} \cite{Aar10-bqp-ph, RT}---i.e., a distribution over pairs of Boolean functions $(f, g)$ that are Forrelated with high probability. This distribution is defined as follows: first, we choose $f$ to be uniformly random. Then, independently for each $x \in \{\pm 1\}^n$, we sample $g(x) \in \{\pm 1\}$ with bias proportional to the Fourier coefficient $\hat{f}(x)$, modulo a rounding step in case $\left|\hat{f}(x)\right|$ is too large. Our Forrelation distribution has the advantage that it is discretely defined, in contrast to the distributions given by Aaronson \cite{Aar10-bqp-ph} or Raz and Tal \cite{RT}, which involve multivariate Gaussians. Thus, in some applications, it may be easier to analyze. We prove that our Forrelation distribution is pseudorandom against $\mathsf{AC^0}$ using techniques similar to \cite{CHHL18} based on polarizing random walks.

\subsection{Cryptographic Implications}
From a practical standpoint, a non-oracular version of our PRS construction can be instantiated by choosing the functions $\{(f_k, g_k)\}_{k \in \{0,1\}^\kappa}$ according to a cryptographic hash function, or from a pseudorandom function (PRF) ensemble keyed by $k$. This generalizes the construction of PRSs using phase states with PRF-chosen phases \cite{JLS18-prs, BS19-binary}. \Cref{thm:main_informal} then suggests that our construction based on Forrelation states is secure against a much broader class of attacks than phase state constructions: whereas PRF-based phase states can be distinguished from Haar-random by a $\mathsf{BQP}^\mathsf{NP}$ adversary \cite{Kre21-pseudorandom}, our construction remains secure even against the stronger class of $\mathsf{BQP}^\mathsf{PH}$ adversaries.\footnote{Strictly speaking, the $\mathsf{BQP}^\mathsf{NP}$ attack on phase states requires a polynomial number of copies of the state, while our proof of security against $\mathsf{BQP}^\mathsf{PH}$ adversaries only applies in the singe-copy case. However, we conjecture that the $t$-Forrelation construction remains secure even in the multi-copy case, at least for some sufficiently large $t$. We outline a plausible path towards proving this in \Cref{sec:open}.}

Alternatively, the proof of \Cref{thm:main_informal} can be understood as showing that there exists a cryptographically useful property of hash functions that is plausibly independent of the $\mathsf{P}$ vs.\ $\mathsf{NP}$ problem\comment{W: remove, answering a question of Ananth, Qian, and Yuen \cite[Section 1.2]{AQY22-prs}; L: Feel free to keep/remove if you want!}. Informally, this property is the following hardness assumption of a hash function $F$:

\begin{property}[\Cref{property}, informal]
    \label{property_informal}
    Let $F = \{(f_k, g_k)\}_{k \in \{0,1\}^\kappa}$ be a list of pairs of efficiently computable functions. Given quantum query access to an auxiliary function $h$, it is hard for an adversary to distinguish whether:
    \begin{enumerate}[(i)]
        \item There exists a $k$ such that $f_k$ is Forrelated with $g_k \cdot h$, or
        \item $h$ is uniformly random.
    \end{enumerate}
\end{property}

In other words, \Cref{property_informal} posits the hardness of detecting Forrelations between two parts $(f_k, g_k)$ of $F$ relative to a ``shift'' specified by $h$.
\comment{L: added}
Note that although this is a $\mathsf{QCMA}$-style problem in the sense that the shifted Forrelation is efficiently verifiable given the classical secret $k$, it is actually unclear whether it could be broken if $\mathsf{BQP} = \mathsf{QCMA}$.
This is because in this problem, an oracle of $h$ is given instead of (some succinct representation of) its code.

The key step in our proof involves reducing the security of the pseudorandom state ensemble to this problem, while also showing that a version of \Cref{property_informal} holds for the oracle $\mathcal{O}$ that we construct. As a result, we conclude that this property is simultaneously:

\begin{enumerate}[(a)]
    \item \label{item:intro_prs_sufficient} Powerful enough to construct various useful quantum cryptographic schemes, including commitments, zero knowledge, one-time signatures, etc., because it suffices to construct pseudorandom states,
    \item \label{item:intro_random_oracle} Plausibly true for existing hash functions like SHA-3, because it holds for a random oracle, and
    \item \label{item:intro_p_np_independent} Independent of the existence of one-way functions in the black-box setting (and, indeed, even independent of $\mathsf{P}$ vs.\ $\mathsf{NP}$).
\end{enumerate}

Prior to this work, we could only find properties that achieve any two of these three: the existence of one-way functions satisfies \eqref{item:intro_prs_sufficient} \cite{JLS18-prs,MY22-prs,AQY22-prs} and \eqref{item:intro_random_oracle} \cite{IR89-limits}; the quantum oracle constructed in \cite{Kre21-pseudorandom} achieves \eqref{item:intro_prs_sufficient} and \eqref{item:intro_p_np_independent}; and the trivial property satisfies \eqref{item:intro_random_oracle} and \eqref{item:intro_p_np_independent} \cite{BGS75-p-np}. Most notably, unlike the result of \cite{Kre21-pseudorandom}, we do not require a practical realization of Haar-random oracles in order to build OWF-independent pseudorandom states. For further discussion, see \Cref{sec:standard_model}.

\comment{OLD: Alternatively, the proof of \Cref{thm:main_informal} could be taken to suggest that there exists a cryptographically useful property of hash functions that is plausibly independent of the $\mathsf{P}$ vs.\ $\mathsf{NP}$ problem. Roughly speaking, the property says that given a list of functions $\{(f_k, g_k)\}_{k \in \{0,1\}^\kappa}$, it should be hard to find Forrelations between two parts of the function $(f_k, g_k)$ relative to a ``shift'' specified by a function $h$.
The key step in our proof involves reducing the security of the pseudorandom state ensemble to this problem, in the oracle setting. We believe that this could be turned into a concrete, non-oracular hardness assumption that is plausibly weaker than OWFs, and that suffices to construct PRSs, although we leave this to future work.}

\comment{
Our proof also requires other statistical properties of random oracles, but we believe this is a plausible approach to obtaining a concrete hardness assumption that is
  \begin{enumerate}[(i)]
    \item Plausibly true for existing hash functions like SHA-3, because it holds for a random oracle,
    \item Independent of $\mathsf{P}$ vs.\ $\mathsf{NP}$ in the black-box setting, and
    \item Powerful enough to construct various useful quantum cryptographic schemes, including commitments, zero knowledge, one-time signatures, etc.
  \end{enumerate}
In particular, unlike the result of \cite{Kre21-pseudorandom}, we satisfy (1)--(3) without the need for a practical realization of Haar-random oracles.
}

\comment{M: I am not sure I understand this discussion. Need to think about this more.} 

\comment{W: I like the updated version. It might be better if we also add to the end ``In particular, unlike the result of \cite{Kre21-pseudorandom}, we satisfy (1)--(3) without the need for a practical realization of Haar-random oracles.'' Also, I don't think we actually need to preface with ``if this reduction carries over''.}
\comment{W: maybe also add that we previously knew how to get (i) and (iii). Or more generally that previously, we could only get any two of these three.}

\subsection{Open Problems}
\label{sec:open}

Perhaps the most natural question left for future work is whether our result can be strengthened to an oracle relative to which $\mathsf{P} = \mathsf{NP}$ and \textit{multi-copy} PRSs exist. It seems reasonable to conjecture that $t$-Forrelation states should remain secure against $\mathsf{BQP}^\mathsf{PH}$ adversaries even in the multi-copy setting. However, our strategy based on reduction from the $\OR \circ \textsc{Forrelation}$ problem might not suffice to prove this, at least in the $t = 2$ case. For example, if we try to naively extend our reduction to the the multi-copy case, then the adversary receives $\ket{\Phi_h}^{\otimes T}$ for some arbitrary polynomially-bounded $T$, rather than a single copy of $\ket{\Phi_h}$. And though $\left|\braket{\Phi_h | \varphi_k}\right| \ge \eps$ may be non-negligible, $\left|\bra{\Phi_h}^{\otimes T} \ket{\varphi_k}^{\otimes T}\right| = \left|\braket{\Phi_h | \varphi_k}\right|^T \ge \eps^T$ could in general be negligible if $T$ is large enough.
As a result, the distinguishing advantage of the adversary between the YES and NO instances of $\OR \circ \textsc{Forrelation}$ may no longer be polynomially related to its distinguishing advantage between the pseudorandom and Haar-random security challenges of the PRS.

Nevertheless, we show that there is some hope in extending our approach to the multi-copy setting. Assuming a strong conjecture about $t$-Forrelation states for some $t = \poly(n)$, we conditionally prove, via techniques similar to the single-copy case, the $\mathsf{BQP}^\mathsf{PH}$-security of $t$-Forrelation states. We give a formal statement of the conjecture in \Cref{sec:conjectured_multi_security}. Roughly speaking, the conjecture posits that for any given $t$-Forrelation state $\ket{\Phi_G}$, it is hard for $\mathsf{AC^0}$ circuits of $2^{\poly(n)}$ size to distinguish a $t$-tuple of functions $F = (f^1, f^2, \ldots, f^t)$ chosen uniformly at random, from an $F$ chosen subject to the constraint that $\left|\braket{\Phi_F|\Phi_G}\right|$ is negligibly close to $1$. We expect that choosing $t$ to be a large polynomial would be necessary for this conjecture to hold, as otherwise there might be very few $F$s for which $\left|\braket{\Phi_F|\Phi_G}\right|$ is close to $1$.

Independent of the issue of single-copy versus multi-copy security, can our oracle be strengthened in other ways? For example, can one build an oracle relative to which $\mathsf{P} = \mathsf{NP}$ and \textit{pseudorandom unitaries} (PRUs) \cite{JLS18-prs} exist?\footnote{Actually, to our knowledge, it is open even to construct PRUs relative to \textit{any} classical oracle.}
Alternatively, could one give an oracle relative to which $\mathsf{P} = \mathsf{QMA}$ and PRSs exist? One challenge is that if multi-copy PRSs exist, then $\mathsf{P} \neq \mathsf{PP}$, as observed by Kretschmer \cite{Kre21-pseudorandom}. Thus, any oracle relative to which $\mathsf{P} = \mathsf{QMA}$ and multi-copy PRSs exist must also be an oracle relative to which $\mathsf{P} = \mathsf{QMA} \neq \mathsf{PP}$, which is still an open problem \cite{AIK21-acrobatics}. However, it is not clear whether a similar barrier exists in the single-copy case. Indeed, another important direction for future work is to better understand what computational assumptions are required to construct single-copy PRSs, and also seemingly weaker quantum cryptographic primitives such as EFI pairs \cite{BCQ23-efi}. In particular, does the existence of single-copy PRSs imply $\mathsf{P} \neq \mathsf{PSPACE}$?

Finally, we seek to better understand whether there is any sense in which Forrelation, or an oracle problem like it, is necessary for our construction.  Could the binary phase construction of pseudorandom states \cite{JLS18-prs,BS19-binary} in fact also be secure against $\mathsf{BQP^{PH}}$ adversaries in the single-copy setting? (Recall that in the multi-copy setting, it is insecure against $\mathsf{BQP^{PH}}$, and indeed $\mathsf{BQP^{NP}}$, as shown by Kretschmer \cite{Kre21-pseudorandom}.)

\section{Preliminaries}

\subsection{Basic Notation}

We denote by $[n]$ the set $\{1, 2, \ldots, n\}$. If $\mathcal{D}$ is a probability distribution, then $x \sim \mathcal{D}$ means that $x$ is a random variable sampled from $\mathcal{D}$. If $S$ is a finite set, then $x \sim S$ means that $x$ is a uniformly random element of $S$. We let $\ind \{P\}$ be the indicator function that evaluates to $1$ if the predicate $P$ is true, and $0$ otherwise.

We use $\log(x)$ to denote the base-$2$ logarithm of $x$, while $\ln(x)$ is the base-$e$ logarithm.
For two $n$-dimensional vectors $v, w$, $v \odot w$ denotes their Hadamard (entrywise) product, that is $(v \odot w)_i := v_i w_i$.
For $a>0$, we let $\trnc_{a}: \Reals \to [-a, a]$ be the function that truncates to the interval $[-a, a]$, i.e.\ $\trnc_a(z) \coloneqq \min \{a, \max \{-a, z\} \}$.
We may also omit the subscript $a$ if $a = 1$.
Observe that $\trnc_a(z) = a\cdot \trnc(z/a)$.

We use $\TVD(X, Y)$ to denote the total variation distance between probability distributions, and $\TD(\rho, \sigma)$ to denote the trace distance between quantum states. We denote by $\diag(X)$ the diagonal of a matrix $X$.

As in standard cryptographic notation, we use $\poly(n)$ to denote an arbitrary polynomially-bounded function of $n$, i.e.\ a function $f$ for which there is a constant $c > 0$ such that $f(n) \le n^c$ for all sufficiently large $n$. Likewise, we use $\polylog(n)$ for an arbitrary $f$ satisfying $f(n) \le \log(n)^c$ for all sufficiently large $n$, and $\quasipoly(n)$ for an arbitrary $f$ satisfying $f(n) \le 2^{\log(n)^c}$ for all sufficiently large $n$. $\negl(n)$ denotes an arbitrary negligibly-bounded function of $n$, i.e.\ a function $f$ with the property that for every $c > 0$, for all sufficiently large $n$, $f(n) \le n^{-c}$.

\subsection{Boolean Functions}
%

For convenience, we use the $\pm 1$ basis for Boolean functions. Every function $f: \{\pm 1\}^n \to \Reals$ can be represented uniquely as a real multilinear polynomial:
\[
f(x) = \sum_{S \subseteq [n]} \hat{f}(S) \cdot \prod_{i \in S} x_i,
\]
where $\hat{f}(S)$ are the Fourier coefficients of $f$. This allows us to extend the domain of $f$ to arbitrary inputs in $\Reals^n$.
The Fourier coefficients can be computed via:
\[
\hat{f}(S) = \frac{1}{2^n} \sum_{x \in \{\pm 1\}^n} f(x) \cdot \prod_{i \in S} x_i.
\]
for each $S \subseteq [n]$. In a slight abuse of notation, whenever $x \in \{\pm 1\}^n$, we let $\hat{f}(x) = \hat{f}(S)$ where $S \coloneqq \{i \in [n]: x_i = -1\}$. For $\ell \in [n]$, we denote by
\[
L_{1, \ell}(f) \coloneqq \sum_{S \subseteq [n] : |S| = \ell} \left|\hat{f}(S)\right|.
\]

If $f: \{\pm 1\}^n \to \{\pm 1\}$ is a Boolean function, we let $\ttable(f) \in \{\pm 1\}^{2^n}$ denote the truth table of $f$, i.e.\ the concatenation of $f$ evaluated on all possible Boolean inputs, in lexicographic order. We denote by $\mathsf{AC^0}[s, d]$ the set of Boolean circuits of size at most $s$ and depth at most $d$ consisting of unbounded fan-in $\AND$, $\OR$, and $\NOT$ gates.

\subsection{Concentration Inequalities}

The concentration inequalities stated below are standard (see e.g.\ {\cite[Chapter 2]{Ver18-hdp}}).

\begin{fact}[Hoeffding's inequality]
\label{fact:hoeffding}
Suppose $X_1,\ldots,X_n$ are independent random variables such that $X_i \in [a_i, b_i]$ for all $i$. Let $X = \sum_{i=1}^n X_i$ and let $\mu = \E[X]$. Then for all $s \ge 0$ it holds that:
\[
\Pr\mbracket{|X - \mu| \ge s} \le 2\exp\mparen{-\frac{2s^2}{\sum_{i=1}^n (b_i - a_i)^2}}.
\]
\end{fact}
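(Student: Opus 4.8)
The plan is to use the standard Chernoff--Cram\'er exponential moment method. First I would observe that for any $\lambda > 0$, Markov's inequality applied to the nonnegative random variable $e^{\lambda(X - \mu)}$ gives
\[
\Pr\mbracket{X - \mu \ge s} \le e^{-\lambda s}\, \E\mbracket{e^{\lambda(X - \mu)}} = e^{-\lambda s} \prod_{i=1}^n \E\mbracket{e^{\lambda(X_i - \mu_i)}},
\]
where $\mu_i \coloneqq \E[X_i]$ and the factorization of the expectation uses independence of the $X_i$. Note that $X_i - \mu_i$ is a mean-zero random variable supported on an interval of length $b_i - a_i$.

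The key step is \emph{Hoeffding's lemma}: if $Y$ is a random variable with $\E[Y] = 0$ and $Y \in [a,b]$ almost surely, then $\E[e^{\lambda Y}] \le e^{\lambda^2 (b-a)^2 / 8}$ for all $\lambda \in \Reals$. I would prove this by setting $\psi(\lambda) \coloneqq \ln \E[e^{\lambda Y}]$ and verifying that $\psi(0) = 0$, $\psi'(0) = \E[Y] = 0$, and $\psi''(\lambda) \le (b-a)^2/4$ for all $\lambda$. The last bound is the crux: $\psi''(\lambda)$ is precisely the variance of $Y$ under the exponentially tilted law $d\mathbb{Q} \propto e^{\lambda Y}\, d\mathbb{P}$, which is again supported on $[a,b]$, and the variance of any $[a,b]$-valued random variable is at most $\bigl(\tfrac{b-a}{2}\bigr)^2$ (maximized by the two-point distribution on the endpoints). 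Taylor's theorem with Lagrange remainder then gives $\psi(\lambda) \le \tfrac{\lambda^2}{2}\cdot\tfrac{(b-a)^2}{4} = \tfrac{\lambda^2 (b-a)^2}{8}$, which exponentiates to the claim.

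Plugging Hoeffding's lemma into the product (applied to each $Y = X_i - \mu_i$, which lives in an interval of length $b_i - a_i$) yields
\[
\Pr\mbracket{X - \mu \ge s} \le \exp\mparen{-\lambda s + \frac{\lambda^2}{8} \sum_{i=1}^n (b_i - a_i)^2}.
\]
Optimizing the quadratic in the exponent by choosing $\lambda = 4s / \sum_{i=1}^n (b_i - a_i)^2$ gives the one-sided bound $\exp\mparen{-2s^2 / \sum_{i=1}^n (b_i - a_i)^2}$. Running the identical argument on $-X_1, \ldots, -X_n$ bounds $\Pr[X - \mu \le -s]$ by the same quantity, and a union bound over the two tails produces the stated factor of $2$. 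The only genuinely nontrivial ingredient here is Hoeffding's lemma (and within it, the variance bound for bounded random variables); the rest is bookkeeping and a one-variable optimization.
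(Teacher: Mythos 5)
Your proof is correct and is the standard Chernoff--Cram\'er argument via Hoeffding's lemma, including the usual derivation of that lemma by bounding $\psi''(\lambda)$ as the variance of the tilted law. The paper does not prove this statement at all---it records it as \Cref{fact:hoeffding} and cites a standard reference (Vershynin)---so there is nothing in the paper to compare against; your write-up would serve as a complete self-contained proof of the fact the paper takes as given.
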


A real-valued random variable $X$ is $\sigma$-subgaussian if $\Pr[|X- \mathbb{E}[X]| \ge s\sigma] \le 2e^{-s^2/2}$ holds for all $s \ge 0$. It follows from Hoeffding's inequality that for any vector $a \in \mathbb{R}^m$, the random variable $X = \langle{a},{Y}\rangle$ where $Y$ is uniform in $\{\pm 1\}^m$ is $\sigma$-subgaussian with $\sigma = \|a\|_2$. 

\begin{fact}[Bernstein's inequality]
\label{cor:subexp_concentration}
Let $X_1, \ldots, X_L$ be independent $\sigma$-subgaussian random variables. Let $X = \frac1L \sum_{i=1}^L X_i^2$ and let $\mu = \mathbb{E}[X]$. Then there exists an absolute constant $c > 0$ such that for every $s \ge 0$:
\[
\Pr\left[ \left| X - \mu \right| \ge s \right] \le 2\exp\left(
-c L \min \left\{\frac{s^2}{\sigma^4}, \frac{s}{\sigma^2} \right\}
\right).
\]
\end{fact}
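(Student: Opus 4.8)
The plan is to deduce this from the standard Bernstein inequality for sums of independent subexponential random variables (e.g.\ \cite[Theorem 2.8.1]{Ver18-hdp}), the one extra ingredient being the elementary observation that the square of a centered $\sigma$-subgaussian random variable is $O(\sigma^2)$-subexponential. So the whole statement reduces to: (1) a one-variable estimate turning a subgaussian tail on $X_i$ into a subexponential tail on $X_i^2$; (2) recentering; (3) an invocation of Bernstein followed by bookkeeping of the rescaling by $1/L$ and the absolute constants.

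First I would prove the one-variable estimate. Since $X_i$ is $\sigma$-subgaussian and (as in all our applications) centered, we have $\Pr[|X_i| \ge t] \le 2\exp(-t^2/(2\sigma^2))$ for every $t \ge 0$; substituting $t = \sqrt{u}$ gives $\Pr[X_i^2 \ge u] \le 2\exp(-u/(2\sigma^2))$ for every $u \ge 0$, so $X_i^2$ has a subexponential tail with parameter $O(\sigma^2)$, i.e.\ its Orlicz norm satisfies $\|X_i^2\|_{\psi_1} \le K\sigma^2$ for an absolute constant $K$. Setting $Y_i \coloneqq X_i^2 - \E[X_i^2]$, the $Y_i$ are independent and centered, and recentering inflates the $\psi_1$-norm by at most a universal factor, so $\|Y_i\|_{\psi_1} \le 2K\sigma^2$.

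Next I would apply Bernstein to $\sum_{i=1}^L Y_i$: for an absolute constant $c_0 > 0$ and every $u \ge 0$,
\[
\Pr\left[\left|\sum_{i=1}^L Y_i\right| \ge u\right] \le 2\exp\left(-c_0 \min\left\{\frac{u^2}{\sum_{i=1}^L \|Y_i\|_{\psi_1}^2},\ \frac{u}{\max_{i} \|Y_i\|_{\psi_1}}\right\}\right).
\]
Since $X - \mu = \frac{1}{L}\sum_{i=1}^L Y_i$, the event $\{|X-\mu|\ge s\}$ equals $\{|\sum_i Y_i|\ge sL\}$. Plugging $u = sL$ into the display and bounding $\sum_i \|Y_i\|_{\psi_1}^2 \le 4K^2\sigma^4 L$ and $\max_i \|Y_i\|_{\psi_1} \le 2K\sigma^2$, the first term of the minimum becomes $s^2 L/(4K^2\sigma^4)$ and the second becomes $sL/(2K\sigma^2)$, yielding
\[
\Pr[|X - \mu| \ge s] \le 2\exp\left(-c_0 L \min\left\{\frac{s^2}{4K^2\sigma^4},\ \frac{s}{2K\sigma^2}\right\}\right) \le 2\exp\left(-c L \min\left\{\frac{s^2}{\sigma^4},\ \frac{s}{\sigma^2}\right\}\right),
\]
where $c \coloneqq c_0/\max\{4K^2, 2K\}$ absorbs the absolute constants. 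This is exactly the claimed bound.

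There is essentially no real obstacle: this is a textbook fact, and one could equally just cite \cite[Section 2.8]{Ver18-hdp} directly. The only points requiring a modicum of care are (i) the step from a subgaussian tail on $X_i$ to a subexponential tail on $X_i^2$, which genuinely uses that $X_i$ is centered (it fails, e.g., for $X_i$ with a large mean relative to $\sigma$) — this is automatic in all our applications, such as $X_i = \langle a^{(i)}, Y\rangle$ with $Y$ uniform in $\{\pm 1\}^m$; and (ii) tracking that $\|X_i^2\|_{\psi_1} = \Theta(\sigma^2)$, so that the two regimes in the exponent come out with the rates $s^2/\sigma^4$ and $s/\sigma^2$ as stated rather than with some other power of $\sigma$.
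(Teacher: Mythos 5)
Your proof is correct and is exactly the derivation that the cited reference \cite[Section 2.8]{Ver18-hdp} gives; the paper itself states the fact without proof and defers to Vershynin. You are also right to flag the implicit centering hypothesis: with the paper's definition, $\sigma$-subgaussianity controls only $X_i - \E[X_i]$, and if $|\E[X_i]| \gg \sigma$ the cross-term $2\E[X_i]\,(X_i-\E[X_i])$ in the expansion of $X_i^2$ is subgaussian with parameter $\Theta(|\E[X_i]|\,\sigma)$ and dominates, spoiling the stated $s^2/\sigma^4$ and $s/\sigma^2$ rates; so the fact as written silently requires $\E[X_i]=0$, which does hold in the one place it is invoked (\Cref{lem:fourier-concentration}, where the random variables are the Fourier coefficients $\hat{f_k}(z)$, which have mean zero).
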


We use the previous inequality to prove the following lemma.

\begin{lemma}\label{lem:fourier-concentration}
Let $f_1, \cdots, f_L: \{\pm 1\}^n \to \{\pm1\}$ be independent samples of uniformly random Boolean functions. Then, there exists an absolute constant $C > 0$, such that for any $z \in \{\pm 1\}^n$, we have 
    \[ \E\left|\frac1L \sum_{k=1}^L \hat{f_k}(z)^2 - \frac{1}{2^n}\right| \le \frac{C}{2^n \sqrt{L}}. \]
\end{lemma}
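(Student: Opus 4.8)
Fix $z \in \{\pm 1\}^n$ and let $S = \{i : z_i = -1\}$, so that by definition $\hat{f_k}(z) = \hat{f_k}(S) = 2^{-n}\sum_{x \in \{\pm 1\}^n} f_k(x)\chi_S(x)$, where $\chi_S(x) = \prod_{i \in S} x_i \in \{\pm 1\}$. Since $f_1,\dots,f_L$ are independent and each $f_k$ assigns i.i.d.\ uniform $\pm 1$ values to the $2^n$ inputs, the quantity $X_k := 2^n \hat{f_k}(z) = \langle \chi_S,\, (f_k(x))_{x \in \{\pm 1\}^n}\rangle$ is, for each $k$, an inner product of the fixed sign vector $\chi_S$ with a uniformly random vector in $\{\pm 1\}^{2^n}$. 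By the observation following Hoeffding's inequality, $X_k$ is $\sigma$-subgaussian with $\sigma = \|\chi_S\|_2 = 2^{n/2}$, and the $X_1,\dots,X_L$ are mutually independent. Moreover $\E[X_k] = 0$ and $\E[X_k^2] = \|\chi_S\|_2^2 = 2^n$.

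Set $W := \frac1L\sum_{k=1}^L X_k^2 = 2^{2n}\cdot\frac1L\sum_{k=1}^L \hat{f_k}(z)^2$, so that $\mu := \E[W] = 2^n$ and the quantity we must bound is $\E\left|\frac1L\sum_{k=1}^L \hat{f_k}(z)^2 - 2^{-n}\right| = 2^{-2n}\,\E|W - \mu|$. Applying Bernstein's inequality (\Cref{cor:subexp_concentration}) with $\sigma^2 = 2^n$ yields, for all $s \ge 0$,
\[
\Pr\left[|W - \mu| \ge s\right] \le 2\exp\left(-cL\min\left\{\frac{s^2}{2^{2n}},\ \frac{s}{2^n}\right\}\right).
\]
To convert this into an expectation bound I would integrate the tail: $\E|W-\mu| = \int_0^\infty \Pr[|W-\mu|\ge s]\,ds$, splitting at $s = 2^n$. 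On $[0, 2^n]$ the minimum equals $s^2/2^{2n}$, and the contribution is at most $\int_0^\infty 2e^{-cLs^2/2^{2n}}\,ds = 2^n\sqrt{\pi/(cL)}$; on $[2^n,\infty)$ the minimum equals $s/2^n$, and the contribution is at most $\int_{2^n}^\infty 2e^{-cLs/2^n}\,ds \le \tfrac{2\cdot 2^n}{cL}$. Since $L \ge 1$ implies $1/L \le 1/\sqrt L$, we get $\E|W-\mu| \le C'\cdot 2^n/\sqrt L$ for an absolute constant $C'$ depending only on $c$. Dividing by $2^{2n}$ gives the claimed bound with $C = C'$.

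I do not expect a real obstacle: this is a routine second-moment/concentration argument. The only points that require a little care are (i) identifying the subgaussian parameter $\sigma = 2^{n/2}$ correctly, so that the final normalization by $2^{2n}$ produces exactly the factor $2^{-n}$ rather than a stray power of two, and (ii) handling the two regimes of Bernstein's $\min$ when integrating the tail. As an alternative route that bypasses Bernstein, one can instead use $\E|W - \mu| \le \sqrt{\var(W)} = \sqrt{\var(X_1^2)/L}$ and compute $\var(X_1^2) = \E[X_1^4] - 2^{2n} = 2\cdot 2^{2n} - 2\cdot 2^n \le 2\cdot 2^{2n}$ directly from the fourth moment of a sum of $2^n$ independent Rademacher variables; this gives the lemma with the explicit constant $C = \sqrt 2$.
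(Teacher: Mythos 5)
Your main argument is essentially the paper's: both recognize $\hat{f_k}(z)$ (or its rescaling) as a subgaussian inner product of a fixed $\pm 1$ vector with a uniform $\pm 1$ vector, invoke Bernstein's inequality (\Cref{cor:subexp_concentration}) to get the tail bound for $\frac1L\sum_k \hat{f_k}(z)^2$, and integrate the tail by splitting into the sub-Gaussian and sub-exponential regimes. The only cosmetic difference is that you work with the unnormalized sums $X_k = 2^n\hat{f_k}(z)$ (so $\sigma = 2^{n/2}$) whereas the paper works with $\hat{f_k}(z)$ directly (so $\sigma = 2^{-n/2}$); the normalizations cancel and the resulting bound is identical.

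Your closing remark is worth highlighting: the lemma is really just a second-moment statement, and the direct route $\E|W-\mu| \le \sqrt{\var(W)} = \sqrt{\var(X_1^2)/L}$, with $\var(X_1^2) = \E[X_1^4] - (\E[X_1^2])^2 = (3\cdot 2^{2n} - 2\cdot 2^n) - 2^{2n} = 2\cdot 2^{2n} - 2\cdot 2^n$, gives the bound with the explicit constant $C = \sqrt{2}$ and avoids Bernstein's inequality and the tail integration entirely. This is genuinely more elementary than the paper's proof; the only thing it costs is the (here unneeded) exponential concentration that Bernstein would provide. Since \Cref{cor:subexp_concentration} is invoked nowhere else in the paper, the variance computation would in fact have been the cleaner choice.
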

\begin{proof}
For a uniformly random Boolean function  $f: \{\pm 1\}^n \to \{\pm1\}$, the Fourier coefficient is given by $\hat{f}(z) = \langle{a},{Y}\rangle$ where $Y$ is uniform in $\{\pm 1\}^{2^n}$ and $\|a\|_2 = 2^{-n/2}$. Thus, $\hat{f}(z)$ is $\sigma$-subgaussian and $\mathbf{E}[|\hat{f}(z)|^2] = \sigma^2$ for $\sigma = 2^{-n/2}$. \Cref{cor:subexp_concentration} then implies the following tail bound
\[ \Pr\left[\left|\frac1L \sum_{k=1}^L \hat{f_k}(z)^2 - \frac{1}{2^n}\right| \ge s \right] \le 2\exp\left(
-c L \min \left\{\frac{s^2}{\sigma^4}, \frac{s}{\sigma^2} \right\}
\right). \]
Since $\mathbf{E}[X] = \int_0^\infty \Pr[X \ge s]ds$ for any non-negative random variable $X$, we have
\begin{align*}
    \ \E\left|\frac1L \sum_{k=1}^L \hat{f_k}(z)^2 - \frac{1}{2^n}\right|& \le 2\int_0^{\sigma^2} e^{-cLs^2/\sigma^4} ds + 2\int_{\sigma^2}^\infty e^{-cLs/\sigma^2}ds \\
    \ &=  \frac{2\sigma^2}{\sqrt{cL}} \int_0^{\sqrt{cL}} e^{-s^2}ds +  \frac{2\sigma^2}{{cL}} \int_{cL}^{\infty} e^{-s}ds \\
    &\le~ \frac{C\sigma^2}{\sqrt{L}},
\end{align*}
for an absolute constant $C>0$. Plugging in the value of $\sigma$ gives us the required bound.
\end{proof}

\subsection{Quantum States}

We also use the $\pm 1$ basis for quantum states. So, the space of $n$-qubit pure states is spanned by the orthonormal basis $\left\{\ket{x} : x \in \{\pm 1\}^n\right\}$, which we call the \textit{computational basis}. We denote by $\mu_{\mathrm{Haar}}^n$ the Haar measure over $n$-qubit pure states. We let $\ket{+} = \frac{\ket{1} + \ket{-1}}{\sqrt{2}}$, and use $\ket{+^n}$ as shorthand for $\ket{+}^{\otimes n}$.
When the context is clear, $H$ will generally denote the $n$-qubit Hadamard transform defined by
\[
H \coloneqq \begin{bmatrix}
\frac{1}{\sqrt{2}} & \frac{1}{\sqrt{2}}\\
\frac{1}{\sqrt{2}} & -\frac{1}{\sqrt{2}}
\end{bmatrix}^{\otimes n}.
\]
If $f$ is a Boolean function, we let $U_f$ be the phase oracle corresponding to $f$, i.e.\ the unitary transformation that acts as $U_f\ket{x} = f(x) \ket{x}$ on computational basis states $\ket{x}$. Note that, for any Boolean function $f$ and $x \in \{\pm 1\}^n$, $\bra{x}H U_f \ket{+^n} = \hat{f}(x)$.

We define multi-copy pseudorandom quantum states as follows.

\begin{definition}[Multi-copy pseudorandom quantum states \cite{JLS18-prs}]
\label{def:prs}
Let $\kappa \in \Naturals$ be the security parameter, and let $n(\kappa)$ be the number of qubits in the quantum system. A keyed family of $n$-qubit quantum states $\{\ket{\varphi_k}\}_{k \in \{0,1\}^\kappa}$ is \emph{multi-copy pseudorandom} if the following two conditions hold:
\begin{enumerate}[(i)]
\item (Efficient generation) There is a polynomial-time quantum algorithm $G$ that generates $\ket{\varphi_k}$ on input $k$, meaning $G(k) = \ket{\varphi_k}$.
\item (Computationally indistinguishable) For any polynomial-time quantum adversary $\mathcal{A}$ and every $T = \poly(\kappa)$:
\[
\left| \Pr_{k \sim \{0,1\}^\kappa}\mbracket{\mathcal{A}\mparen{1^\kappa, \ket{\varphi_k}^{\otimes T}} = 1 } - \Pr_{\ket{\psi} \sim \mu_{\mathrm{Haar}}^n}\mbracket{\mathcal{A}\mparen{1^\kappa, \ket{\psi}^{\otimes T}} = 1 } \right| \le \negl(\kappa).
\]
\end{enumerate}
\end{definition}

We emphasize that the above security definition must hold for \textit{all} polynomial values of $T$ (i.e.\ $T$ is not bounded in advance).\\

We also define single-copy pseudorandom states.
Unlike in the multi-copy case, we require $n > \kappa$ in order for the definition to be nontrivial, analogous to how a classical pseudorandom generator stretches a seed of length $\kappa$ into a pseudorandom string of length $n > \kappa$ (see \cite[Section 2.2]{MY22-prs} for further discussion).

\begin{definition}[Single-copy pseudorandom states \cite{MY22-prs}]
\label{def:single_copy_prs}
Let $\kappa \in \Naturals$ be the security parameter, and let $n(\kappa) > \kappa$ be the number of qubits in the quantum system. A keyed family of $n$-qubit quantum states $\{\ket{\varphi_k}\}_{k \in \{0,1\}^\kappa}$ is \emph{single-copy pseudorandom} if the following two conditions hold:
\begin{enumerate}[(i)]
\item (Efficient generation) There is a polynomial-time quantum algorithm $G$ that generates $\ket{\varphi_k}$ on input $k$, meaning $G(k) = \ket{\varphi_k}$.
\item (Computationally indistinguishable) For any polynomial-time quantum adversary $\mathcal{A}$:
\[
\left| \Pr_{k \sim \{0,1\}^\kappa}\mbracket{\mathcal{A}\mparen{1^\kappa, \ket{\varphi_k}} = 1 } - \Pr_{\ket{\psi} \sim \mu_{\mathrm{Haar}}^n}\mbracket{\mathcal{A}\mparen{1^\kappa, \ket{\psi}} = 1 } \right| \le \negl(\kappa).
\]
\end{enumerate}
\end{definition}

In this work, we only consider uniform quantum adversaries. That is, the adversary $\mathcal{A}$ is specified by a polynomial-time Turing machine $M$, where $M(1^\kappa)$ outputs a quantum circuit that implements $\mathcal{A}$ on security challenges of size $\kappa$. However, our construction is plausibly secure against non-uniform adversaries as-is, or possibly via the addition of a salting step \cite{CGLQ20-tradeoffs}.

\section{Pseudorandomness of the Forrelation Distribution}

We define our discrete version of the Forrelation distribution as follows:

\begin{definition}
\label{def:forrelation}
The \emph{Forrelation distribution} $\mathcal{F}_n$ is a distribution over a pair of functions $f, g : \{\pm 1\}^n \to \{\pm 1\}$ sampled as follows. First, $f$ is sampled uniformly at random. Then, for each $x \in \{\pm 1\}^n$, $g(x)$ is sampled independently via:
\[
g(x) = \begin{cases}
1 & \text{\rm with probability } \frac{1 + \trnc\left(\sqrt{\eps 2^n} \hat{f}(x)\right)}{2}\\
-1 & \text{\rm with probability } \frac{1 - \trnc\left(\sqrt{\eps 2^n} \hat{f}(x)\right)}{2},
\end{cases}
\]
where $\eps = \frac{1}{100n}$.
\end{definition}


The main technical result of this section is that the above Forrelation distribution $\mathcal{F}_n$ is pseudorandom against all constant-depth $2^{\poly(n)}$-size $\mathsf{AC^0}$ circuits.
\begin{theorem}
\label{thm:forrelation_AC0_pseudorandom}
For every $C \in \mathsf{AC^0}[2^{\poly(n)}, O(1)]$, the Forrelation distribution $\mathcal{F}_n$ satisfies
\[
\left|\E_{(f, g) \sim \mathcal{F}_n}\left[C(\ttable(f), \ttable(g))\right] - \E_{z \sim \{\pm 1\}^{2 \cdot 2^{n}}}\left[C(z)\right]\right| \le \frac{\poly(n)}{\sqrt{2^n}}.
\]
\end{theorem}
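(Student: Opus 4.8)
The plan is to follow the polarizing‑random‑walk / fractional‑pseudorandomness framework of Chattopadhyay, Hatami, Hosseini and Lovett~\cite{CHHL18} (in the form adapted to Forrelation‑type distributions by Raz and Tal~\cite{RT}): it reduces the claim to a Fourier‑analytic moment estimate for $\mathcal{F}_n$ driven by the orthogonality of the Hadamard transform, combined with the standard bounds on the Fourier spectrum of $\mathsf{AC^0}$. As a first step I would remove the truncation. For uniformly random $f$, the coefficient $\hat f(x)$ equals $\langle a, Y\rangle$ with $Y$ uniform on $\{\pm1\}^{2^n}$ and $\|a\|_2 = 2^{-n/2}$, hence is $2^{-n/2}$‑subgaussian, so $\Pr[|\hat f(x)| > 1/\sqrt{\eps 2^n}] \le 2e^{-1/(2\eps)} = 2e^{-50n}$. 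A union bound over the $2^n$ values of $x$ shows that, except with probability $\negl(n)$, the $\trnc$ in \Cref{def:forrelation} is inert; coupling $\mathcal{F}_n$ with the ``linearized'' distribution $\mathcal{F}_n'$ in which each $g(x)$ has bias exactly $\sqrt{\eps 2^n}\,\hat f(x) = \sqrt{\eps/2^n}\,(M\,\ttable(f))_x$ --- where $M\in\{\pm1\}^{2^n\times 2^n}$ is the character matrix, with $M^\top M = 2^n I$ --- the two distributions are $\negl(n)$‑close in total variation, so it suffices to fool $\mathsf{AC^0}$ with $\mathcal{F}_n'$.

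Next I would compute the relevant Fourier moments. Expand $C$ over $\{\pm1\}^{2^n}\times\{\pm1\}^{2^n}$, indexing characters by pairs $(S,T)$ of subsets of the $\ttable(f)$‑ and $\ttable(g)$‑blocks, with $\chi_S(w)=\prod_{i\in S}w_i$. Since $\ttable(f)$ is uniform under $\mathcal{F}_n'$, all terms with $T=\emptyset$ and $S\ne\emptyset$ vanish, and the remaining moments factor: using independence of the bits of $g$ given $f$,
\[
\E_{\mathcal{F}_n'}\!\big[\chi_S(\ttable(f))\,\chi_T(\ttable(g))\big] \;=\; (\eps/2^n)^{|T|/2}\cdot c_{T,S},
\]
where $c_{T,S}$ is the Fourier coefficient at $\chi_S$ of the degree‑$|T|$ polynomial $u\mapsto\prod_{x\in T}(Mu)_x$. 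Orthogonality of distinct characters then gives the structural facts I would rely on: $c_{T,S}=0$ unless $|S|\le|T|$ and $|S|\equiv|T|\pmod 2$; the only surviving ``diagonal'' index pattern yields $\sum_{|T|=j}\sum_S c_{T,S}^2 \le \frac1{j!}(4^n)^j$ (using $\|Mu\|_2^2 = 2^n\|u\|_2^2 = 4^n$); and, most importantly, the same cancellation bounds each individual $|c_{T,S}|$ by a quantity depending on $|T|$ but not on $n$ (roughly $|T|^{O(|T|)}$), because every ``free'' sum over $\{\pm1\}^n$ collapses to an $O(|T|)$‑size nontrivial character sum.

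I would then sum over Fourier levels, organized by $j := |T|$. The dominant term is $j=1$: here $c_{\{x\},S}$ is nonzero only for $S=\{y\}$, with $c_{\{x\},\{y\}}=\pm1$, so this level contributes at most $\sqrt{\eps/2^n}\cdot\sum_{|S|=|T|=1}|\widehat{C}(S,T)| \le \sqrt{\eps/2^n}\cdot L_{1,2}(C)$, and by Tal's tight Fourier‑growth bound for $\mathsf{AC^0}[2^{\poly(n)},O(1)]$ (which gives $L_{1,2}(C)\le\poly(n)$) this is at most $\poly(n)/\sqrt{2^n}$, using $\eps=\Theta(1/n)$. For $j\ge2$ the prefactor $(\eps/2^n)^{j/2}$ carries an extra $2^{-\Omega(n(j-1))}$ compared with $j=1$; for $j=O(\polylog n)$ this dominates both the combinatorial factor $|T|^{O(|T|)}$ and the $\poly(n)^{O(j)}$ Fourier‑growth bound, while for larger $j$ one instead uses Cauchy--Schwarz with $\sum_{|T|=j}\sum_S c_{T,S}^2\le 4^{nj}/j!$ together with the H\aa stad/Linial--Mansour--Nisan Fourier‑concentration bound, noting that $(\eps/2^n)^{j/2}(4^{nj}/j!)^{1/2}=(2^n\eps)^{j/2}/\sqrt{j!}$ is super‑exponentially small once $j\gtrsim 2^n/\poly(n)$. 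Summing the $O(1)$ leading levels and the negligible tail gives $|\E_{\mathcal{F}_n'}[C]-\E_{\mathrm{unif}}[C]|\le\poly(n)/\sqrt{2^n}$, which with the truncation coupling proves the theorem. Equivalently, one can package this last step through the CHHL theorem by presenting $\mathcal{F}_n'$ as the output of a polarizing random walk whose step distribution is $(s,\sqrt{\eps'/2^n}\,Ms)$ for $s\sim\{\pm1\}^{2^n}$, in which case only the low‑level step moments above are needed and the high levels are handled automatically.

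I expect the main obstacle to be the bookkeeping over the intermediate Fourier levels $j$: one must bound $|c_{T,S}|$ essentially independently of $n$ --- genuinely exploiting orthogonality of the nontrivial Hadamard characters, as opposed to the lossy $L_2$ bound $|c_{T,S}|\le 2^{n|T|/2}$ --- and then balance this against the level‑$j$ decay of the $\mathsf{AC^0}$ Fourier spectrum. Routing through the CHHL polarizing‑random‑walk machinery is attractive precisely because it reduces the task to the low‑level step moments and absorbs the high‑level contribution on its own; the residual technical point is then to check that the discrete bias rule of \Cref{def:forrelation}, once truncation is removed, actually fits that template. Handling the truncation itself is routine.
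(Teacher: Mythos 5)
Your proposal takes a genuinely different route from the paper's, but it contains a gap that I do not think can be patched without in effect reconstructing the paper's argument.

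First, the structural difference. The paper does \emph{not} expand $C$ level by level against $\mathcal{F}_n$. It proves \Cref{thm:forrelation_AC0_pseudorandom} as a corollary of \Cref{thm:new_BQP_PH_General}, whose proof exhibits $\mathcal{F}_n$ (approximately) as the endpoint of a polarizing random walk with multivariate Gaussian increments, and then applies the Raz--Tal Gaussian pseudorandomness theorem (\Cref{thm:RT}) together with the restriction trick (\Cref{claim:CHHL}) once per step. That machinery needs only the level-$2$ Fourier growth bound $L_{1,2}(C)\le\poly(n)$; the polarization claim, the truncation claim, and the covariance estimate (\Cref{claim:Y,claim:X_polarize,claim:small_step}) entirely replace any need to control higher Fourier levels. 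Your final remark about ``packaging this through CHHL'' is pointing roughly in this direction, but the step distribution the paper uses is Gaussian (not the discrete step $s\sim\{\pm1\}^{2^n}$ you propose), the walk is clipped/truncated in two places, and one must verify those three non-trivial claims about the walk; your sketch does not engage with any of this.

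Second, and more seriously, the central quantitative claim in your main (direct Fourier) route is wrong. You assert that orthogonality of the Hadamard characters forces each individual $|c_{T,S}|$ to be bounded by a quantity depending only on $|T|$, roughly $|T|^{O(|T|)}$. This fails already at $|T|=4$. Take $T=\{x_1,x_2,x_3,x_4\}$ with the $x_i$ distinct and $\chi_{x_1}\chi_{x_2}\chi_{x_3}\chi_{x_4}$ equal to the trivial character (i.e.\ $x_1\oplus x_2\oplus x_3\oplus x_4=0$ as $n$-bit index strings), and $S=\emptyset$. For Rademacher $u$ one has
\[
\E\!\left[u_{y_1}u_{y_2}u_{y_3}u_{y_4}\right]
= \ind\{y_1{=}y_2,\,y_3{=}y_4\} + \ind\{y_1{=}y_3,\,y_2{=}y_4\} + \ind\{y_1{=}y_4,\,y_2{=}y_3\} - 2\ind\{y_1{=}y_2{=}y_3{=}y_4\},
\]
so in $c_{T,\emptyset}=\E_u[\prod_{i=1}^4 (Mu)_{x_i}]$ the three pairing sums vanish by pairwise orthogonality of distinct rows of $M$, but the correction term $\sum_y\prod_{i=1}^4 M_{x_i,y}=2^n$ survives because the four characters multiply to the trivial one. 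One computes $c_{T,\emptyset}=-2\cdot 2^n$, not $O(1)$; more generally $T$ containing $p$ disjoint such XOR-degenerate quadruples gives $|c_{T,\emptyset}|=\Theta(2^{np})$. Pairwise orthogonality of the Hadamard rows does not make them four-wise independent, and the ``orthogonality kills all diagonal contractions'' intuition breaks precisely at degree four.

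It is true that in these examples the prefactor $(\eps/2^n)^{j/2}$ happens to beat the $n$-dependence (for $j=4$ one gets $(\eps/2^n)^2\cdot 2^n=\eps^2/2^n$), but this does not rescue the intermediate-level bookkeeping that your sketch waves past. For $j$ between, say, $\polylog(n)$ and $\eps\cdot 2^n$, the $L_1$-per-level route needs both a correct uniform bound on $|c_{T,S}|$ (which, as above, is not $j^{O(j)}$) and $L_{1,\Theta(j)}(C)\le\poly(n)^{O(j)}$; the Cauchy--Schwarz route fails because the $\mathsf{AC^0}$ tail bound (Tal/LMN) decays only like $2^{-\Omega(j/\polylog(n))}$ at size $2^{\poly(n)}$, which cannot compensate the factor $(\eps\cdot 2^n)^{j/2}/\sqrt{j!}\approx 2^{\Theta(nj)}$ until $j\gtrsim\eps\cdot 2^n$. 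This is exactly the obstruction that made the original level-by-level attempts on Forrelation fail and that the Raz--Tal polarizing-random-walk framework was invented to bypass, by reducing everything to $L_{1,2}$. The paper follows that framework; a correct proof along your lines would have to as well, supplying the Gaussian step construction, the truncation and polarization claims, and the covariance estimates needed to invoke \Cref{thm:RT} at every step.
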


\Cref{thm:forrelation_AC0_pseudorandom} will be a special case of the following theorem.
The special case holds since for any $C\in \mathsf{AC^0}[2^{\poly(n)}, O(1)]$ it holds that $L_{1,2}(C) \le \poly(n)$, as proved in \cite{Tal17}.
\begin{theorem}\label{thm:new_BQP_PH_General}
Let $N= 2^n$. Let $\mathcal{C}$ be a family of $2N$-variate Boolean functions, which is closed under restrictions. 
Assume that for any $C\in \mathcal{C}$ it holds that $L_{1,2}(C)\le t$.
Then, for any $C\in \mathcal{C}$ it holds that 
$$\left|\E_{f,g\sim \mathcal{F}_n}[C(\ttable(f),\ttable(g))] - \E[C]\right| \le O\!\left(\frac{t\cdot \log N}{\sqrt{N}}\right).$$
\end{theorem}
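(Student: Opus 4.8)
The plan is to adapt the polarizing random walk technique of \cite{CHHL18} from its usual role (constructing pseudorandom generators) to the task of analyzing the fixed distribution $\mathcal{F}_n$. First I would recast the statement over the Boolean cube. Write $N = 2^n$, identify $f$ with its truth table $a = \ttable(f) \in \{\pm1\}^N$, and let $\mu(a) \in [-1,1]^N$ be the bias vector $\mu(a)_x = \trnc\mparen{\sqrt{\eps N}\,\hat f(x)}$ from \Cref{def:forrelation}. Letting $\Psi(a,z)$ be the multilinear extension, in the second block of $2N$ variables, of $b \mapsto C(a,b)$, we have $\E_{\mathcal F_n}[C] = \E_a\mbracket{\Psi(a,\mu(a))}$ and $\E[C] = \E_a\mbracket{\Psi(a,0)}$, so it suffices to bound $\abs{\E_a\mbracket{\Psi(a,\mu(a)) - \Psi(a,0)}}$. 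Expanding in Fourier characters, the part linear in the $b$-variables contributes $\tfrac{\sqrt\eps}{\sqrt N}\sum_{x,z}\chi_x(z)\,\hat C(\{z,x\})$, which is at most $\sqrt{\eps/N}\cdot L_{1,2}(C) \le t/\sqrt N$ by the triangle inequality; so the only real issue is controlling all higher $b$-degrees, for which a term-by-term bound is hopeless since $C$ may carry enormous Fourier mass at high levels.

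To control the higher degrees I would realize (a close variant of) $\mathcal F_n$ as the endpoint of a polarizing random walk on $\{\pm1\}^{2N}$ driven by fresh i.i.d.\ signs at each step. The $f$-block coordinates run an ordinary polarizing walk with step parameter $\lambda$, so that after $T = O(\log N/\lambda^2)$ steps the first block is a uniformly random $\ttable(f)$; the $g$-block coordinates receive increments proportional to $\tfrac{\lambda\sqrt\eps}{\sqrt N}\sum_z \chi_x(z)\,s_z$ in the fresh signs $s_z$ of that step, so they converge to the bias vector $\sqrt{\eps N}\,\hat f$, and sampling the $g$-block from those biases at the end reproduces $\mathcal F_n$ up to the truncation. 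Telescoping the multilinear extension along the walk gives $\E_{\mathcal F_n}[C] - \E[C] = \sum_{t=1}^T \E\mbracket{\widetilde C(X_t) - \widetilde C(X_{t-1})}$, and the standard manipulation rewrites the $t$-th term as $\E_{X_{t-1}}\mbracket{\E_{Y_t}\mbracket{\widetilde{D_t}(Y_t)} - \widetilde{D_t}(0)}$, where $Y_t$ is the raw (unclamped) increment and $D_t = \E_\rho\mbracket{C|_\rho}$ is the average of $C$ over the random restriction $\rho$ encoded by the current state $X_{t-1}$. Here the hypotheses enter decisively: since $\mathcal C$ is closed under restrictions, every $C|_\rho \in \mathcal C$, so $L_{1,2}(D_t) \le \E_\rho\mbracket{L_{1,2}(C|_\rho)} \le t$, and the same bound survives all further restrictions of $D_t$.

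It then remains to bound $\abs{\sum_{R \ne \emptyset}\widehat{D_t}(R)\,\E_{Y_t}\mbracket{\prod_{i \in R}Y_{t,i}}}$. The Forrelation increment is a linear form in the fresh signs, so every odd moment $\E\mbracket{\prod_{i\in R}Y_{t,i}}$ vanishes; at level $2$ the covariance vanishes for $i,j$ in the same block and equals $\tfrac{\lambda^2\sqrt\eps}{\sqrt N}\chi_x(y)$ for a mixed pair $(y,x)$, so the level-$2$ contribution is at most $\tfrac{\lambda^2\sqrt\eps}{\sqrt N}L_{1,2}(D_t) \le \tfrac{\lambda^2\sqrt\eps}{\sqrt N}t$; and at level $2k$ the moment is a $\pm1$ permanent over matchings of $R$ into mixed pairs, bounded by $k!$ times $\mparen{\tfrac{\lambda^2\sqrt\eps}{\sqrt N}}^k$, which against $L_{1,2k}(D_t)\le \binom{2N}{2k-2}t$ (from iterating the restriction argument) gives a geometric-type series in $k$ dominated by the level-$2$ term once $\lambda$ is taken small enough. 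Summing over the $T = O(\log N/\lambda^2)$ steps the $\lambda$-dependence cancels and one is left with $O\mparen{\sqrt\eps \log N \cdot t/\sqrt N} = O\mparen{t\log N/\sqrt N}$, using $\eps = 1/(100n)$ and $n = \log N$. Finally the truncation is handled separately: $\sqrt{\eps N}\,\hat f(x)$ exceeds $1$ only for the rare $x$ with $\abs{\hat f(x)} > 1/\sqrt{\eps N}$ — a $\negl(N)$ fraction by subgaussian tail bounds as in the proof of \Cref{lem:fourier-concentration} — so their effect on $\E[C]$ is $\negl(N)$.

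I expect the main obstacle to be the interaction of the higher-level terms with the \emph{globality} of the Forrelation correlation and with the truncation: because the $g$-biases depend on all of $\hat f$ rather than locally on $f$, the increments do not factor through a restriction in the naive way, so the joint walk and the restriction $\rho$ associated to $X_{t-1}$ must be set up carefully for the identity $D_t = \E_\rho[C|_\rho]$ — and hence $L_{1,2}(D_t)\le t$ — to hold exactly; and the level-$\ge 4$ contributions must be bounded using only the boundedness $\abs{\widetilde{\partial_R C}}\le 1$ of iterated derivatives together with the restriction bound on $L_{1,2k}$, since no $L_{1,k}$ hypothesis is available for $k\ge 3$. A cleaner abstract route — an isolated ``fractional hybrid lemma'' stating that for any restriction-closed class with $L_{1,2}\le t$, moving a bias vector $z$ to a nearby $z'$ costs $O(t\cdot\lVert z-z'\rVert)$, then applied along a coordinate-by-coordinate reveal of $a$ — would be worth attempting, but I suspect it runs into the same level-$\ge 4$ difficulty, so the random-walk bookkeeping is probably unavoidable.
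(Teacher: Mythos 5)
Your high-level strategy — a polarizing random walk, the restriction-closure identity $D_t = \E_\rho[C|_\rho]$ so that $L_{1,2}(D_t)\le t$, and a per-step bound that telescopes over $O(\log N / \text{step-var})$ steps — is exactly the paper's approach, and your level-$2$ accounting (cross-covariance of order $\sqrt\eps/\sqrt N$ between an $f$-coordinate and a $g$-coordinate, giving a per-step contribution $O(t\cdot\text{step-var}\cdot\sqrt\eps/\sqrt N)$, with the step variance cancelling against the number of steps) is essentially right. The paper's walk uses Gaussian increments $X^{(i)}\sim N(0,\eps)^N$ with the $g$-block increment $\sqrt{\eps}\,H(D\odot X^{(i)})$ (so both blocks share randomness, exactly as in your setup), and runs for $m = 200\ln(N)/\eps$ steps; it also handles the truncation by explicitly proving (\Cref{claim:Y}, \Cref{claim:X_polarize}) that with high probability the truncations in the walk never activate, and only at the very end is the final iterate $X^{(\le m)}$ rounded to $\pm1$, using \Cref{claim:close_means_close} to control the rounding error.

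The genuine gap is in how you propose to control levels $\ge 4$. Your plan rests on the claim that ``$L_{1,2k}(D_t)\le\binom{2N}{2k-2}t$ from iterating the restriction argument.'' That claim does not follow from the hypotheses of the theorem. Closure under restrictions plus $L_{1,2}\le t$ gives you control over the \emph{second} discrete derivatives of the multilinear extension at any point — that is what lets you say $L_{1,2}(D_t)\le t$ for the averaged-over-restrictions function $D_t$ — but it gives no direct control over $\sum_{|R|=2k}|\hat{D_t}(R)|$ for $k\ge 2$. The only unconditional bound from Parseval is $L_{1,2k}(D_t)\le\sqrt{\binom{2N}{2k}}$, which is far too weak, and the claimed bound $\binom{2N}{2k-2}t$ would require higher-level mass to be controlled by second-level mass, which is false in general (the known $L_{1,\ell}$ bounds for $\mathsf{AC^0}$, e.g.\ from \cite{Tal17}, use the switching lemma and grow like $\polylog(N)^\ell$, not like a product of a binomial and $L_{1,2}$; and the theorem you are trying to prove is stated for an arbitrary restriction-closed class, not $\mathsf{AC^0}$). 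Without a valid $L_{1,2k}$ bound, your $k$-series does not telescope into a geometric series dominated by $k=1$, and the argument does not close.

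The paper sidesteps this entirely by invoking \Cref{thm:RT} (the Raz--Tal/CHLT theorem) as a black box: for any zero-mean multivariate Gaussian $Z$ with per-coordinate variance $\le 1/(8\ln(n/\delta))$ and pairwise covariances $\le\delta$, and any restriction-closed class with $L_{1,2}\le t$, one has $|\E[C(\trnc(Z))] - C(\vec 0)|\le O(\delta t)$. This statement bounds all levels at once, using only the $L_{1,2}$ hypothesis and restriction closure, precisely because its proof does not proceed level-by-level and does not need an $L_{1,2k}$ bound. This is essentially the ``fractional hybrid lemma'' you floated in your last paragraph and then dismissed as running into ``the same level-$\ge 4$ difficulty'': in fact it is the known resolution of that difficulty, and is why the paper takes Gaussian increments (so that \Cref{thm:RT} applies verbatim) rather than Rademacher signs. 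So your instinct toward the abstract lemma was the right one; the missing piece is to recognize that \Cref{thm:RT} \emph{is} that lemma and to cite it rather than re-derive moment bounds by hand. A secondary point: your Rademacher-driven walk would also require a Rademacher analog of Isserlis's theorem with the $s_z^2 = 1$ correction terms handled, and a Rademacher version of \Cref{thm:RT}; Gaussians make both issues disappear.
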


\begin{proof}
We show how to obtain the distribution $\mathcal{F}_n$ approximately as a result of a random walk, taking $\polylog(N)$ steps, where each step is a multi-variate Gaussian.
\begin{enumerate}
\item Let $m = 200 \ln(N)/\eps$.
\item Let $X^{(\le 0)} = \vec{0}$, $Y^{(\le 0)} = \vec{0}$.
\item For $i = 1, \ldots, m$:
 \begin{align*}\text{Let~}&X^{(i)}\sim N(0,\eps)^{N}\\
 \text{Let~}& D =(1-|X^{(\le i-1)}|)\\
 &X^{(\le i)} = X^{(\le i-1)} + D \odot \trnc(X^{(i)})\\
  &Y^{(\le i)} =  \trnc(Y^{(\le i-1)} + \trnc_{1/2}(\sqrt{\eps} H D \odot X^{(i)}))\end{align*}
  \item Output $(X^{(\le m)}, Y^{(\le m)})$.
\end{enumerate}
We observe that by definition, the coordinates of $X^{(\le i)}$ and $Y^{(\le i)}$ are bounded in $[-1,1]$, and the coordinates of $X^{(\le i)}$ are independent. We further make the following three claims/observations.
	
The first claim should be interpreted as ``with high probability, truncations are irrelevant''.
\begin{claim}\label{claim:Y} 
With probability at least $1-m/N^{10}$, for all $i\in [m]$\\ 
$$Y^{(\le i)} = \sqrt{\eps} H X^{(\le i)} \text{~~~and~~~}  Y^{(\le i)} \in [-1/2,1/2]^{N}.$$
\end{claim}

The second claim should be interpreted as ``$X^{(\le i)}$ polarizes'', i.e., coordinates get closer to $\pm1$.
\begin{claim}\label{claim:X_polarize}
With probability at least $1-1/N^2$, we have 
$$|X^{(\le m)}| \in [1-1/N^3, 1]^N.$$
\end{claim}
	
The third claim should be interpreted as ``with high probability, the change under the $i$-th step is small (with respect to $C$)''.
\begin{claim}\label{claim:small_step}
For any $i\in [m]$, with probability at least $1-2/N^2$ over $(X^{(\le i-1)}, Y^{(\le i-1)})$ 
(i.e., the history before step $i$), 
the $i$-th step size satisfies	
$$\E_{X^{(i)}}\mbracket{C\mparen{X^{(\le i)},Y^{(\le i)}}-C\mparen{X^{(\le i-1)}, Y^{(\le i-1)}}} \le O\mparen{t\eps /\sqrt{N}}.$$
\end{claim}
We defer the proof of the claims to \Cref{sec:proof_of_claims}. We show how to complete the proof given the three claims. 
Let $\mathcal{E}$ be the event that:
\begin{align}
&Y^{(\le m)} = \sqrt{\eps} H X^{(\le m)} \label{cond:Y}\\
&Y^{(\le m)} \in [-1/2,1/2]^N \label{cond:Y2}\\
&\abs{X^{(\le m)}} \in [1-1/N^3, 1]^N \label{cond:X}\\
\forall i\in [m]:\;&  \left(X^{(\le i-1)}, Y^{(\le i-1)}\right)\text{~satisfies~}\nonumber\\
&\E_{X^{(i)}}\mbracket{C\mparen{X^{(\le i)},Y^{(\le i)}}-C\mparen{X^{(\le i-1)}, Y^{(\le i-1)}}} \le O\mparen{t\eps/\sqrt{N}}.
\label{cond:small_step}
\end{align}
Let $\delta := \Pr[\neg \mathcal{E}]$ which by the three claims is at most $1/N$ for sufficiently large $N$.
For each $i\in [m]$, we have 
$$\left|\E[C(X^{(\le i)}, Y^{(\le i)}) \mid \mathcal{E}] - \E[C(X^{(\le i-1)}, Y^{(\le i-1)})\mid \mathcal{E}]\right| \le O\mparen{\delta + t\eps/\sqrt{N}}\le O\mparen{t\eps/\sqrt{N}},$$
since conditioned on $\mathcal{E}$, the $(i-1)$-th history definitely satisfies Condition~\eqref{cond:small_step}, but the conditioning might change the distribution of $X^{(i)}$ by up to $\delta$ total-variation distance, and we need to compensate for that.
We get that 
$$\left|\E[C(X^{(\le m)}, Y^{(\le m)}) \mid \mathcal{E}] - \E[C]\right| \le O\mparen{mt\eps/\sqrt{N}}.$$
From Condition~\eqref{cond:X}, we see that $X' := \sgn(X^{\le m})$ is $1/N^3$-close to $X^{(\le m)}$. By Condition~\eqref{cond:Y} we see that $Y^{(\le m)} = \sqrt{\eps} H X^{(\le m)}$ and thus 
$$Y' := \sqrt{\eps} H X' = \sqrt{\eps} H X^{(\le m)} + \sqrt{\eps} H (X' - X^{(\le m)}) = Y^{(\le m)} + err$$ where each coordinate of $err$ is at most $\sqrt{\eps} \sqrt{N}/N^3 \le 1/N^2$ in absolute value.
By Condition~\eqref{cond:Y2}, we get that $Y' \in [-1,1]^N$.
We apply the next claim to get 
\[\left|\E[C(X',Y') | \mathcal{E}] - \E[C(X^{(\le m)}, Y^{(\le m)}) | \mathcal{E}]\right| \le 2/N.\]
\begin{fact}[\protect{Folklore, See for example \cite[Lemma~2.7]{CHLT}}]\label{claim:close_means_close}
Let $C:[-1,1]^{2N} \to [-1,1]$ be a multi-linear function. Then for every $z, z'\in [-1,1]^{2N}$ we have $|C(z)-C(z')|\le 2N\cdot \|z-z'\|_{\infty}$.
\end{fact}
Then, triangle inequality gives
$$\abs{\E[C(X',Y') | \mathcal{E}] - \E[C]} \le 2/N + O\mparen{mt\eps/\sqrt{N}} \le O\mparen{mt\eps/\sqrt{N}},$$
and since $(X', Y')|\mathcal{E}$ is $\delta$-close in statistical distance to the distribution $(X', \trnc(\sqrt{\eps} H X'))$ we get
$$|\E[C(X',  \trnc(\sqrt{\eps} H X'))] - \E[C]| \le O\mparen{mt\eps/\sqrt{N}} + \delta \le O\mparen{mt\eps/\sqrt{N}} = O\mparen{t\log(N)/\sqrt{N}}.$$
Finally, we observe that $X'$ is the uniform distribution over $\{-1,1\}^N$ and the expectation of any multilinear polynomial under $(X', \trnc(\sqrt{\eps} H X'))$ is the same as that under the Forrelation distribution $\mathcal{F}_n$. \comment{W: is this sufficiently clear?}
\end{proof}

\subsection{Proofs of the Three Claims}\label{sec:proof_of_claims}
We will rely on the following theorem from~\cite{RT} and the following lemma from~\cite{CHHL18}.
\begin{theorem}[\protect{\cite{RT}, as restated in \cite[Theorem 9]{CHLT}}]\label{thm:RT} Let $n, t\ge 1$, $\delta \in (0,1)$.
Let $Z \in \Reals^n$ be a zero-mean multivariate Gaussian random variable with the following two properties:
\begin{enumerate}
\item For $i \in [n]$: $\var[Z_i] \le \frac1{8\ln(n/\delta)}$.
\item	For $i,j \in [n], i\neq j: |\cov[Z_i,Z_j]| \le \delta$.
\end{enumerate}
Let $\mathcal{C}$ be a family of $n$-variate Boolean functions, which is closed under restrictions. 
Assume that $L_{1,2}(\mathcal{C})\le t$. 
Then, for any $C\in \mathcal{C}$ it holds that 
$\left|\E\mbracket{C(\trnc(Z))} - C(\vec{0})\right| \le O(\delta \cdot t)$.
\end{theorem}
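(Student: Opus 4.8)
The plan is a hybrid argument that decouples the coordinates of $Z$ one correlation at a time, reducing the entire estimate to the second Fourier level of restrictions of $C$, which is exactly what the hypothesis $L_{1,2}(\mathcal{C})\le t$ controls. Write $\Sigma$ for the covariance matrix of $Z$, let $D=\diag(\Sigma)$ and $E=\Sigma-D$ be its off-diagonal part (so every entry of $E$ has absolute value at most $\delta$), and for $\rho\in[0,1]$ set $\Sigma_\rho:=(1-\rho)D+\rho\Sigma=D+\rho E$. As a convex combination of the PSD matrices $D$ and $\Sigma$, each $\Sigma_\rho$ is PSD, so $Z_\rho\sim N(\vec{0},\Sigma_\rho)$ is well defined, with $Z_1=Z$ and $Z_0$ having independent coordinates. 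Since $\trnc$ is odd, $\E[\trnc((Z_0)_i)]=0$ for each $i$, and independence gives $\E[C(\trnc(Z_0))]=\sum_S\hat{C}(S)\prod_{i\in S}\E[\trnc((Z_0)_i)]=\hat{C}(\emptyset)=C(\vec{0})$. So it suffices to bound $\E[C(\trnc(Z_1))]-\E[C(\trnc(Z_0))]$.

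Next I would differentiate along the interpolation by Gaussian integration by parts (Price's theorem). Since $C\circ\trnc$ is only Lipschitz, I would first apply Price's theorem to a smooth mollification of $\trnc$ and then pass to the limit (dominated convergence, using that $C$ and its partials are bounded on $[-1,1]^n$), arriving at
\[
\E[C(\trnc(Z))]-C(\vec{0})=\int_0^1\sum_{i<j}E_{ij}\,\E\bigl[(\partial_i\partial_j C)(\trnc(Z_\rho))\cdot\ind\{|(Z_\rho)_i|\le 1\}\ind\{|(Z_\rho)_j|\le 1\}\bigr]\,d\rho,
\]
using $\partial_j(C\circ\trnc)(z)=(\partial_j C)(\trnc(z))\cdot\ind\{|z_j|\le 1\}$ almost everywhere. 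Bounding $|E_{ij}|\le\delta$ and dropping the indicators, the whole problem reduces to showing that $\sum_{i<j}\E\bigl[|(\partial_i\partial_j C)(\trnc(Z_\rho))|\bigr]=O(t)$ uniformly in $\rho\in[0,1]$.

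To prove this reduced estimate, expand $(\partial_i\partial_j C)(y)=\hat{C}(\{i,j\})+\sum_{T\supseteq\{i,j\},\,|T|\ge 3}\hat{C}(T)\prod_{k\in T\setminus\{i,j\}}y_k$. Summing the leading terms over all pairs gives exactly $L_{1,2}(C)\le t$, which is already the right order. For the higher-order tail the other two hypotheses enter: the variance bound $\var[(Z_\rho)_i]=\Sigma_{ii}\le\frac1{8\ln(n/\delta)}$ forces each coordinate of $\trnc(Z_\rho)$ to have tiny moments, $\E[|\trnc((Z_\rho)_k)|^m]\le(\sigma\sqrt m)^m$ with $\sigma^2\le\frac1{8\ln(n/\delta)}$, and by a union bound makes the event that any coordinate is actually truncated occur with probability $(\delta/n)^{\Omega(1)}$; and closure of $\mathcal{C}$ under restrictions upgrades the level-$2$ bound into decay of the higher levels $L_{1,k}(C)$ via Tal-style level-$k$ inequalities. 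Interchanging the sums over pairs and over $T$, estimating $\E[\prod_{k\in T\setminus\{i,j\}}|\trnc((Z_\rho)_k)|]$ by H\"older, and feeding in these two inputs produces a series over levels $k\ge 2$ that one argues sums to $O(t)$.

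I expect this last balancing act to be the main obstacle: one must play the growth of the $k$-th Gaussian moments (order $k^{k/2}\sigma^k$) against the decay of the higher-level $L_1$-weights of restriction-closed $\mathsf{AC^0}$-type families, and it is the precise choice of the variance threshold $\frac1{8\ln(n/\delta)}$ that makes these two effects cancel so the series converges — this is exactly the technical heart of the Raz--Tal argument, reorganized as \cite[Theorem 9]{CHLT}. Granting the reduced estimate, the displayed identity gives $|\E[C(\trnc(Z))]-C(\vec{0})|\le\int_0^1\delta\cdot O(t)\,d\rho=O(\delta t)$, as claimed.
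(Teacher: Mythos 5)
First, note that the paper does not prove \Cref{thm:RT} at all: it is imported verbatim from Raz--Tal as restated in CHLT, so the only meaningful comparison is with the proof in those references. Your opening moves are fine and even elegant: the interpolation $\Sigma_\rho=(1-\rho)D+\rho\Sigma$ (realizable as $Z_\rho=\sqrt{\rho}\,X+\sqrt{1-\rho}\,Y$ with independent $X\sim N(0,\Sigma)$, $Y\sim N(0,D)$), the observation that $\E[C(\trnc(Z_0))]=\hat{C}(\emptyset)=C(\vec{0})$ by oddness of $\trnc$ and independence, and the Gaussian-interpolation (Price) identity after mollification are all correct and can be made rigorous.

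The genuine gap is exactly where you flag it, and it is not a technicality you can expect to ``balance'': after pulling absolute values inside, you need $\sum_{i<j}\E\bigl[\,\bigl|(\partial_i\partial_j C)(\trnc(Z_\rho))\bigr|\,\bigr]=O(t)$ uniformly in $\rho$, and nothing in the hypotheses supports the series argument you sketch. The only assumptions are $L_{1,2}(\mathcal{C})\le t$ and closure under restrictions; ``Tal-style level-$k$ inequalities'' are a property of $\mathsf{AC^0}$ proved via switching lemmas, not a consequence of a level-$2$ bound for a restriction-closed family, and the whole point of the CHLT restatement is to get by without them. Moreover, even granting optimistic bounds of the form $L_{1,\ell}(C)\le (c\sqrt{t\ell})^{\ell}$, your level-$\ell$ term is roughly $L_{1,\ell}(C)\,\ell^2\,\sigma^{\ell-2}$ with $\sigma^2\le \frac{1}{8\ln(n/\delta)}$; the damping $\sigma^{\ell}$ only beats the moment growth for $\ell\lesssim \ln(n/\delta)$, while levels run up to $n$ (and the generic bound $L_{1,\ell}\le\binom{n}{\ell}^{1/2}$ makes the tail explode). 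The rarity of the truncation event does not help with this, since the problem is the expectation of $|\partial_i\partial_j C|$ at a point whose coordinates have standard deviation $\approx 1/\sqrt{\ln(n/\delta)}$, which is small but not polynomially small. The actual Raz--Tal/CHLT proof avoids ever needing higher-level control of $C$ itself: it splits the Gaussian into $m$ i.i.d.\ increments of covariance $\Sigma/m$ with $m=\poly(n)$ large, uses the restriction claim (the same \Cref{claim:CHHL} this paper invokes in \Cref{claim:small_step}) to re-center each micro-step at $\vec{0}$, so that degree-$1$ terms vanish in expectation, degree-$2$ terms contribute $O(t\delta/m)$ per step via $L_{1,2}$ of a \emph{restricted} function, and degree-$\ge 3$ terms are killed by the $m^{-\ell/2}$ scaling even under trivial Fourier bounds, then sums over the $m$ steps. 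Your one-shot interpolation forfeits precisely this small parameter, so the entire difficulty of the theorem has been relocated into the unproven reduced estimate; as it stands the proposal does not constitute a proof.
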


\begin{claim}[\protect{\cite[Claim 3.3]{CHHL18}]}] \label{claim:CHHL}
	Let $f$ be a multilinear function on $\Reals^n$ and 
	$v\in (-1, 1)^n$. Let $\delta \in [0,1]^n$ with $\delta_i \le 1-|v_i|$. Then, there exists a distribution over random restrictions $\rho$ such that for any $z\in \Reals^n$,
	$$
	f(v+\delta \odot z)-f(v) = \E_{\rho}[f_{\rho}(z) - f_{\rho}(\vec{0})].$$
\end{claim}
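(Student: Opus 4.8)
The plan is to exhibit the promised distribution over restrictions $\rho$ explicitly, in a coordinate-wise independent fashion, and then verify the stated identity by a short computation that only uses multilinearity of $f$ (i.e.\ that $f$ is affine in each coordinate).

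First I would define the restriction. Sample $\rho = (\rho_1, \ldots, \rho_n)$ with the coordinates independent: for each $i$, leave coordinate $i$ free (set $\rho_i = {*}$) with probability $\delta_i$, and otherwise, with probability $1 - \delta_i$, fix coordinate $i$ to a bit $p_i \in \{-1, +1\}$ with $\E[p_i] = v_i/(1 - \delta_i)$. The hypothesis $\delta_i \le 1 - |v_i|$ is exactly what makes this legal, since it is equivalent to $\bigl|v_i/(1-\delta_i)\bigr| \le 1$; the degenerate cases $\delta_i = 0$ (so $p_i$ is just a $v_i$-biased bit) and $\delta_i = 1$ (so the hypothesis forces $v_i = 0$ and the coordinate is always free) cause no problem. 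Note that this distribution depends only on $v$ and $\delta$, not on $f$, as the statement requires.

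Next I would verify the identity. Since the distribution of $\rho$ is now fixed and both sides of the claimed equation are linear in $f$ (via its multilinear expansion $f = \sum_{S} \hat f(S) \prod_{i \in S} x_i$), it is enough to check the identity for a single monomial $\chi_S(x) = \prod_{i \in S} x_i$. For such an $f$ we have $(\chi_S)_\rho(z) = \prod_{i \in S} w_i$, where $w_i = z_i$ if $\rho_i = {*}$ and $w_i = p_i$ otherwise; the $w_i$ are independent, so
\[
\E_\rho\bigl[(\chi_S)_\rho(z)\bigr] = \prod_{i \in S}\Bigl(\delta_i z_i + (1 - \delta_i)\cdot \tfrac{v_i}{1 - \delta_i}\Bigr) = \prod_{i \in S}(v_i + \delta_i z_i) = \chi_S(v + \delta \odot z).
\]
Taking $z = \vec{0}$ gives $\E_\rho[(\chi_S)_\rho(\vec{0})] = \prod_{i \in S} v_i = \chi_S(v)$, and subtracting proves the identity for $\chi_S$; summing against the coefficients $\hat f(S)$ then extends it to arbitrary multilinear $f$.

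I do not expect a real obstacle: this is a folklore fact, and once the right restriction is written down the verification is essentially a one-line computation. The only point requiring care is the well-definedness of the biased bits $p_i$ — which is precisely where the constraint $\delta_i \le 1 - |v_i|$ enters — together with a remark on the degenerate values $\delta_i \in \{0,1\}$. (One could equally compute $\E_\rho[f_\rho(z)]$ directly by pushing the expectation through the Fourier expansion of $f$, but this amounts to the same calculation.)
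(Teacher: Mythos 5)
The paper does not prove this claim; it cites~\cite[Claim~3.3]{CHHL18} and remarks only that the stated form is ``implicit in their proof.'' Your construction---leave coordinate $i$ free with probability $\delta_i$, otherwise fix it to a $\pm 1$ bit with conditional mean $v_i/(1-\delta_i)$---is exactly the restriction used in CHHL's argument, and your monomial-by-monomial verification via $\E[w_i] = \delta_i z_i + v_i$ is the right (and essentially the only) computation. This is correct, and it makes explicit what the paper leaves as a pointer to CHHL.

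One small stylistic remark: it is worth stating up front that $f_\rho(z)$ is to be read as $f$ evaluated at the point whose $i$th coordinate is $z_i$ when $\rho_i = {*}$ and $p_i$ otherwise (so that $f_\rho(\vec 0)$ is the value on the all-fixed-or-zero point), since strictly speaking a restricted function is a function of only the live variables. You implicitly adopt this convention, which matches the way the claim is invoked later in the proof of Claim~\ref{claim:small_step}, but flagging it avoids a moment of type confusion for the reader.
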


We remark that the statement of \cite[Claim 3.3]{CHHL18} as stated in their paper is slightly different from the above, but the above claim is implicit in their proof. 
\comment{L: It is very hard for me to see why this follows from CHHL Claim 3.3... The claim statements seem different?} \comment{~M: It's what the proof of Claim 3.3 gives you. $F$ in that proof is the distribution over $f_\rho$ defined here. L: It would be great if we could clarify this point in the text for readers unfamiliar with CHHL like myself :)}\comment{M: How about this? L: Is it easy to reprove this, maybe in the appendix, just for completeness? It seems like the proof of Claim 3.3 is short anyways. Also how is ``random restrictions'' defined? Isn't ``distribution'' already implying some randomness? Is it necessary to say a ``distribution over random restrictions''?}

We go on to prove the three claims. We start with the proof of \Cref{claim:small_step} as it is the hardest.
\begin{proof}[Proof of \Cref{claim:small_step}]
Fix a history $x^{(\le i-1)}, y^{(\le i-1)}$ and assume that $y^{(\le i-1)} \in [-1/2,1/2]^{N}$, an event which happens with probability at least $1-1/N^2$.
Let $d = (1-|x^{(\le i-1}|)$.
By definition, 
$$Y^{(\le i)} = \trnc\mparen{y^{(\le i-1)} + \trnc_{1/2}(\sqrt{\eps} H (d \odot X^{(i)})},$$ 
and by our assumption on $y^{(\le i-1)}$ 
we get that we can get rid of the outer $\trnc$.
Furthermore we observe that $\trnc_{1/2}(x) = \frac{1}{2}\trnc(2x)$ 
and we can thus simplify further to
$$Y^{(\le i)} = y^{(\le i-1)} + \frac{1}{2}\cdot \trnc\mparen{2\sqrt{\eps} H (d \odot X^{(i)})}.$$ 
We plug this in and apply \Cref{claim:CHHL}, to get that LHS of \Cref{claim:small_step}
\begin{align*}
&\equad \E_{X^{(i)}}\mbracket{C\mparen{X^{(\le i)},Y^{(\le i)}}-C\mparen{x^{(\le i-1)}, y^{(\le i-1)}}} \\
&= \E_{X^{(i)}}\mbracket{C\!\left(
x^{(\le i-1)} + d \odot \trnc\mparen{X^{(i)}}, \;
y^{(\le i-1)} + \frac{1}{2}\cdot \trnc\mparen{2\sqrt{\eps} H (d \odot X^{(i)})}\right) 
- C\!\left(x^{\le (i-1)}, y^{\le (i-1)}\right)} \\
&= \E_{X^{(i)},\rho}\mbracket{C_{\rho}\mparen{\trnc\mparen{X^{(i)}}, \trnc\mparen{2\sqrt{\eps} H (d \odot X^{(i)})}} - C_{\rho}(0)}
\end{align*}
for some distribution over random restrictions $\rho$.
To apply \Cref{thm:RT}, it remains is to bound the variances and co-variances of the coordinates in $\mparen{X^{(i)}, 2\sqrt{\eps} H (d \odot X^{(i)})}$.
We see that:
\begin{align*}
\forall{j}:\;
&\var\mparen{X^{(i)}_j} = \eps\\
\forall{j}:\;
&\var\mparen{(2\sqrt{\eps} H (d \odot X^{(i)}))_j} = 4\eps \cdot \eps \sum_{\ell}H_{j,\ell}^2 \cdot d_{\ell}^2  \le 4\eps^2\\
\forall{j\neq k}:\;
&\cov\mparen{X^{(i)}_j, X^{(i)}_k} = 0\\
\forall{j, k}:\;
&\abs{\cov\mparen{2\sqrt{\eps} H (d \odot X^{(i)}))_j, X^{(i)}_k}} = \abs{2\sqrt{\eps} \cdot H_{j,k} \cdot d_k \cdot \eps} \le  2\eps^{3/2}/\sqrt{N}\\
\forall{j \neq k}:\;&
\abs{\cov\mparen{(2\sqrt{\eps} H (d \odot X^{(i)}))_j, (2\sqrt{\eps} H (d \odot X^{(i)}))_k}} =4 \eps^2 \cdot \left|\sum_{\ell} H_{j, \ell} H_{k, \ell} d_{\ell}^2\right|
\end{align*}
To bound the last term, we consider the history $X^{(\le i-1)}$ as a random variable.
We denote by $D = (1-|X^{(\le i-1)}|)$ and show that with high probability over the history, $4 \eps^2 \cdot \left|\sum_{\ell} H_{j, \ell} H_{k, \ell} D_{\ell}^2\right|$ is small.
Observe that since the rows of $H$ are orthogonal, and all entries are $\pm 1/\sqrt{N}$, then there are exactly $N/2$ indices $\ell$ such that $H_{j, \ell} H_{k,\ell} = 1/N$ and exactly $N/2$ indices $\ell$ such that $H_{j, \ell} H_{k,\ell} = -1/N$. Thus, we can pair each positive index $\ell$ with a negative index $\ell'$. For each such pair, $(\ell, \ell')$, the expectation of  $\frac{1}{N}(D_{\ell}^2 - D_{\ell'}^2)$ is $0$ as the coordinates of $X^{(\le i-1)}$ are i.i.d. Furthermore, $\frac{1}{N}(D_{\ell}^2 - D_{\ell'}^2)$ is bounded in $[-1/N,1/N]$. 
By Hoeffding's inequality (\Cref{fact:hoeffding}), we get that for any $\eta$,
$$
\Pr\mbracket{\left|\sum_{\ell} H_{j, \ell} H_{k, \ell} D_{\ell}^2\right| \ge \eta} \le 2\exp(-\eta^2 N).
$$
By taking $\eta := 1/\sqrt{\eps N}$, we get that this probability is smaller than $2\exp(-1/\eps) = 2/N^{100}$.
To summarize, we see that with probability at least $1-2/N^2$, the history satisfies $Y^{(\le i-1)} \in [-1/2,1/2]^{N}$ and
$\left|\sum_{\ell} H_{j, \ell} H_{k, \ell} D_{\ell}^2\right| < 1/\sqrt{\eps N}$, making the co-variances of $(X^{(i)}, 2\sqrt{\eps} H (d \odot X^{(i)}))$ smaller than $\delta := \eps/\sqrt{N}$ and the variances smaller than $\eps \le 1/(8\ln(N/\delta))$, 
and hence \Cref{thm:RT} gives 
\[
\forall{\rho}: \E_{X^{(i)}}\mbracket{C_{\rho}\mparen{\trnc(X^{(i)}), \trnc(2\sqrt{\eps} H (d \odot X^{(i)}))} - C_{\rho}(0)} \le O(t \cdot \delta) = O(t \eps/\sqrt{N})\;.
\qedhere\]
\end{proof}

\begin{proof}[Proof of \Cref{claim:Y}]

For $i = 1, \ldots, m$, let $\mathcal{E}_i$ be the event that $Y^{(\le i)}$ satisfy the conditions of the claim, i.e., $Y^{(\le i)} = \sqrt{\eps} H X^{(\le i)}$ and 
	$Y^{(\le i)} \in [-1/2,1/2]^{N}$.
	
	We prove by induction on $i$ that $\Pr[\mathcal{E}_1 , \dots , \mathcal{E}_i] \ge 1-i/N^{10}$.
	The claim surely holds for $i=0$.
	For $i\ge 1$, 
		conditioned on $\mathcal{E}_{1}, \ldots, \mathcal{E}_{i-1}$, $X^{(i)}$ is still completely random, and by the concentration of multi-variate Gaussians, the $i$-th step size is small with high probability. 
		More precisely, $\sqrt{\eps} H (D \odot X^{(i)})$ gives $N$ independent Gaussians with zero-mean and variance $\le \eps^2$. They are all in the range $[-1/2, 1/2]$ with a probability of at least 
		$1-2N\cdot \exp(-\Omega(1/\eps^2))\ge 1-\negl(N)$.
		Furthermore $X^{(i)} \in [-1,1]^N$ with  with probability at least $1-2N \exp(-1/2\eps) \ge 1-1/N^{40}$.
		Let $\mathcal{E}'_{i}$ be the event that all coordinates of $\sqrt{\eps} H (D \odot X^{(i)})$ are between $-1/2$ and $1/2$ and all coordinates of $X^{(i)}$ are between $-1$ and $1$.
		Under this event, we get that the truncations do nothing, and we have 
		\begin{align*}
		X^{(\le i)} &= X^{(\le i-1)} + D \odot X^{(i)}\\
		Y^{(\le i)} &= Y^{(\le i-1)} + \sqrt{\eps} H (D \odot X^{(i)}).
		\end{align*}
Thus, under $\mathcal{E}_{1}, \ldots, \mathcal{E}_{i-1}, \mathcal{E}'_{i}$, we have
		$$Y^{(\le i)} = \sqrt{\eps} H X^{(\le i)}$$
		which concludes the proof of the first property.
		
		As for the second property, observe that without any conditioning, $X^{(\le i)}$ is a collection of $N$ i.i.d. zero-mean bounded random variables in $[-1,1]$.
		This means that we can apply Hoeffding's inequality (\Cref{fact:hoeffding}) to conclude that 
		$$\forall{j\in [N]}:\;\; \Pr[|(\sqrt{\eps} H X^{(\le i)})_j| \ge 1/2] \le 2\cdot \exp(-1/(8\eps)).$$
		Finally, we observe that if both $Y^{(\le i)} = \sqrt{\eps} H X^{(\le i)}$ and $(\sqrt{\eps} H X^{(\le i)}) \in [-1/2, 1/2]^N$ happen, then the event $\mathcal{E}_i$ happens.
		Taking a union bound over the bad events, we get that 
		$$
		\Pr[\neg \mathcal{E}_i \;|\; \mathcal{E}_1 , \dots  , \mathcal{E}_{i-1}] \le \negl(N) + 1/N^{40} + 2N \cdot \exp(-1/(8\eps)) \le 1/N^{10}\;,
		$$
		which, in turn, implies that 
		$\Pr[\mathcal{E}_1, \ldots, \mathcal{E}_i] \ge 1-i/N^{10}$.
					\end{proof}

To prove \Cref{claim:X_polarize} we rely on the following Claim from~\cite{CHHL18}.
\begin{claim}[\protect{\cite[Claim~3.5]{CHHL18}}]\label{claim:CHHL_polarizing}
Let $A_1, \ldots, A_m \in [-1,1]$ be independent symmetric random variables with $\E[A_{i}^2] \ge p$. For $i=1, \ldots, m$ define $B_i = B_{i-1} + (1-|B_{i-1}|) A_i$, where $B_{0} = 0$. Then, $\E[B_m^2] \ge 1-q$ for $q = 3\exp(-mp/16)$.
\end{claim}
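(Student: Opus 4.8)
The plan is to prove this by exhibiting a \emph{one-step multiplicative contraction} for the potential $\Phi_i := \sqrt{1 - B_i^2} \in [0,1]$ (note $\Phi_0 = 1$), which then immediately gives exponential decay by iteration. The first step is a symmetry reduction. Writing $b := |B_{i-1}|$ and $u := 1 - b$, the update rule gives $B_i^2 = \bigl(b + u\,\sgn(B_{i-1})\,A_i\bigr)^2$, and since $A_i$ is symmetric, conditioned on the history $A_1, \ldots, A_{i-1}$ the random variable $B_i^2$ has the same law as $(b + u A_i)^2$. A short computation, using $1 - b^2 = u(1+b)$ and $1 - (b + u A_i)^2 = u(1 - A_i)(1 + b + u A_i)$, yields the exact identity
\[
\frac{1 - B_i^2}{1 - B_{i-1}^2} \;=\; (1 - A_i)\bigl(1 + \beta A_i\bigr) \;=:\; R_i, \qquad \beta := \frac{1-b}{1+b} \in [0,1],
\]
valid when $|B_{i-1}| < 1$ (on the absorbing event $|B_{i-1}| = 1$ one has $\Phi_{i-1} = \Phi_i = 0$ trivially). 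Since $1 - A_i \ge 0$ and $1 + \beta A_i \ge 0$ we get $R_i \ge 0$ and $\Phi_i = \sqrt{R_i}\,\Phi_{i-1}$.

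The heart of the argument is the bound $\E_{A_i}\!\bigl[\sqrt{R_i}\bigr] \le 1 - p/12$ for every fixed history (equivalently, for every $\beta \in [0,1]$). I would derive this from the AM--GM remainder identity $\sqrt{xy} = \tfrac{x+y}{2} - \tfrac12(\sqrt{x} - \sqrt{y})^2$ applied with $x = 1 - A_i$ and $y = 1 + \beta A_i$, which gives $\sqrt{R_i} = 1 + \tfrac{(\beta - 1)A_i}{2} - \tfrac12\bigl(\sqrt{1 - A_i} - \sqrt{1 + \beta A_i}\bigr)^2$. Taking the expectation over $A_i$ and using $\E[A_i] = 0$ kills the linear term, while
\[
\bigl(\sqrt{1 - A_i} - \sqrt{1 + \beta A_i}\bigr)^2 \;=\; \frac{(1+\beta)^2 A_i^2}{\bigl(\sqrt{1 - A_i} + \sqrt{1 + \beta A_i}\bigr)^2} \;\ge\; \frac{A_i^2}{6},
\]
because the denominator is at most $2\bigl((1 - A_i) + (1 + \beta A_i)\bigr) \le 6$ and $(1+\beta)^2 \ge 1$. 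Hence $\E_{A_i}[\sqrt{R_i}] \le 1 - \tfrac12 \cdot \tfrac{\E[A_i^2]}{6} \le 1 - p/12$, so $\E[\Phi_i \mid A_1, \ldots, A_{i-1}] \le (1 - p/12)\,\Phi_{i-1}$. Iterating via the tower property from $\Phi_0 = 1$ gives $\E[\Phi_m] \le (1 - p/12)^m \le e^{-pm/12}$, and since $\Phi_m \in [0,1]$,
\[
\E[1 - B_m^2] \;=\; \E[\Phi_m^2] \;\le\; \E[\Phi_m] \;\le\; e^{-pm/12} \;\le\; 3\,e^{-pm/16},
\]
which is exactly the claim with $q = 3\,e^{-mp/16}$.

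I expect the main obstacle to be choosing the right potential — and seeing why the obvious candidate fails. Tracking $\E[1 - B_i^2]$ directly (which is what the second-moment computation $\E[B_i^2 \mid \cdots] = B_{i-1}^2 + (1-|B_{i-1}|)^2\E[A_i^2]$ naturally suggests) only yields $\E[1 - B_i^2 \mid A_1, \ldots, A_{i-1}] \le (1 - B_{i-1}^2)\bigl(1 - \tfrac{p(1-b)}{1+b}\bigr)$, whose contraction factor degenerates to $1$ as $|B_{i-1}| \to 1$; this gives only polynomial decay of order $1/(pm)$, not the exponential decay asserted. Passing to $\sqrt{1 - B_i^2}$ repairs this precisely because, near $|B_{i-1}| = 1$, the fluctuation of the ratio $R_i$ is dominated by the multiplicative factor $1 - A_i$, and the strict concavity of $t \mapsto \sqrt t$ converts the mean-zero, variance-$\ge p$ fluctuation of $A_i$ into a genuine constant-factor loss even in that regime — morally, it turns ``$B^2$ makes additive progress $\gtrsim p(1-|B|)^2$ per step'' into ``the potential shrinks by a constant factor per step.'' The remaining ingredients — the symmetry reduction, the conditional-expectation bookkeeping, and the handling of the absorbing state $|B_i| = 1$ — are routine.
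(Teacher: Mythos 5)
Your proof is correct, and in fact gives a slightly stronger bound ($q = e^{-mp/12}$ rather than $3e^{-mp/16}$). All the key steps check out: the symmetry reduction to $B_i^2 \sim (b + uA_i)^2$ is valid because $B_{i-1}$ is determined by the history and $A_i$ is symmetric and independent of it; the exact factorization $\frac{1-B_i^2}{1-B_{i-1}^2} = (1-A_i)(1+\beta A_i)$ with $\beta = \frac{1-b}{1+b}$ is a correct algebraic identity; and the AM--GM remainder computation $\sqrt{R_i} = 1 + \frac{(\beta-1)A_i}{2} - \frac12\bigl(\sqrt{1-A_i} - \sqrt{1+\beta A_i}\bigr)^2$ together with the uniform-in-$\beta$ lower bound $\bigl(\sqrt{1-A_i} - \sqrt{1+\beta A_i}\bigr)^2 \ge A_i^2/6$ (using $(1+\beta)^2 \ge 1$ and $(\sqrt{x}+\sqrt{y})^2 \le 2(x+y) \le 6$) gives exactly $\E_{A_i}[\sqrt{R_i}] \le 1 - p/12$ for every fixed history. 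Iterating with the tower property and then using $\Phi_m^2 \le \Phi_m$ (since $\Phi_m \in [0,1]$) gives $\E[1-B_m^2] \le (1-p/12)^m \le e^{-mp/12} \le 3e^{-mp/16}$. The absorbing case $|B_{i-1}|=1$ is handled.

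This claim is cited from CHHL (Chattopadhyay--Hatami--Hosseini--Lovett) rather than proved in the present paper, so the comparison target is their argument. Your route --- contracting the potential $\Phi_i = \sqrt{1-B_i^2}$ by a multiplicative constant per step via an exact AM--GM remainder identity for $\sqrt{(1-A_i)(1+\beta A_i)}$ --- yields a per-step contraction factor of $1-p/12$ uniformly over $\beta$, which is what lets you avoid the degeneracy near $|B_{i-1}| \to 1$ that stops the naive $1-B_i^2$ analysis. The extra factor of $3$ and the exponent $mp/16$ in CHHL's statement are slack that your argument does not need, which is one signal that their original derivation proceeds by a lossier route (such as a two-phase analysis or a cruder potential). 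Your observation about \emph{why} the square-root potential is the right object --- that concavity of $\sqrt{\cdot}$ converts a mean-zero, variance-$\ge p$ fluctuation into a guaranteed constant-factor shrinkage even in the polarized regime --- is exactly the correct intuition, and the proof is self-contained and airtight.
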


\begin{proof}[Proof of \Cref{claim:X_polarize}]
Fix $j\in [N]$.
We see that the sequences $A_1 = \trnc(X^{(1)}_j), \ldots, A_m = \trnc(X^{(m)}_j)$ and $B_1 = X^{(\le 1)}_j, \ldots, B_m = X^{(\le m)}_j$ satisfy $B_i = B_{i-1} + (1-|B_{i-1}|) A_i$.
We also have that $A_1, \ldots, A_m$ are independent, symmetric random variables with $$\E[A_i^2] =  \E\mbracket{(X^{(i)}_j)^2} - \E\mbracket{\left((X^{(i)}_j)^2-1\right) \cdot \ind\left\{|X^{(i)}_j|\ge 1\right\}} \ge \eps - 1/N^{50}\ge \eps/2\;.$$

Thus, we get that $\E[1-|B_m|] \le \E[1-B_m^2] \le q$ for $q = 3\exp(-m\eps/32) \le 1/N^{6}$, and in particular $\Pr[1-|B_m| \ge  1/N^{3}] \le 1/N^{3}$.
Taking union bound over all $N$ coordinates completes the proof.
\end{proof}

\section{Construction of the Oracle}

The oracle used in our construction of pseudorandom states is simple to describe: it consists of a uniformly random oracle $A$, and an oracle $B$ that answers all $\mathsf{NP}$ queries to $A$ or $B$. Formally, we construct the oracle as follows.

\begin{definition}
\label{def:PH_oracle}
For a language $A: \{\pm 1\}^* \to \{\pm 1\}$, we define a language $\mathcal{O}[A]$ as follows. We construct an oracle $B$ inductively: for each $\ell \in\mathbb{N}$ and $x \in \{\pm 1\}^\ell$, view $x$ as an encoding of a pair $\langle M, y \rangle$ such that

\begin{enumerate}
\item $\langle M, y \rangle$ takes less than $\ell$ bits to specify,\footnote{Note that there are $2^\ell - 1$ such possible $\langle M, y \rangle$, which is why we take an encoding in $\{\pm 1\}^\ell$.}
\item $M$ is an $\mathsf{NP}$ oracle machine and $y$ is an input to $M$,
\item $M$ is syntactically restricted to run in less than $\ell$ steps, and to make queries to $A$ and $B$ on strings of length at most $\lfloor \sqrt{\ell} \rfloor$.
\end{enumerate}
Then we define $B(x) \coloneqq M(y)$. Finally, let $\mathcal{O}[A] = (A, B)$.
\end{definition}

Oracles such as those defined in \Cref{def:PH_oracle} always collapse $\mathsf{NP}$ to $\mathsf{P}$, as shown below.

\begin{proposition}
\label{prop:oracle_p_equals_np}
For any language $A: \{\pm 1\}^* \to \{\pm 1\}$, $\mathsf{P}^{\mathcal{O}[A]} = \mathsf{NP}^{\mathcal{O}[A]}$.
\end{proposition}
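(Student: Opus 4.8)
The plan is to show the two non-trivial inclusions collapse by exploiting the self-referential definition of $B$: the key point is that $B$ is powerful enough to answer any $\mathsf{NP}^{\mathcal{O}[A]}$ query about itself, as long as we pad the query string to be long enough. Concretely, I would argue $\mathsf{NP}^{\mathcal{O}[A]} \subseteq \mathsf{P}^{\mathcal{O}[A]}$ (the reverse inclusion is trivial), and since $\mathsf{P}^{\mathcal{O}[A]}$ is closed under complement, this also gives $\mathsf{coNP}^{\mathcal{O}[A]} \subseteq \mathsf{P}^{\mathcal{O}[A]}$, which is all that is really needed.

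First I would fix a language $L \in \mathsf{NP}^{\mathcal{O}[A]}$, witnessed by an $\mathsf{NP}$ oracle machine $M$ running in time $p(|y|)$ for some polynomial $p$, making oracle queries to $\mathcal{O}[A] = (A,B)$ on strings of length at most $p(|y|)$. I want to decide $L$ in deterministic polynomial time with access to $\mathcal{O}[A]$. The idea is: on input $y$, the $\mathsf{P}$ machine computes an encoding $\langle M, y\rangle$ and then pads it out to a string $x$ of some length $\ell = \ell(|y|)$ chosen large enough that (i) $\langle M, y\rangle$ takes fewer than $\ell$ bits to specify, (ii) $M$ runs in fewer than $\ell$ steps on $y$, and (iii) all of $M$'s queries have length at most $\lfloor \sqrt{\ell}\rfloor$ — e.g.\ taking $\ell$ to be roughly $p(|y|)^2$ plus a constant more than the description length of $\langle M, y\rangle$ suffices, and crucially $\ell$ is polynomial in $|y|$. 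Then by \Cref{def:PH_oracle}, $B(x) = M(y)$, which is exactly the answer to whether $y \in L$. So the $\mathsf{P}$ machine simply writes down $x$ and makes one query to $B$.

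There are two subtleties to address carefully. The first is the encoding bookkeeping: I need to check that the polynomially-many padding bits can be prepended or appended in a way that the decoder in \Cref{def:PH_oracle} still recovers $\langle M, y\rangle$ — this is a routine matter of fixing a self-delimiting encoding so that extra bits are harmlessly ignored, and it is why the definition only requires $\langle M, y\rangle$ to take \emph{less than} $\ell$ bits rather than exactly $\ell$. The second, which I expect to be the main obstacle to state cleanly, is that the machine $M$ itself makes queries to $B$ (not just to $A$), so one might worry about circularity in the definition of $B$. But the induction in \Cref{def:PH_oracle} is well-founded precisely because clause~3 forces $M$'s queries to $B$ to have length at most $\lfloor\sqrt{\ell}\rfloor < \ell$, so $B(x)$ depends only on values $B(x')$ with $|x'| < |x|$; hence $B$ is well-defined and the equation $B(x) = M(y)$ is literally part of its definition. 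I would make explicit that this is why the square-root restriction appears. Putting these together, $L \in \mathsf{P}^{\mathcal{O}[A]}$, and since $L$ was an arbitrary language in $\mathsf{NP}^{\mathcal{O}[A]}$ we conclude $\mathsf{NP}^{\mathcal{O}[A]} \subseteq \mathsf{P}^{\mathcal{O}[A]}$, and the reverse inclusion $\mathsf{P}^{\mathcal{O}[A]} \subseteq \mathsf{NP}^{\mathcal{O}[A]}$ is immediate.
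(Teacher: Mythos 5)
Your proposal is correct and takes essentially the same approach as the paper's one-line proof: encode $\langle M, y\rangle$ into a string $x$ and make a single query to $B(x)$. The paper elides the padding and well-foundedness details that you spell out, but these are exactly the right points to check and your treatment of them (choosing $\ell \approx p(|y|)^2$, noting that the $\lfloor\sqrt{\ell}\rfloor$ restriction makes the recursion well-founded) is sound.
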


\begin{proof}
Given an $\mathsf{NP}^{\mathcal{O}[A]}$ machine $M$ and input $y$, a polynomial time algorithm can decide $M(y)$ by taking $x = \langle M, y \rangle$ and querying $B(x)$.
\end{proof}

Similar oracle constructions appeared in \cite{BM99-relativized, AIK21-acrobatics}. Morally speaking, queries to $\mathcal{O}[A]$ are roughly equivalent in power to queries to $\mathsf{PH}^A$. Indeed, any $\mathsf{PH}^A$ language can be decided in $\mathsf{P}^{\mathcal{O}[A]}$, by a simple extension of \Cref{prop:oracle_p_equals_np}. A partial converse also holds: via the well-known connection between between $\mathsf{PH}$ algorithms and $\mathsf{AC^0}$ circuits \cite{FSS84-circuit-oracle}, each bit of $\mathcal{O}[A]$ can be computed by an exponential-sized $\mathsf{AC^0}$ circuit depending on $A$ (see \Cref{lem:PH_to_AC0} in \Cref{app:proof_of_h2_h3} for a precise statement).


We next define the quantum states that we use to construct pseudorandom ensembles relative to our oracles, which are based on $t$-Forrelation states.

\begin{definition}[$t$-Forrelation states]
\label{def:t_forrelation_states}
For a $t$-tuple of functions $F = (f^1, f^2, ..., f^t)$ where $f^i : \{\pm 1\}^n \to \{\pm 1\}$, we denote by $\ket{\Phi_F}$ the state:
\[
\ket{\Phi_F} \coloneqq U_{f^t} \cdot H \cdot U_{f^{t-1}} \cdot H \cdots H \cdot U_{f^1} \ket{+^n}.
\]
where $U_{f^i}$ is the unitary phase oracle corresponding to $f^i$ and $H$ is the $n$-qubit Hadamard transform. We call any such state a \emph{$t$-Forrelation state}.
\end{definition}

The pseudorandom state ensembles we consider consist of random $t$-Forrelation states where the phase oracles are specified by the random oracle $A$.

\begin{definition}[State ensemble relative to $A$]
\label{def:prs_oracle_ensemble}
Fix a security parameter $\kappa$ and $t \ge 1$. We define an ensemble of $n$-qubit states for some $\kappa + 1 \le n \le \poly(\kappa)$. For each $k \in \{0,1\}^\kappa$ and $i \in [t]$, define $f_k^i: \{\pm 1\}^n \to \{\pm 1\}$ by $f_k^i(x) = A(x, k, i)$.\footnote{In a slight abuse of notation, $k$ and $i$ correspond to their binary representations over $\{\pm 1\}^\kappa$  and $\{\pm 1\}^{\lceil \log t \rceil}$, respectively, in $A(x, k, i)$.} Letting $F_k = (f^1_k, f^2_k, \ldots, f^t_k)$ we choose:
\[
\ket{\varphi_k} \coloneqq \ket{\Phi_{F_k}},
\]
and take the ensemble to be $\{\ket{\varphi_k}\}_{k \in \{0,1\}^\kappa}$.
\end{definition}

The main goal of the remainder of this work will be to show that when $A$ is a random oracle, then with probability $1$ over $A$, the set $\{\ket{\varphi_k}\}_{k \in \{0,1\}^\kappa}$ forms a secure pseudorandom state ensemble relative to $\mathcal{O}[A]$. We emphasize that our proofs will show security for \textit{any} function $n(\kappa)$ that satisfies $\kappa + 1 \le n \le \poly(\kappa)$.



\section{Single-Copy Security}

\label{sec:single_security}

Throughout this section, we fix $t = 2$ in \Cref{def:prs_oracle_ensemble}. Additionally, we will always denote $(f_k^1, f_k^2)$ by $(f_k, g_k)$, so that:
\[
\ket{\varphi_k} = U_{g_k} \cdot H \cdot U_{f_k} \ket{+^n}.
\]
The goal of this section is to prove, relative to $\mathcal{O}[A]$, the pseudorandomness of the ensemble $\{\ket{\varphi_k}\}_{k \in \{0,1\}^\kappa}$ defined in \Cref{def:prs_oracle_ensemble}.

\begin{theorem}
\label{thm:single_copy_prs_prob_1}
With probability $1$ over a random oracle $A$, $\{\ket{\varphi_k}\}_{k \in \{0,1\}^\kappa}$ is single-copy pseudorandom relative to $\mathcal{O}[A]$.
\end{theorem}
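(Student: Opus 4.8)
The plan is to prove security by a reduction from the $\OR \circ \textsc{Forrelation}$ problem, following the reduction sketched in the introduction, and then invoking a lower bound against $\mathsf{BQP}^{\mathsf{PH}}$ adversaries (equivalently, $\mathsf{BQP}$ adversaries with an oracle whose bits are computable by exponential-size $\mathsf{AC^0}$ circuits of $A$, via \Cref{lem:PH_to_AC0}). First I would fix a candidate polynomial-time adversary $\mathcal{A}$ with oracle access to $\mathcal{O}[A]=(A,B)$, and suppose toward contradiction that $\mathcal{A}$ distinguishes $\ket{\varphi_k}$ (for random $k$) from a Haar-random state with non-negligible advantage for infinitely many $\kappa$. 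The key observation is the identity from the proof overview: given an instance $\{(f'_k,g'_k)\}_k$ of (a distributional version of) $\OR\circ\textsc{Forrelation}$ together with a uniformly random auxiliary function $h:\{\pm1\}^n\to\{\pm1\}$, feeding $\mathcal{A}$ the state $\ket{\Phi_h}=U_h\ket{+^n}$ and answering its $(f_k,g_k)$-queries with $(f'_k, g'_k\cdot h)$ produces exactly the setup in which $\left|\braket{\Phi_h\mid\varphi_k}\right| = \left|\braket{+^n\mid\Phi_{(f'_k,g'_k)}}\right|$. Thus on a YES instance the input state has overlap $\ge\eps$ with some ensemble state, while on a NO instance (all $(f'_k,g'_k)$ uniform) the input state is, with high probability over $h$ and the instance, far in trace distance from the entire ensemble and in fact statistically close to a Haar-random state in the view of any bounded-query algorithm.

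The main quantitative step — and what I expect to be the chief obstacle — is showing that $\mathcal{A}$'s distinguishing advantage on the PRS challenge is \emph{polynomially} related to its advantage on $\OR\circ\textsc{Forrelation}$, with the polynomial depending on $1/\eps$. The subtlety is that the reduction does not exactly map one problem onto the other: in the YES case the reduction produces a state with overlap only $\ge\eps$ (not $1$) with a random ensemble member, and $\ket{\varphi_k}$ for $k$ uniform is itself only an approximation to the PRS challenge distribution. I would handle this by carefully designing a distributional version of $\OR\circ\textsc{Forrelation}$ whose YES case uses the discrete Forrelation distribution $\mathcal{F}_n$ of \Cref{def:forrelation} (so that $\eps = 1/(100n)$ and $\left|\braket{+^n\mid\Phi_{(f,g)}}\right|\ge\Omega(\eps)$ with good probability over $(f,g)\sim\mathcal{F}_n$), and then argue: (i) a PRS distinguisher with advantage $\gamma$ yields, via the reduction and an averaging/amplification argument over the random choice of which $k$ to plant, a distinguisher for the distributional $\OR\circ\textsc{Forrelation}$ with advantage $\poly(\eps)\cdot\gamma - \negl$; and (ii) by \Cref{thm:forrelation_AC0_pseudorandom}, $\mathcal{F}_n$ is pseudorandom against $\mathsf{AC^0}[2^{\poly(n)},O(1)]$, which combined with the hybrid argument and the circuit characterization of $\mathsf{PH}$ relative to $A$ shows no $\mathsf{BQP}^{\mathsf{PH}}$-style adversary can solve distributional $\OR\circ\textsc{Forrelation}$ with non-negligible advantage.

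Concretely, the structure of the argument I would write is: (1) reduce, by a standard union-bound/Borel–Cantelli argument, from ``security with probability $1$ over $A$'' to ``for each $\mathcal{A}$, the expected advantage over $A$ is negligible,'' so it suffices to bound the adversary's advantage averaged over a random oracle; (2) replace the random-oracle ensemble by the planted distribution of the reduction and bound the resulting error using that $A$ restricted to the relevant key-blocks looks like fresh random functions, using the concentration estimates such as \Cref{lem:fourier-concentration} to control the NO case; (3) absorb the oracle $B$ into the adversary by the $\mathsf{PH}$-to-$\mathsf{AC^0}$ translation (\Cref{lem:PH_to_AC0}), so that $\mathcal{A}$'s acceptance probability becomes an expectation of a bounded function computed by a $2^{\poly(n)}$-size constant-depth circuit of the truth tables; (4) apply \Cref{thm:forrelation_AC0_pseudorandom} to each key block and sum the errors via a hybrid over $k\in\{0,1\}^\kappa$, which is affordable because the per-block error $\poly(n)/\sqrt{2^n}$ times $2^\kappa$ is still negligible when $n\ge\kappa+1$; and (5) conclude that the advantage on $\OR\circ\textsc{Forrelation}$, hence (dividing by $\poly(\eps)$) on the PRS challenge, is negligible. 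The delicate points that will need the most care are the exact form of the distributional $\OR\circ\textsc{Forrelation}$ problem that makes the polynomial loss go through in the single-copy case, the handling of the discrepancy between Haar-random states and ``random state far from the ensemble,'' and ensuring the hybrid over exponentially many keys composes with a single-block $\mathsf{AC^0}$ pseudorandomness bound without blowing up the error.
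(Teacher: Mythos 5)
Your high-level strategy matches the paper: you correctly reduce PRS security to a distributional $\OR\circ\textsc{Forrelation}$ problem, identify the overlap identity $\left|\braket{\Phi_h|\varphi_k}\right| = \left|\braket{+^n|\Phi_{(f'_k,g'_k)}}\right|$, recognize that the $\varepsilon$-overlap forces a polynomial loss, use the $\mathsf{PH}$-to-$\mathsf{AC^0}$ translation, apply $\mathsf{AC^0}$-pseudorandomness of $\mathcal{F}_n$, and finish with Borel--Cantelli. On the polynomial-loss step, the paper's mechanism is more precise than your ``amplification over which $k$ to plant'': it proves (\Cref{lem:tau_rho_sigma}) that the adversary's input state in the hybrid where it receives $\ket{\Phi_h}$ is $\negl(\kappa)$-close in trace distance to the explicit convex mixture $\varepsilon\rho_A+(1-\varepsilon)I/2^n$; the $\varepsilon$ factor there is exactly where the $\poly(1/\varepsilon)$ loss appears, and it is a purely statistical computation about phase states, not an averaging over planted keys.

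There is a genuine gap in your steps (3)--(4). After absorbing $B$ via \Cref{lem:PH_to_AC0}, what you have is that each individual bit $\mathcal{O}[A](x)$ is an $\mathsf{AC^0}$ function of the truth table of $A$ --- but the adversary is a quantum algorithm, and its acceptance probability is not an $\mathsf{AC^0}$ function of the oracle, so you cannot feed it into \Cref{thm:forrelation_AC0_pseudorandom}. The paper bridges this with two lemmas you do not mention: an average-case BBBV argument (\Cref{lem:average_case_bbbv}) showing that a $T$-query quantum algorithm cannot detect a change in its oracle if each bit of the oracle changes with small probability under the resampling, and an average-case block-sensitivity bound (\Cref{lem:ac0_block_averaged_sensitivity_tail_bound}, extended to $\mathcal{F}_n$ in \Cref{lem:ac0_single_forrelated_block_indistinguishable_b}) showing that, for a random input, each $\mathsf{AC^0}$-computable bit of $\mathcal{O}[A]$ is unlikely to flip when one uniformly chosen key-block is resampled. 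It is only in bounding this per-bit resampling sensitivity that \Cref{thm:forrelation_AC0_pseudorandom} is invoked. Your proposed ``hybrid over $k\in\{0,1\}^\kappa$, paying $2^\kappa$ times the per-block error'' does not match the structure of the problem: only the single random key $k^*$ is Forrelated, there is no natural chain of hybrids ranging over all keys, and the $2^\kappa$ keys enter instead as the block count $M$ in the block-sensitivity bound, where the per-oracle-bit failure probability is $2^{-\Omega(\sqrt{M}/\polylog M)}$ rather than additive in $M$.

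A second, smaller gap: your step (1) asks for the expected advantage over $A$ to be negligible, but Borel--Cantelli requires the expected \emph{absolute value} of the advantage to be negligible, and negligibility of the signed quantity does not directly give that. The paper's \Cref{cor:prs_abs_advantage_small} closes this with a Yao-style wrapper adversary $\mathcal{B}$ that first estimates the sign of $\mathcal{A}$'s own advantage (by running $\mathcal{A}$ once on a self-generated PRS challenge and once on the maximally mixed state) and then multiplies the output of $\mathcal{A}$ by the estimated sign; this converts the absolute advantage into the square of the advantage, which can then be bounded by the one-sided theorem. You should account for this step in a complete proof.
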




\subsection{Construction of Hybrids}

We will prove \Cref{thm:single_copy_prs_prob_1} via a hybrid argument. Each hybrid below defines a security challenge for the quantum adversary. The security challenge consists of a state $\ket{\psi}$ and an oracle $A$ that are sampled by the hybrid (note that $\ket{\psi}$ may in general depend on $A$). The adversary is given a single copy of $\ket{\psi}$ as input, and can make queries to $\mathcal{O}[A]$.

For convenience, in each of these hybrids we only specify the part of $A$ that corresponds to the functions $\{(f_k, g_k)\}_{k \in \{0,1\}^\kappa}$ that are used to construct the states with security parameter $\kappa$. Recall that $f_k(x) =  A(x, k, 1)$ and $g_k(x) =  A(x, k, -1)$. Otherwise, the rest of $A$ is always sampled uniformly at random. 

Ultimately, we wish to show the indistinguishability of the following two challenges:

\paragraph{Hybrid $\mathsf{H_0}$:} Sample $k^* \sim \{0,1\}^\kappa$. For each $k \in \{0,1\}^\kappa$, sample $f_k, g_k : \{\pm 1\}^n \to \{\pm 1\}$ uniformly at random. The adversary gets $\ket{\psi} = \ket{\varphi_{k^*}}$ as input.

\paragraph{Hybrid $\mathsf{H_4}$:} For each $k \in \{0,1\}^\kappa$, sample $f_k, g_k : \{\pm 1\}^n \to \{\pm 1\}$ uniformly at random. The adversary gets a Haar-random state $\ket{\psi}$ as input.\\

Hybrid $\mathsf{H_0}$ corresponds to sampling a state from the PRS ensemble, whereas Hybrid $\mathsf{H_4}$ corresponds to sampling a Haar-random state. We will interpolate between these hybrids by changing how we sample either the oracle $A$ or the state $\ket{\psi}$ in each step.

In the first intermediate hybrid, we observe that the uniform distribution over $(f_k, g_k)$ can also be generated by first sampling a Forrelated $(f'_k, g'_k)$, and then multiplying $g'_k$ pointwise with a uniformly random function. This motivates the next hybrid, which we shall show is equivalent to $\mathsf{H}_0$.

\paragraph{Hybrid $\mathsf{H_1}$:}
Sample $k^* \sim \{0,1\}^\kappa$. For each $k \in \{0,1\}^\kappa$, sample $f'_k, g'_k : \{\pm 1\}^n \to \{\pm 1\}$ as follows:
\begin{itemize}
\item If $k = k^*$, draw $(f'_k, g'_k) \sim \mathcal{F}_{n}$.
\item If $k \neq k^*$, draw $f'_k, g'_k$ uniformly at random.
\end{itemize}
Additionally, sample a random function $h: \{\pm 1\}^n \to \{\pm 1\}$. For each $k \in \{0,1\}^\kappa$, set $f_k = f'_k$ and $g_k = g'_k \cdot h$ (i.e.\ XOR in the $\pm 1$ domain). The adversary gets $\ket{\psi} = \ket{\varphi_{k^*}}$ as input.\\

From the results of \cite{AIK21-acrobatics} (and \Cref{thm:forrelation_AC0_pseudorandom} about the Forrelation distribution $\mathcal{F}_n$), we know that no efficient quantum algorithm that queries the oracle $\mathcal{O}[A]$ can distinguish the distribution of $\{f'_k,g'_k\}_k$ where a random pair $k^*$ is Forrelated from the uniform distribution $\{f'_k,g'_k\}_k$. So, one expects that if one samples $f'_{k^*},g'_{k^*}$ to be uniformly random instead, no quantum algorithm should be able to detect this. However, we cannot use the result of \cite{AIK21-acrobatics} as a black box, because in our setting the quantum algorithm also gets an input state that is correlated with the distribution of the oracle.

To handle this issue, we first show that we can replace the input state with a state that is not correlated with the oracle, so that afterwards we can apply the result of \cite{AIK21-acrobatics}. Ultimately, we will argue using the definition of $\mathcal{F}_n$ that, from the viewpoint of the algorithm, the replaced state looks like a mixture of $\ket{\varphi_{k^*}}$ and a maximally mixed state in the orthogonal subspace.

\comment{M: I think it will make the proof more intuitive if we define the hybrid $H_2$ so that the adversary gets the state $\ket{\Phi_h}$ with probability $\sqrt{\eps}$ and a Haar random state with rest of the probability. Then, the acceptance probabilities under all the hybrids are close to each other and the proof is more intuitive than taking linear combinations of hybrids.

W: not sure I agree, but we can consider it. I plan to leave it as is for QIP, at least

L: I think the issue is that now there is a hybrid that is not defined in a very intuitive way...}

\paragraph{Hybrid $\mathsf{H_2}$:} The distribution of $f_k, g_k, h: \{\pm 1\}^n \to \{\pm 1\}$ is the same as the Hybrid $\mathsf{H_1}$, but the adversary instead receives the state $\ket{\psi} = \ket{\Phi_h}$ as the input (recall that $\ket{\Phi_h} = U_h \ket{+^n}$).\\

Since in Hybrid $\mathsf{H_2}$, the input state is independent of the oracle, we can apply the result of \cite{AIK21-acrobatics} and switch the distribution of $f'_{k^*}, g'_{k^*}$ as discussed before. This gives us the next hybrid.

\paragraph{Hybrid $\mathsf{H_3}$:} The distribution of $f_k, g_k$ is chosen as in the Hybrid $\mathsf{H_2}$ except that $f'_{k^*}, g'_{k^*}$ are chosen to be uniformly random functions as opposed to being sampled from $\mathcal{F}_n$. The adversary receives the same input state $\ket{\psi} = \ket{\Phi_h}$ as in the Hybrid $\mathsf{H_2}$. \\

Note that the distribution of $f_k,g_k$ is uniformly random in $\mathsf{H_3}$, and $\ket{\Phi_h}$ is a random phase state independent of the oracle. The result of \cite{BS19-binary} will imply that we can replace $\ket{\Phi_h}$ with a Haar random state, as in $\mathsf{H_4}$.\footnote{Actually, because we only consider single-copy security, we will not require the full strength of \cite{BS19-binary}. In particular, we will be able to use the simpler observation that a single copy of either a Haar-random state or a random phase state equals the maximally mixed state.}

\subsection{Security Proof}

We now proceed to the formal security proof. For a fixed quantum adversary $\mathcal{A}$ and $i \in \{\mathsf{0}, \mathsf{1}, \mathsf{2}, \mathsf{3}, \mathsf{4}\}$, we denote:
\[
p_i(\mathcal{A}) \coloneqq \Pr_{(\ket{\psi}, A) \sim \mathsf{H}_i}\mbracket{\mathcal{A}^{\mathcal{O}[A]}\mparen{1^\kappa, \ket{\psi}} = 1},
\]
as the probability that the algorithm accepts on a particular hybrid. We successively analyze the hybrids in numerical order.

\begin{claim}
\label{claim:h0_h1}
For all $\mathcal{A}$, $p_\mathsf{0}(\mathcal{A}) = p_\mathsf{1}(\mathcal{A})$.
\end{claim}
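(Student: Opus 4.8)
\textbf{Proof proposal for \Cref{claim:h0_h1}.}

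The plan is to show that Hybrids $\mathsf{H_0}$ and $\mathsf{H_1}$ induce \emph{exactly} the same distribution on the security challenge $(\ket{\psi}, A)$, so that $p_\mathsf{0}(\mathcal{A}) = p_\mathsf{1}(\mathcal{A})$ for every $\mathcal{A}$ with no error term. In both hybrids, the part of $A$ outside the block relevant to security parameter $\kappa$ is sampled uniformly and independently, so it suffices to match the joint law of that block together with the state. The block of $A$ is a deterministic function of $\{(f_k, g_k)\}_{k \in \{0,1\}^\kappa}$ via $f_k(x) = A(x,k,1)$, $g_k(x) = A(x,k,-1)$, and the input state is the fixed function $\ket{\psi} = \ket{\varphi_{k^*}} = U_{g_{k^*}} H U_{f_{k^*}} \ket{+^n}$ of the triple $(k^*, f_{k^*}, g_{k^*})$. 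Hence I would reduce the claim to showing that the joint distribution of $\bigl(k^*, \{(f_k,g_k)\}_{k}\bigr)$ is the same in $\mathsf{H_0}$ and $\mathsf{H_1}$; in $\mathsf{H_0}$ this distribution is simply: $k^*$ uniform and independent of everything, and the $(f_k, g_k)$ mutually independent, each uniform.

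First I would handle the $f$-coordinates: $f_k = f'_k$, and in $\mathsf{H_1}$ every $f'_k$ is uniform and independent of the $g'$-randomness and of $h$ (for $k = k^*$ this is because $\mathcal{F}_n$ of \Cref{def:forrelation} draws its first coordinate uniformly). So $\{f_k\}_k$ is i.i.d.\ uniform and independent of $k^*$, as in $\mathsf{H_0}$. The content is in the $g$-coordinates: $g_k = g'_k \cdot h$, where the \emph{same} uniformly random $h$ multiplies all keys, $g'_k$ is uniform for $k \ne k^*$, and $g'_{k^*}$ has the biased, $f'_{k^*}$-dependent marginal from \Cref{def:forrelation}. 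Condition on $\{f_k\}_k$ and $k^*$; writing $b(x) \coloneqq \trnc\bigl(\sqrt{\eps 2^n}\,\hat{f_{k^*}}(x)\bigr)$ for the bias function of $g'_{k^*}$, after this conditioning the $\{g'_k\}_k$ are mutually independent across keys and across inputs, and $h$ is uniform and independent of them. Then for each $x$, averaging over the two values of $h(x)$,
\[
\Pr\bigl[(g_k(x))_{k} = (a_k)_{k}\bigr] \;=\; \sum_{s \in \{\pm 1\}} \frac12 \cdot \frac{1 + a_{k^*} s\, b(x)}{2} \cdot \prod_{k \ne k^*}\frac12 \;=\; \frac{1}{2^{2^\kappa}},
\]
since averaging over the independent uniform sign $h(x)$ washes out the bias $b(x)$. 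Therefore, after the conditioning, the $g_k$ are mutually independent, uniform, independent across $x$, and independent of $\{f_k\}_k$ and $k^*$; dropping the conditioning, $\bigl(k^*, \{(f_k,g_k)\}_k\bigr)$ in $\mathsf{H_1}$ has precisely the $\mathsf{H_0}$ distribution, and the claim follows.

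I do not expect a genuine obstacle: this is an exact equality of distributions. The one point requiring care is that a single random $h$ is shared across all keys, which could a priori correlate the $g_k$'s (indeed $g_k \cdot g_{k'} = g'_k \cdot g'_{k'}$ does not depend on $h$). The displayed computation is exactly what rules this out: because at most one of the functions $g'_k$ (namely $g'_{k^*}$) is non-uniform, every joint product of the $g_k(x)$ still averages to uniform. Concluding, since $(\ket{\psi}, A)$ is a common deterministic function of $\bigl(k^*,\{(f_k,g_k)\}_k\bigr)$ and the independent uniform remainder of $A$, the acceptance probability is unchanged and $p_\mathsf{0}(\mathcal{A}) = p_\mathsf{1}(\mathcal{A})$.
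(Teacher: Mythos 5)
Your proposal is correct and follows the same underlying idea as the paper's proof: show that the joint law of $(k^*, A)$ is identical in $\mathsf{H_0}$ and $\mathsf{H_1}$, using the fact that multiplying $g'_{k^*}$ by a fresh uniform $h$ washes out the $\mathcal{F}_n$-induced bias while leaving $f'_{k^*}$ uniform. The paper states this only for the single pair $(f_{k^*}, g_{k^*})$ and leaves implicit why the \emph{shared} $h$ does not correlate the $g_k$'s across distinct keys; your explicit per-input computation summing over the sign $s = h(x)$ is exactly the missing step that makes the equality of distributions airtight, and it works precisely because only the $k^*$ row is non-uniform before the shift.
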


\begin{proof}
This follows from the fact that $\mathsf{H_0}$ and $\mathsf{H_1}$ are identically distributed, as we now argue. It suffices to show that the oracle $A$ chosen in $\mathsf{H_1}$ is uniformly random. This holds by observing that if we sample $(f, g) \sim \mathcal{F}_n$ and $h : \{\pm 1\}^n \to \{\pm 1\}$ uniformly at random, then $(f, g \cdot h)$ is a uniformly random pair of functions, because by \Cref{def:forrelation}, the marginal distribution of $f$ is uniformly random.
\end{proof}



For the next pair of hybrids, it will be helpful to first establish two statistical lemmas about the Fourier spectrum of the $f_k$'s.

\begin{lemma}
\label{lem:fourier_bounded_prob_1}
    With probability $1$ over $A$, for all sufficiently large $\kappa$, $k \in \{0,1\}^\kappa$, and $i \in \{0,1\}^n$, we have that:
    \[
        \left| \hat{f}_k(i)\right| \le \frac{1}{\sqrt{\eps 2^n}},
    \]
    where $\eps$ is given in \Cref{def:forrelation}.
\end{lemma}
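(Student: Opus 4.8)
Since $A$ is a uniformly random oracle and $f_k(x) = A(x,k,1)$ for $x \in \{\pm 1\}^n$, for each fixed $\kappa$ the functions $\{f_k\}_{k \in \{0,1\}^\kappa}$ are independent uniformly random Boolean functions on $\{\pm 1\}^n$. The plan is: (1) for a single triple $(\kappa,k,i)$ bound the probability that $|\hat f_k(i)|$ exceeds $1/\sqrt{\eps 2^n}$ via a subgaussian tail bound; (2) union bound over all $k$ and all $i$ at a fixed $\kappa$; (3) sum the resulting per-$\kappa$ failure probabilities and invoke the Borel–Cantelli lemma to get a probability-$1$ statement that holds for all sufficiently large $\kappa$.

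For step (1), fix $k$ and $i$ and write $\hat f_k(i) = \langle a, Y \rangle$ where $Y = \ttable(f_k)$ is uniform in $\{\pm 1\}^{2^n}$ and $a$ is the vector with entries $\pm 2^{-n}$, so that $\|a\|_2 = 2^{-n/2}$ and $\E[\hat f_k(i)] = 0$. By the subgaussian bound following \Cref{fact:hoeffding}, $\hat f_k(i)$ is $\sigma$-subgaussian with $\sigma = 2^{-n/2}$. Applying this at $s = 1/\sqrt{\eps} = \sqrt{100n}$ standard deviations, and noting that $s\sigma = 1/\sqrt{\eps 2^n}$, gives
\[
\Pr\!\left[\,|\hat f_k(i)| \ge \tfrac{1}{\sqrt{\eps 2^n}}\,\right] \;\le\; 2\exp(-s^2/2) \;=\; 2\exp(-50n).
\]
The role of the choice $\eps = \tfrac{1}{100n}$ in \Cref{def:forrelation} is precisely that it makes this failure probability decay like $\exp(-\Theta(n))$, which is what lets it absorb the union bound.

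For steps (2) and (3): since $n \ge \kappa + 1$, there are at most $2^\kappa \cdot 2^n \le 2^{2n}$ pairs $(k,i)$, so the probability that some pair violates the bound at parameter $\kappa$ is at most $2\cdot 2^{2n}\exp(-50n) = 2\exp\!\big((2\ln 2 - 50)n\big) \le 2\exp(-47\kappa)$ for all $\kappa$. Since $\sum_{\kappa \in \Naturals} 2\exp(-47\kappa) < \infty$, the first Borel–Cantelli lemma implies that with probability $1$ over $A$ only finitely many $\kappa$ witness a violation; equivalently, with probability $1$ the claimed bound holds simultaneously for all sufficiently large $\kappa$, all $k \in \{0,1\}^\kappa$, and all Fourier indices $i$.

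I do not anticipate a genuine obstacle here. The only two points requiring care are (i) calibrating the constant so the per-$(k,i)$ tail bound $\exp(-\Theta(n))$ dominates the size-$\le 2^{2n}$ union bound — which is exactly why $\eps$ is taken to be $\Theta(1/n)$ — and (ii) packaging the conclusion as an almost-sure statement valid for all large $\kappa$ via Borel–Cantelli, rather than merely a high-probability statement at each individual $\kappa$.
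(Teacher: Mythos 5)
Your proposal is correct and matches the paper's proof in both structure and constants: a per-$(k,i)$ Hoeffding/subgaussian tail bound of $2\exp(-50n)$ (the paper writes this as $2\exp(-1/(2\eps))$), a union bound over the $2^{\kappa+n}$ pairs, and Borel--Cantelli to upgrade to an almost-sure statement for all large $\kappa$. The only cosmetic difference is that you invoke the subgaussian corollary stated after \Cref{fact:hoeffding}, whereas the paper plugs into Hoeffding's inequality directly; these yield identical bounds.
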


\begin{proof}
Fix $\kappa \in \Naturals$. Note that for any fixed $k, i$, the Fourier coefficient $\hat{f}_{k}(i)$ is a sum of $2^n$ independent $\pm \frac{1}{2^n}$ random variables. Hence, by \Cref{fact:hoeffding} it holds that:
\begin{align*}
\Pr_{A}\mbracket{\left|\hat{f}_{k}(i)\right| > \frac{1}{\sqrt{\eps 2^n}}}
&\le 2\exp\mparen{-\frac{\frac{2}{\eps 2^n}}{2^n \cdot \frac{4}{4^n}}} =  2\exp\mparen{-\frac{1}{2\eps}},
\end{align*}
and therefore, by a union bound:
\begin{align*}
\Pr_{A}\mbracket{\exists k \in \{0,1\}^\kappa, i \in \{0,1\}^n : \left|\hat{f}_{k}(i)\right| > \frac{1}{ \sqrt{\eps 2^n}}} &\le 2^{n + \kappa + 1}\exp\mparen{-\frac{1}{2\eps}}\\
&\le 2^{n + \kappa + 1 -1/2\eps}\\
&= 2^{\kappa + 1 - 49n}\\
&\le \negl(\kappa),
\end{align*}
where we have used the fact that $\eps = \frac{1}{100n}$ and $n > \kappa$. \comment{L: In this step, it is invoked that $n \ge \kappa/48$, which could be relevant for the multi-copy + output-shrinking setting.}

By the Borel--Cantelli Lemma, because $\sum_{\kappa = 1}^\infty \negl(\kappa) \le O(1)$, we conclude that with probability $1$ over $A$, $\left|\hat{f}_{k}(i)\right| > \frac{1}{\sqrt{\eps 2^n}}$ for at most finitely many $k, i$. Hence, the lemma.
\end{proof}

\begin{lemma}
\label{lem:fourier_f_squared_uniform_prob_1}
    With probability $1$ over $A$,
    \[
         \sum_{i \in \{\pm 1\}^n}\abs{\frac1{2^n} - \E_{k \sim \{0, 1\}^\kappa}\left[\hat f_k(i)^2\right]} \le \negl(\kappa).
    \]
\end{lemma}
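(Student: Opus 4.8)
The plan is to combine the expected-value bound from \Cref{lem:fourier-concentration} with an independence-across-coordinates concentration argument, followed by a Borel--Cantelli step to promote the high-probability statement to a probability-$1$ statement. First, fix $\kappa$ and write $S(A) \coloneqq \sum_{i \in \{\pm 1\}^n} \left|\frac{1}{2^n} - \E_{k \sim \{0,1\}^\kappa}[\hat f_k(i)^2]\right|$, where the $f_k$ are the $2^\kappa$ independent uniformly random Boolean functions determined by the relevant part of $A$. Applying \Cref{lem:fourier-concentration} with $L = 2^\kappa$ to each fixed $i$ and summing over the $2^n$ values of $i$ gives $\E_A[S(A)] \le 2^n \cdot \frac{C}{2^n \sqrt{2^\kappa}} = \frac{C}{2^{\kappa/2}}$. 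So in expectation $S(A)$ is already negligible in $\kappa$; what remains is to show $S(A)$ does not deviate too far above its mean with overwhelming probability, so that Borel--Cantelli applies.

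For the concentration step, I would bound $\Pr_A[S(A) \ge 2\E_A[S(A)] + \negl(\kappa)]$ or, more simply, $\Pr_A[S(A) \ge t_\kappa]$ for a suitable negligible threshold $t_\kappa$ (say $t_\kappa = 2^{-\kappa/4}$), using that $S(A)$ is a function of the $2^\kappa \cdot 2^n$ independent $\pm 1$ bits defining the truth tables of the $f_k$'s. One clean route: use a union bound over coordinates $i$, since $\sum_i |\tfrac{1}{2^n} - \E_k[\hat f_k(i)^2]| \ge t_\kappa$ forces some single term to be at least $t_\kappa / 2^n$, or rather it is cleaner to bound each term individually — by \Cref{cor:subexp_concentration} (Bernstein), for each fixed $i$ we have $\Pr_A[|\tfrac1{2^n} - \E_k[\hat f_k(i)^2]| \ge s] \le 2\exp(-c\,2^\kappa \min\{s^2/\sigma^4, s/\sigma^2\})$ with $\sigma^2 = 2^{-n}$; taking $s = 2^{-n}\cdot \kappa$ (so $s/\sigma^2 = \kappa$) gives a bound of $2\exp(-c\,2^\kappa \kappa)$ per coordinate, and a union bound over the $2^n \le 2^{\poly(\kappa)}$ choices of $i$ still leaves $\Pr_A[\max_i |\tfrac1{2^n} - \E_k[\hat f_k(i)^2]| \ge \kappa/2^n] \le 2^{\poly(\kappa)}\cdot 2\exp(-c\,2^\kappa \kappa) \le \negl(\kappa)$. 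On the complementary event, $S(A) \le 2^n \cdot \kappa / 2^n = \kappa$, which is not negligible — so this crude bound alone is insufficient, and one really does need the $\ell_1$-sum bound, not the $\ell_\infty$ bound.

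The honest approach is therefore to control $S(A)$ directly as a sum. Write $S(A) = \sum_i Z_i$ where $Z_i = |\tfrac1{2^n} - \E_k[\hat f_k(i)^2]|$; these $Z_i$ are \emph{not} independent (the $f_k$ truth-table bits are shared across $i$), so I would instead apply a bounded-differences / McDiarmid-type inequality to $S(A)$ viewed as a function of the $2^\kappa \cdot 2^n$ independent bits, or alternatively observe that changing a single bit of one $f_k$ changes each $\hat f_k(i)$ by $2^{1-n}$ and hence changes $\E_k[\hat f_k(i)^2]$ by $O(2^{-\kappa}\cdot 2^{-n})$ and $S(A)$ by $O(2^{-\kappa})$ (summing over the $2^n$ coordinates, each Fourier coefficient shifts, but $\sum_i |\Delta \E_k[\hat f_k(i)^2]|$ telescopes to $O(2^{-\kappa})$ using $\sum_i \hat f_k(i)^2 = 1$). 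With bounded differences $O(2^{-\kappa})$ over $2^{\kappa+n}$ coordinates, McDiarmid gives $\Pr_A[S(A) \ge \E_A[S(A)] + \lambda] \le \exp(-\lambda^2 / (2^{\kappa+n} \cdot O(2^{-2\kappa}))) = \exp(-\Omega(\lambda^2 2^{\kappa - n}))$. Choosing $\lambda = 2^{-\kappa/4}$ makes this $\exp(-\Omega(2^{\kappa/2 - n}))$, which is negligible provided $n$ is small enough relative to $\kappa$ — but here $n$ can be an arbitrary polynomial in $\kappa$, so $2^{\kappa/2-n}$ need not even be large. This is the main obstacle: getting a concentration bound strong enough to survive when $n \gg \kappa$. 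I expect the resolution is to take the threshold $\lambda$ polynomially (not exponentially) small — e.g.\ $\lambda = 1/\kappa^2$ — so that $\Pr_A[S(A) \ge \frac{C}{2^{\kappa/2}} + \tfrac1{\kappa^2}] \le \exp(-\Omega(2^{\kappa-n}/\kappa^4))$; since $n \le \poly(\kappa)$ this is still $\le 2^{-\kappa}$ for large $\kappa$ once we note $2^{\kappa - n}$ is eventually... no — again this fails for large $n$. The cleanest fix is probably to not sum the tail bounds but to use Markov directly: $\Pr_A[S(A) \ge \kappa \cdot \E_A[S(A)]] \le 1/\kappa$, which is \emph{not} summable, so instead $\Pr_A[S(A) \ge \kappa^2 \cdot \E_A[S(A)]] \le 1/\kappa^2$, and $\sum_\kappa 1/\kappa^2 < \infty$. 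Then by Borel--Cantelli, with probability $1$ over $A$, for all sufficiently large $\kappa$ we have $S(A) < \kappa^2 \cdot \frac{C}{2^{\kappa/2}} = \negl(\kappa)$, which is exactly the claim. This Markov-plus-Borel--Cantelli route sidesteps the concentration difficulty entirely, at the cost of nothing, since the conclusion is only an almost-sure asymptotic statement. I would write the proof this way: (1) \Cref{lem:fourier-concentration} gives $\E_A[S(A)] \le C\,2^{-\kappa/2}$; (2) Markov's inequality gives $\Pr_A[S(A) \ge \kappa^2 C\, 2^{-\kappa/2}] \le 1/\kappa^2$; (3) $\sum_\kappa 1/\kappa^2 < \infty$ and Borel--Cantelli finish it.
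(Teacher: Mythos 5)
Your final Markov-plus-Borel--Cantelli argument is exactly the paper's proof: the paper likewise bounds $\E_A[q_{\kappa,A}] \le C\,2^{-\kappa/2}$ via \Cref{lem:fourier-concentration}, applies Markov with the (equally valid) threshold $\sqrt{C}\,2^{-\kappa/4}$ and tail probability $\sqrt{C}\,2^{-\kappa/4}$, and invokes Borel--Cantelli to conclude. The per-coordinate Bernstein and McDiarmid detours you explore and then discard are indeed unnecessary; since the claim is only an almost-sure asymptotic statement, the expectation bound from \Cref{lem:fourier-concentration} combined with Markov already gives a summable tail.
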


\begin{proof}
For notational simplicity, let $q_{\kappa, A} = \sum_{i \in \{\pm 1\}^n}\abs{\frac1{2^n} - \E_{k \sim \{0, 1\}^\kappa}\left[\hat f_k(i)^2\right]}$. Applying \Cref{lem:fourier-concentration}, we have that for an absolute constant $C> 0$,
\[
\E_A\mbracket{q_{\kappa, A}}
\le \sum_{i \in \{\pm 1\}^n} C 2^{-n - \kappa / 2} = C 2^{-\kappa / 2}.
\]
Hence, by Markov's inequality:
\[
    \Pr_A\mbracket{ q_{\kappa, A} \ge \sqrt{C} 2^{-\kappa / 4}} \le \sqrt{C} 2^{-\kappa / 4}.
\]

By the Borel--Cantelli Lemma, because $\sum_{\kappa = 1}^\infty \sqrt{C} 2^{-\kappa / 4} \le O(1)$, we conclude that with probability $1$ over $A$, $q_{\kappa, A} \ge \sqrt{C} 2^{-\kappa / 4}$ for at most finitely many $\kappa \in \Naturals$. This is to say that $q_{\kappa, A} \le \negl(\kappa)$ with probability $1$ over $A$.
\end{proof}

For a given oracle $A$, let $\rho_A$ denote the mixed state obtained by conditionally averaging over all possible states $\ket{\psi}$ such that $(\ket{\psi}, A)$ was sampled from $\mathsf{H_1}$, i.e.\ the 2-Forrelation state $\ket{\varphi_{k^*}}$. That is, we define:
\begin{equation}
\label{eq:def_rho_A}
\rho_A \coloneqq \E_{\mathsf{H_1}} \mbracket{\ket{\psi}\bra{\psi} \mid A}.
\end{equation}

Likewise, define $\sigma_A$ analogously for $\mathsf{H_2}$, i.e.\ the phase state $\ket{\Phi_h}$:
\begin{equation}
\label{eq:def_sigma_A}
\sigma_A \coloneqq \E_{\mathsf{H_2}} \mbracket{\ket{\psi}\bra{\psi} \mid A}.
\end{equation}

Note that the above mixed states are the input states from the viewpoint of any algorithm $\mathcal{A}$ that operates on hybrids $\mathsf{H_1}$ and $\mathsf{H_2}$, respectively, after fixing the oracle $A$.

\begin{lemma}
\label{lem:tau_rho_sigma}
Let $\tau_A = \eps \rho_A + \left(1 - \eps\right)\frac{I}{2^n}$, where $\eps = \frac{1}{100n}$ is as in \Cref{def:forrelation}. Then with probability $1$ over $A$, $\TD(\sigma_A, \tau_A) \le \negl(\kappa)$.
\end{lemma}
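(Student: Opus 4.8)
The plan is to compute the mixed state $\sigma_A$ explicitly and compare it to $\tau_A$ coordinate-by-coordinate in the computational basis, using that both $\sigma_A$ and $\tau_A$ are diagonal in the Hadamard-rotated basis $\{H\ket{x}\}$. First I would recall that in Hybrid $\mathsf{H_2}$ the input state is $\ket{\Phi_h} = U_h \ket{+^n}$ where $h$ is a uniformly random Boolean function, \emph{independent} of the oracle $A$; hence $\sigma_A = \E_h[\ket{\Phi_h}\bra{\Phi_h}]$ is actually the same for every $A$, and a direct calculation shows it equals $I/2^n$ (a single copy of a random phase state is maximally mixed, as noted in the footnote after $\mathsf{H_3}$). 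Wait — but then $\sigma_A = I/2^n$ exactly, so the lemma would be asking us to show $\TD(I/2^n, \eps\rho_A + (1-\eps)I/2^n) = \eps\,\TD(\rho_A, I/2^n) \le \negl(\kappa)$, which requires $\rho_A$ itself to be close to maximally mixed. So the real content is to analyze $\rho_A$.

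The key step is therefore to compute $\rho_A = \E_{\mathsf{H_1}}[\ket{\varphi_{k^*}}\bra{\varphi_{k^*}} \mid A]$, where $k^* \sim \{0,1\}^\kappa$ is uniform, $\ket{\varphi_{k^*}} = U_{g_{k^*}} H U_{f_{k^*}}\ket{+^n}$, and — crucially — conditioning on $A$ fixes the functions $f_k$ but the pair $(f_{k^*}', g_{k^*}')$ was drawn from $\mathcal{F}_n$, so that $g_{k^*} = g_{k^*}' \cdot h$ with $g_{k^*}'$ depending on $\hat{f}_{k^*}$ via the truncated-bias rule of \Cref{def:forrelation}. I would expand $\ket{\varphi_{k^*}}$ in the $\{H\ket{z}\}$ basis: writing $\ket{\varphi_{k^*}} = U_{g_{k^*}} H \sum_z \hat{f}_{k^*}(z)\ket{z}$ and then pushing through another Hadamard, one sees $\rho_A$ is diagonal in the computational basis with diagonal entries controlled by the $\hat{f}_{k^*}(z)^2$ averaged over $k^*$ — and by \Cref{lem:fourier_f_squared_uniform_prob_1}, $\E_{k^*}[\hat{f}_{k^*}(z)^2]$ is within $\negl(\kappa)$ of $2^{-n}$ in $\ell_1$ over $z$, with probability $1$ over $A$. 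The subtlety is that $g_{k^*}$ is \emph{randomized given $f_{k^*}$} (both the $\mathcal{F}_n$-step and the random $h$), so I would average over $h$ first — $\E_h[U_h \Xi U_h]$ kills all off-diagonal (in computational basis) entries of any operator $\Xi$ — reducing $\rho_A$ to a diagonal operator, and then average over the $\mathcal{F}_n$-randomness of $g_{k^*}'$, which (being a pointwise independent coin flip per input) also only affects the already-diagonal structure. Collecting terms, $\rho_A = \diag\big(\E_{k^*}\sum_{z}\hat{f}_{k^*}(z)^2 |\braket{x|H|z}|^2 \cdot(\text{something})\big)$; by Parseval-type bookkeeping and the structure of $H$, the diagonal of $\rho_A$ should come out to within $\negl(\kappa)$ of the all-$2^{-n}$ vector in $\ell_1$, hence $\TD(\rho_A, I/2^n)\le\negl(\kappa)$ and a fortiori $\TD(\tau_A, I/2^n) = \eps\TD(\rho_A,I/2^n)\le\negl(\kappa)$; combined with $\sigma_A = I/2^n$ this gives the claim via the triangle inequality.

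The main obstacle I anticipate is the precise bookkeeping in computing $\rho_A$'s diagonal: one must carefully track that conditioning on $A$ in $\mathsf{H_1}$ does \emph{not} fix $(f_{k^*}', g_{k^*}')$ as independent uniform functions but rather as a $\mathcal{F}_n$-sample (so $g_{k^*}$'s distribution depends on $\hat{f}_{k^*}$), use \Cref{lem:fourier_bounded_prob_1} to ensure the truncation in \Cref{def:forrelation} is inactive (so $\E[g_{k^*}'(x) \mid f_{k^*}] = \sqrt{\eps 2^n}\,\hat{f}_{k^*}(x)$ exactly), and verify that after averaging over $h$ and over $g_{k^*}'$ only the $\hat{f}_{k^*}(z)^2$-weighted diagonal survives with no residual cross terms; then the two statistical inputs (\Cref{lem:fourier_bounded_prob_1} and \Cref{lem:fourier_f_squared_uniform_prob_1}, each holding with probability $1$) plug in to finish. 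A secondary point to handle cleanly is that "$\TD(\sigma_A,\tau_A)$" is what is asked, and rather than going through $I/2^n$ one could equivalently argue $\sigma_A - \tau_A = (1-\eps)(I/2^n - \rho_A)\cdot\tfrac{\eps}{1-\eps}$... \emph{no} — more simply, $\sigma_A = I/2^n$ and $\tau_A = \eps\rho_A + (1-\eps)I/2^n$ give $\sigma_A - \tau_A = \eps(I/2^n - \rho_A)$, so $\TD(\sigma_A,\tau_A) = \tfrac{\eps}{2}\|I/2^n - \rho_A\|_1 \le \eps\,\TD(\rho_A,I/2^n) \le \negl(\kappa)$, using $\eps = 1/(100n) \le 1$.
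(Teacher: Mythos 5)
Your proposal has a fatal structural error at the very first step: you claim that $\sigma_A = I/2^n$ because $h$ is ``uniformly random and independent of the oracle $A$.'' That independence holds \emph{marginally}, but $\sigma_A = \E_{\mathsf{H_2}}[\ket{\psi}\bra{\psi} \mid A]$ is a \emph{conditional} average, and $h$ is badly correlated with $A$. Indeed $A$ encodes $g_{k^*} = g'_{k^*} \cdot h$ where $(f'_{k^*}, g'_{k^*}) \sim \mathcal{F}_n$; after fixing $A$ (which fixes $g_{k^*}$), the conditional law of $h$ is $h = g_{k^*} \cdot g'_{k^*}$ with $g'_{k^*}$ drawn from $\mathcal{F}_n$ conditioned on $f = f_{k^*}$, i.e.\ a biased product distribution with biases $\sqrt{\eps 2^n}\,\hat f_{k^*}(x)$. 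Consequently $\sigma_A$ has genuine (nonzero) off-diagonal entries in the computational basis and is \emph{not} maximally mixed. The actual proof computes $\bra{i}\sigma_A\ket{j}$ using this conditional distribution and finds, using \Cref{lem:fourier_bounded_prob_1} to deactivate the truncation, that the off-diagonal entries of $\sigma_A$ are exactly $\eps$ times those of $\rho_A$ --- hence exactly equal to those of $\tau_A$ --- reducing the whole lemma to a comparison of diagonals, which is where \Cref{lem:fourier_f_squared_uniform_prob_1} enters.

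This first error then forces a second, independent error: having decided $\sigma_A = I/2^n$, you try to prove $\TD(\rho_A, I/2^n) \le \negl(\kappa)$, which is false. The state $\rho_A = \E_{k^*}[\ket{\varphi_{k^*}}\bra{\varphi_{k^*}}]$ is a mixture of only $2^\kappa$ pure states in a $2^n$-dimensional space with $n > \kappa$, so $\rho_A$ has rank at most $2^\kappa$ and therefore $\TD(\rho_A, I/2^n) \ge 1 - 2^{\kappa - n}$, which is close to $1$, not close to $0$. Your intuition that ``averaging over $h$ kills off-diagonal entries of $\rho_A$'' does not apply: once $A$ is conditioned upon, all the $g_k$ (hence the phases applied to build $\ket{\varphi_{k^*}}$) are deterministic, so there is no $h$-average inside $\rho_A$, and its off-diagonal entries $\E_{k^*}[g_{k^*}(i)g_{k^*}(j)\hat f_{k^*}(i)\hat f_{k^*}(j)]$ survive. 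The entire point of the lemma is that $\sigma_A$ retains \emph{exactly} the $\eps$-attenuated off-diagonal structure of $\rho_A$; your proposal discards that structure from both sides and cannot recover it.
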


\begin{proof}
First, it will be convenient to compute more explicit forms for $\rho_A$ and $\sigma_A$. Letting $\{(f_k, g_k)\}_{k \in \{0,1\}^\kappa}$ be the functions sampled in $A$, we can write:
\[
\rho_A = \E_{k^* \sim \{0,1\}^\kappa}\mbracket{U_{g_{k^*}} H U_{f_{k^*}} \ket{+^n}\bra{+^n} U_{f_{k^*}} H U_{g_{k^*}}}.
\]
Hence, it follows that individual entries of $\rho_A$ are given by:
\begin{equation}
\label{eq:rho_ij}
\bra{i} \rho_A \ket{j} = \E_{k^* \sim \{0,1\}^\kappa} \mbracket{g_{k^*}(i)g_{k^*}(j)\hat{f}_{k^*}(i)\hat{f}_{k^*}(j)},
\end{equation}
where we have used the fact that $\bra{i}H U_f \ket{+^n} = \hat{f}(i)$ for any Boolean function $f$.


Analogously, $\sigma_A$ may be expressed as:
\[
\sigma_A = \E_{k^* \sim \{0,1\}^\kappa} \mbracket{\E_{\mathcal{F}_n}\mbracket{U_{g_{k^*}} U_{g} \ket{+^n}\bra{+^n}U_{g} U_{g_{k^*}} \middle| f = f_{k^*}}},
\]
where the inner expectation denotes that we conditionally average over $(f, g) \sim \mathcal{F}_n$ conditioned on the event $f = f_{k^*}$.
This is identically distributed as $\sigma_A$ since $h = g_{k^*} \cdot g$.
It follows that the entries of $\sigma_A$ are:
\begin{equation}
\label{eq:sigma_ij}
\bra{i} \sigma_A \ket{j} = \E_{k^* \sim \{0,1\}^\kappa} \mbracket{\frac{g_{k^*}(i)g_{k^*}(j) \E_{\mathcal{F}_n}\mbracket{g(i)g(j) \mid f = f_{k^*}}}{2^n}}.
\end{equation}

Our strategy for bounding the expected distance between $\sigma_A$ and $\tau_A$ will be to consider the diagonal and off-diagonal entries separately. 

Fix $i \neq j$. Recall from \Cref{lem:fourier_bounded_prob_1} that with probability $1$ over $A$, for all sufficiently large $\kappa$, for all $k \in \{0,1\}^\kappa$ and $i \in \{0,1\}^n$, $\left|\hat{f}_k(i)\right| \le \frac{1}{\sqrt{\eps 2^n}}$. This implies that $\trnc\mparen{\sqrt{\eps 2^n} \hat{f}(i)} = \sqrt{\eps 2^n} \hat{f}(i)$ and $\trnc\mparen{\sqrt{\eps 2^n} \hat{f}(j)} =  \sqrt{\eps 2^n} \hat{f}(j)$, and therefore:
\begin{equation}
\label{eq:sigma_rho_off_diagonal_proportional}
\E_{\mathcal{F}_n}\mbracket{g(i)g(j) \mid f} = \eps 2^n \hat{f}(i)\hat{f}(j).
\end{equation}
By substituting \eqref{eq:sigma_rho_off_diagonal_proportional} into \eqref{eq:sigma_ij} and comparing with \eqref{eq:rho_ij}, it follows that $\bra{i} \sigma_A \ket{j} = \eps \bra{i} \rho_A \ket{j} = \bra{i} \tau_A \ket{j}$ for every $i \neq j$ (i.e.\ in this case, the off-diagonal entries of $\sigma_A$ and $\tau_A$ are exactly equal). Therefore, with probability $1$ over $A$, for sufficiently large $\kappa$ we have:

\[
\TD(\sigma_A, \tau_A) = \TVD(\diag(\sigma_A), \diag(\tau_A)).
\]
We bound this quantity via:
\begin{align*}
    \TVD(\diag(\sigma_A), \diag(\tau_A))
    &= \TVD(\diag(I/2^n), \diag(\tau_A))\\
    &= \eps \TVD(\diag(I/2^n), \diag(\rho_A))\\
    &= \frac{\eps}{2} \sum_{i \in \{\pm 1\}^n} \left|
\frac{1}{2^n} - \E_{k \sim \{0,1\}^\kappa}\left[ \hat{f}_{k}(i)^2 \right] \right|\\
    &\le \negl(\kappa),
\end{align*}
where in the first line we observe that \eqref{eq:sigma_ij} always evaluates to $\frac{1}{2^n}$ on the diagonal, in the second line we use the fact that $\tau_A$ is a convex combination of $\rho_A$ and $\frac{I}{2^n}$, in the third line we expand the TVD as a sum, and in the last line we appeal to \Cref{lem:fourier_f_squared_uniform_prob_1}, which holds with probability $1$ over $A$.
\end{proof}

\begin{corollary}
\label{cor:h2_mixture}
For all $\mathcal{A}$, $\eps p_\mathsf{1}(\mathcal{A}) + \left(1 - \eps\right) p_\mathsf{3}(\mathcal{A}) - p_\mathsf{2}(\mathcal{A}) \le \negl(\kappa)$.
\end{corollary}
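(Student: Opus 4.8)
The plan is to rewrite each of $p_\mathsf{1}$, $p_\mathsf{2}$, $p_\mathsf{3}$ as an expectation over a uniformly random oracle $A$ of the acceptance probability of $\mathcal{A}^{\mathcal{O}[A]}$ on an appropriate mixed input state, and then compare them using \Cref{lem:tau_rho_sigma}. Since a quantum algorithm's acceptance probability is affine in its input density matrix, and since the oracle is distributed identically --- and uniformly, by \Cref{claim:h0_h1} --- in $\mathsf{H_1}$, $\mathsf{H_2}$, and $\mathsf{H_3}$, the definitions \eqref{eq:def_rho_A} and \eqref{eq:def_sigma_A} immediately give
\[
p_\mathsf{1}(\mathcal{A}) = \E_A\mbracket{\Pr\mbracket{\mathcal{A}^{\mathcal{O}[A]}(1^\kappa, \rho_A) = 1}}, \qquad p_\mathsf{2}(\mathcal{A}) = \E_A\mbracket{\Pr\mbracket{\mathcal{A}^{\mathcal{O}[A]}(1^\kappa, \sigma_A) = 1}}.
\]

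The one new ingredient I need is the analogous formula $p_\mathsf{3}(\mathcal{A}) = \E_A\mbracket{\Pr\mbracket{\mathcal{A}^{\mathcal{O}[A]}(1^\kappa, I/2^n) = 1}}$. To see it, observe that in $\mathsf{H_3}$ every $g'_k$ (including $g'_{k^*}$) is uniform and independent of $h$, so $g_k = g'_k \cdot h$, and likewise $f_k = f'_k$, is uniform and independent of $h$; hence the whole oracle $A$ is independent of $h$ in $\mathsf{H_3}$. Therefore, conditioned on $A$, the input state $\ket{\Phi_h} = U_h\ket{+^n}$ has $h$ uniformly random, and $\E_h\mbracket{\ket{\Phi_h}\bra{\Phi_h}} = I/2^n$, since its $(i,j)$ entry is $\frac{1}{2^n}\E_h[h(i)h(j)] = \frac{1}{2^n}\ind\{i=j\}$ (this is the elementary ``random phase state $=$ maximally mixed state'' fact that replaces a full invocation of \cite{BS19-binary} in the single-copy regime). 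Averaging over $A$ and using affineness once more gives the claimed formula.

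With these three formulas in hand, set $\tau_A = \eps\rho_A + (1-\eps)\frac{I}{2^n}$ as in \Cref{lem:tau_rho_sigma}. Affineness gives $\Pr\mbracket{\mathcal{A}^{\mathcal{O}[A]}(\tau_A) = 1} = \eps\Pr\mbracket{\mathcal{A}^{\mathcal{O}[A]}(\rho_A) = 1} + (1-\eps)\Pr\mbracket{\mathcal{A}^{\mathcal{O}[A]}(I/2^n) = 1}$ for every $A$, so $\eps p_\mathsf{1}(\mathcal{A}) + (1-\eps)p_\mathsf{3}(\mathcal{A}) = \E_A\mbracket{\Pr\mbracket{\mathcal{A}^{\mathcal{O}[A]}(\tau_A) = 1}}$, and hence
\[
\eps p_\mathsf{1}(\mathcal{A}) + (1-\eps)p_\mathsf{3}(\mathcal{A}) - p_\mathsf{2}(\mathcal{A}) = \E_A\mbracket{\Pr\mbracket{\mathcal{A}^{\mathcal{O}[A]}(\tau_A) = 1} - \Pr\mbracket{\mathcal{A}^{\mathcal{O}[A]}(\sigma_A) = 1}} \le \E_A\mbracket{\TD(\sigma_A, \tau_A)}.
\]

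The hard part is the last inequality: I need $\E_A\mbracket{\TD(\sigma_A, \tau_A)} \le \negl(\kappa)$, and \Cref{lem:tau_rho_sigma} only gives $\TD(\sigma_A, \tau_A) \le \negl(\kappa)$ with probability $1$ over $A$, whose rate of convergence may a priori depend on $A$; so I cannot use it as a black box. Instead I would reuse the quantitative estimates from the proofs: on the event $\mathcal{G}_\kappa$ that $|\hat f_k(i)| \le 1/\sqrt{\eps 2^n}$ for all relevant $k, i$, the proof of \Cref{lem:tau_rho_sigma} shows $\TD(\sigma_A, \tau_A) = \frac{\eps}{2}\sum_{i \in \{\pm 1\}^n}\big|\frac{1}{2^n} - \E_{k}[\hat f_k(i)^2]\big|$, whose expectation over $A$ is $O(\eps \cdot 2^{-\kappa/2})$ by \Cref{lem:fourier-concentration}; while off $\mathcal{G}_\kappa$ we trivially have $\TD(\sigma_A, \tau_A) \le 1$, and $\Pr_A[\neg\mathcal{G}_\kappa] \le 2^{\kappa+1-49n} \le \negl(\kappa)$ by the union bound in the proof of \Cref{lem:fourier_bounded_prob_1} (using $n > \kappa$). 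Adding these two contributions gives $\E_A\mbracket{\TD(\sigma_A, \tau_A)} \le \negl(\kappa)$, proving the corollary --- in fact with the two-sided bound $\abs{\eps p_\mathsf{1}(\mathcal{A}) + (1-\eps)p_\mathsf{3}(\mathcal{A}) - p_\mathsf{2}(\mathcal{A})} \le \negl(\kappa)$.
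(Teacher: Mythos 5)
Your proof is correct and takes essentially the same route as the paper's: express $p_\mathsf{1}$, $p_\mathsf{2}$, $p_\mathsf{3}$ as expectations over $A$ of the acceptance probabilities on the mixed inputs $\rho_A$, $\sigma_A$, $I/2^n$, use linearity of the quantum channel $\mathcal{A}$ to merge $\eps\rho_A + (1-\eps)I/2^n$ into $\tau_A$, and bound the difference by $\E_A\mbracket{\TD(\tau_A,\sigma_A)}$. You are in fact a bit more careful than the paper on the last step: the paper cites \Cref{lem:tau_rho_sigma} (a probability-$1$ statement whose negligible rate could a priori depend on $A$) as though it directly gives the expectation bound, whereas you correctly notice this is not literally a black-box application and close the gap by conditioning on the Fourier-boundedness event and reusing the quantitative estimates in the proofs of \Cref{lem:fourier_bounded_prob_1} and \Cref{lem:fourier-concentration}.
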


\begin{proof}
By definition of $\rho_A$ \eqref{eq:def_rho_A} and $\sigma_A$ \eqref{eq:def_sigma_A}, we have that:
\[
p_\mathsf{1}(\mathcal{A}) = \Pr_A\mbracket{\mathcal{A}^{\mathcal{O}[A]}\left(1^\kappa, \rho_A\right) = 1}
\]
and 
\[
p_\mathsf{2}(\mathcal{A}) = \Pr_A\mbracket{\mathcal{A}^{\mathcal{O}[A]}\left(1^\kappa, \sigma_A\right) = 1}.
\]
Additionally, observe that
\begin{equation}
\label{eq:p3_I}
p_\mathsf{3}(\mathcal{A}) = \Pr_A\mbracket{\mathcal{A}^{\mathcal{O}[A]}\left(1^\kappa, \frac{I}{2^n}\right) = 1},
\end{equation}
just because, under $\mathsf{H_3}$, $A$ is uniformly random, $h$ is chosen independently of $A$, and
\[
\E_{h : \{\pm 1\}^n \to \{\pm 1\}}\mbracket{\ket{\Phi_h} \bra{\Phi_h}} = \frac{I}{2^n}.
\]
Letting $q$ be the quantity that we wish to bound in the statement of the Corollary, we find that:
\begin{align*}
q &\coloneqq \eps p_\mathsf{1}(\mathcal{A}) + \left(1 - \eps\right) p_\mathsf{3}(\mathcal{A}) - p_\mathsf{2}(\mathcal{A})\\
&= \E_A \mbracket{\Pr\mbracket{\mathcal{A}^{\mathcal{O}[A]}\mparen{1^\kappa, \eps \rho_A + \left(1 - \eps\right) \frac{I}{2^n}} = 1} - \Pr\mbracket{\mathcal{A}^{\mathcal{O}[A]}\mparen{1^\kappa, \sigma_A} = 1}}\\
&= \E_A \mbracket{\Pr\mbracket{\mathcal{A}^{\mathcal{O}[A]}\mparen{1^\kappa, \tau_A} = 1} - \Pr\mbracket{\mathcal{A}^{\mathcal{O}[A]}\mparen{1^\kappa, \sigma_A} = 1}}\\
&\le \E_A\mbracket{\TD(\tau_A, \sigma_A)}\\
&\le \negl(\kappa),
\end{align*}
where in the second line we used the fact that $\mathcal{A}$ is a quantum operation, and hence is linear in its input state; in the third line we substituted the definition of $\tau_A$ in \Cref{lem:tau_rho_sigma}; in the fourth line we appeal to the operational interpretation of trace distance as the maximum bias by which two quantum states can be distinguished; and in the last line we apply \Cref{lem:tau_rho_sigma}.
\end{proof}

This next theorem was essentially proved in \cite[Section 4.2]{AIK21-acrobatics}, with some minor differences in language and choice of parameters. Most of these differences stem from the fact that \cite{AIK21-acrobatics} considered a decision problem called $\textsc{OR} \circ \textsc{Forrelation}$, whereas here we consider a distinguishing task.


\begin{restatable}{theorem}{thmappendix}
\label{thm:h2_h3}
For all polynomial-time quantum adversaries $\mathcal{A}$, $p_\mathsf{2}(\mathcal{A}) - p_\mathsf{3}(\mathcal{A}) \le \negl(\kappa)$.
\end{restatable}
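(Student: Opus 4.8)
The plan is to reduce the distinguishing task between Hybrids $\mathsf{H_2}$ and $\mathsf{H_3}$ to the $\OR\circ\textsc{Forrelation}$ indistinguishability result of \cite{AIK21-acrobatics}, combined with our pseudorandomness theorem for the discrete Forrelation distribution (\Cref{thm:forrelation_AC0_pseudorandom}). The crucial structural point is that in both $\mathsf{H_2}$ and $\mathsf{H_3}$, the input state handed to the adversary is $\ket{\Phi_h}=U_h\ket{+^n}$, which is sampled from a function $h$ that is \emph{independent} of the oracle $A$: in $\mathsf{H_2}$ we have $g_{k^*}=g'_{k^*}\cdot h$ with $(f'_{k^*},g'_{k^*})\sim\mathcal{F}_n$, and in $\mathsf{H_3}$ we have $g_{k^*}=g'_{k^*}\cdot h$ with $(f'_{k^*},g'_{k^*})$ uniform; but the distribution of $h$ itself is uniform in both cases, and the input state depends only on $h$. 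So the entire difference between the two hybrids lies in how the oracle-part $\{(f'_k,g'_k)\}_k$ of $A$ is distributed: in $\mathsf{H_2}$ one secretly chosen pair $(f'_{k^*},g'_{k^*})$ is drawn from $\mathcal{F}_n$, while in $\mathsf{H_3}$ every pair is uniform.

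First I would fix the input state $\ket{\Phi_h}$ (equivalently, fix $h$) and argue that conditioned on $h$, a polynomial-time $\mathsf{BQP}^{\mathcal{O}[A]}$ adversary distinguishing $\mathsf{H_2}$ from $\mathsf{H_3}$ yields a $\mathsf{BQP}^{\mathsf{PH}^A}$ algorithm—and hence, via the standard $\mathsf{PH}\leftrightarrow\mathsf{AC^0}$ correspondence (\Cref{lem:PH_to_AC0}, and the simulation of a $\mathsf{BQP}$ algorithm with $\mathsf{PH}$ queries by a low-degree polynomial / quasipolynomial-size $\mathsf{AC^0}$ circuit as in \cite{AIK21-acrobatics})—a quasipolynomial-size constant-depth circuit that distinguishes the truth table $\{\ttable(f'_k),\ttable(g'_k)\}_k$ with one planted $\mathcal{F}_n$-pair from the all-uniform truth table. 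Here the classical oracle $B$ is absorbed into the $\mathsf{AC^0}$ circuit because each of its bits is computable by an exponential-size $\mathsf{AC^0}$ circuit in $A$; the resulting circuit acts on the exponentially-many input bits $A(\cdot,k,\cdot)$ for $k\in\{0,1\}^\kappa$ and also reads the (now constant) bits of $h$. Next I would invoke the key lemma from \cite[Section~4.2]{AIK21-acrobatics}: an $\mathsf{AC^0}$ circuit of size $2^{\poly(\kappa)}$ and constant depth cannot distinguish, with advantage better than $\negl(\kappa)$, a list of $2^\kappa$ uniform pairs of $n$-bit truth tables from the same list with one randomly-chosen pair replaced by a draw from $\mathcal{F}_n$. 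This last statement is exactly a hybrid over the $2^\kappa$ keys of the single-instance bound that $\mathcal{F}_n$ fools $\mathsf{AC^0}[2^{\poly(n)},O(1)]$ circuits with advantage $\poly(n)/\sqrt{2^n}\le\negl(\kappa)$, which is our \Cref{thm:forrelation_AC0_pseudorandom} (using $n>\kappa$); the union over keys costs only a factor $2^\kappa$, still $\negl(\kappa)$. Finally I would average back over $h$: since the bound holds for every fixed $h$, it holds in expectation, giving $p_\mathsf{2}(\mathcal{A})-p_\mathsf{3}(\mathcal{A})\le\negl(\kappa)$.

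The main obstacle is the first step: faithfully converting a $\mathsf{BQP}^{\mathcal{O}[A]}$ adversary that \emph{also holds a quantum state depending on $h$} into a bona fide $\mathsf{AC^0}$ distinguisher of truth tables. One has to (i) hard-wire $h$ (so the quantum input state becomes a fixed circuit with no oracle dependence), (ii) replace the recursive oracle $B$ by its $\mathsf{AC^0}$-circuit description in terms of $A$, being careful about the length-$\sqrt{\ell}$ query restriction so that the depth stays constant and the size stays $2^{\poly(\kappa)}$, and (iii) apply the standard ``$\mathsf{BQP}$ query algorithm $\Rightarrow$ bounded $L_{1,2}$ / low-degree-approximating polynomial $\Rightarrow$ $\mathsf{AC^0}$'' machinery so that the relevant Fourier/$L_{1,2}$ bound needed by \Cref{thm:forrelation_AC0_pseudorandom} applies to the induced circuit. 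All of this is present in \cite{AIK21-acrobatics} for the $\OR\circ\textsc{Forrelation}$ decision problem, so the work is mostly to re-present it for the distinguishing formulation and to track that the various $\poly$ and $\negl$ quantities compose correctly; the only genuinely new ingredient, the discrete Forrelation distribution's $\mathsf{AC^0}$-pseudorandomness, has already been established in \Cref{thm:forrelation_AC0_pseudorandom}. Since \Cref{thm:h2_h3} is marked \texttt{restatable} and deferred to \Cref{app:proof_of_h2_h3}, I would carry out the detailed circuit-simulation bookkeeping there.
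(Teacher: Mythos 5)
Your high-level skeleton (fix $h$, reduce to distinguishing oracle distributions, invoke $\mathsf{AC^0}$-pseudorandomness of $\mathcal{F}_n$) is the right shape, and you correctly identify that the difference between $\mathsf{H_2}$ and $\mathsf{H_3}$ lives entirely in how the oracle part $\{(f'_k, g'_k)\}_k$ of $A$ is distributed. However, the central technical step you wave at --- converting a polynomial-time $\mathsf{BQP}^{\mathcal{O}[A]}$ adversary, with $h$ fixed, into an $\mathsf{AC^0}[2^{\poly(\kappa)}, O(1)]$ circuit (or a bounded-$L_{1,2}$ function) on the truth-table bits $\{\ttable(f'_k), \ttable(g'_k)\}_k$ so that \Cref{thm:forrelation_AC0_pseudorandom} or \Cref{thm:new_BQP_PH_General} applies directly --- is not sound, and it is also not what \cite{AIK21-acrobatics} does. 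A quantum query algorithm is not an $\mathsf{AC^0}$ circuit, and there is no standard ``$\mathsf{BQP}$ query algorithm $\Rightarrow$ bounded $L_{1,2}$ $\Rightarrow$ $\mathsf{AC^0}$'' machinery: indeed, the very reason Forrelation separates $\mathsf{BQP}$ from $\mathsf{PH}$ is that the BQP part evades such conversions. Composing a $T$-query quantum algorithm with the exponential-size $\mathsf{AC^0}$ circuits computing each bit of $B$ does not yield something whose $L_{1,2}$ you control, and you never argue that it does.

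What the paper actually does is keep the BQP adversary intact and argue via the quantitative BBBV lemma (\Cref{lem:average_case_bbbv}). One writes $p_\mathsf{2}(\mathcal{A}) - p_\mathsf{3}(\mathcal{A})$ as an expectation, over $(A,h)\sim\mathsf{H_3}$, of the change in the adversary's acceptance probability when a single row $(f'_{k^*}, g'_{k^*})$ is resampled from $\mathcal{F}_n$; \Cref{lem:average_case_bbbv} shows this change is small provided each queried oracle bit $\mathcal{O}[A](x)$ flips with small probability under the resampling. Each such bit is an $\mathsf{AC^0}$ function of $A$ (\Cref{lem:PH_to_AC0}), so one applies an $\mathsf{AC^0}$ block-sensitivity concentration bound (\Cref{lem:ac0_block_averaged_sensitivity_tail_bound}), together with \Cref{thm:forrelation_AC0_pseudorandom} to transfer it from the uniform resampling to the $\mathcal{F}_n$ resampling (\Cref{lem:ac0_single_forrelated_block_indistinguishable_b}). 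In other words, the $\mathsf{AC^0}$ pseudorandomness of $\mathcal{F}_n$ is used at the level of \emph{individual oracle bits}, not at the level of the adversary's overall acceptance probability; the BQP part is handled entirely by BBBV. Your proposal omits the BBBV step and so has no mechanism for controlling the quantum part of the adversary. (A minor further point: your ``union over keys costs only a factor $2^\kappa$'' is both unnecessary --- since $k^*$ is random one can simply average --- and would be too lossy if one really needed it, as the paper only assumes $n \ge \kappa+1$, not $n \gg 2\kappa$.)
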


For completeness, we provide proof of \Cref{thm:h2_h3} in \Cref{app:proof_of_h2_h3}.


\begin{corollary}
\label{cor:h1_h2}
For all polynomial-time quantum adversaries $\mathcal{A}$, $p_\mathsf{1}(\mathcal{A}) - p_\mathsf{2}(\mathcal{A}) \le \negl(\kappa)$.
\end{corollary}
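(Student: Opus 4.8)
The plan is purely algebraic: combine \Cref{cor:h2_mixture} and \Cref{thm:h2_h3}, being mindful only that dividing by $\eps$ does not destroy negligibility. Recall that \Cref{cor:h2_mixture} can be read as the lower bound $p_\mathsf{2}(\mathcal{A}) \ge \eps p_\mathsf{1}(\mathcal{A}) + (1-\eps)p_\mathsf{3}(\mathcal{A}) - \negl(\kappa)$, morally saying that to the adversary $\mathsf{H_2}$ looks at least as accepting as the mixture $\eps \mathsf{H_1} + (1-\eps)\mathsf{H_3}$, while \Cref{thm:h2_h3} gives the one-sided bound $p_\mathsf{2}(\mathcal{A}) - p_\mathsf{3}(\mathcal{A}) \le \negl(\kappa)$.

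First, rearrange \Cref{cor:h2_mixture} to isolate $p_\mathsf{1}$, obtaining $\eps p_\mathsf{1}(\mathcal{A}) \le p_\mathsf{2}(\mathcal{A}) - (1-\eps)p_\mathsf{3}(\mathcal{A}) + \negl(\kappa)$. Substitute the bound $p_\mathsf{2}(\mathcal{A}) \le p_\mathsf{3}(\mathcal{A}) + \negl(\kappa)$ from \Cref{thm:h2_h3} to get $\eps p_\mathsf{1}(\mathcal{A}) \le \eps p_\mathsf{3}(\mathcal{A}) + \negl(\kappa)$. Dividing by $\eps = \tfrac{1}{100n}$ and using that $1/\eps = 100n \le \poly(\kappa)$ — so that $\negl(\kappa)/\eps$ is still negligible, the one place any care is needed and exactly where the hypothesis $n \le \poly(\kappa)$ enters — yields $p_\mathsf{1}(\mathcal{A}) - p_\mathsf{3}(\mathcal{A}) \le \negl(\kappa)$.

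Finally, use the identity $p_\mathsf{1}(\mathcal{A}) - p_\mathsf{2}(\mathcal{A}) = (1-\eps)\bigl(p_\mathsf{1}(\mathcal{A}) - p_\mathsf{3}(\mathcal{A})\bigr) + \bigl(\eps p_\mathsf{1}(\mathcal{A}) + (1-\eps)p_\mathsf{3}(\mathcal{A}) - p_\mathsf{2}(\mathcal{A})\bigr)$. The second parenthesized term is at most $\negl(\kappa)$ by \Cref{cor:h2_mixture}, and the first is at most $(1-\eps)\negl(\kappa) \le \negl(\kappa)$ by the bound just derived. Hence $p_\mathsf{1}(\mathcal{A}) - p_\mathsf{2}(\mathcal{A}) \le \negl(\kappa)$, as claimed. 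The only genuine content is in \Cref{cor:h2_mixture} and \Cref{thm:h2_h3}; the remaining obstacle is merely to verify that the polynomial factor $1/\eps$ does not spoil negligibility, which holds because $n$ is polynomially bounded in $\kappa$.
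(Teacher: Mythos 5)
Your proof is correct and takes essentially the same route as the paper's: a purely algebraic combination of \Cref{cor:h2_mixture} and \Cref{thm:h2_h3}, with the only subtle point being that $1/\eps \le \poly(\kappa)$ so that dividing by $\eps$ preserves negligibility. The paper's version is slightly more compact—it adds $(1-\eps)(p_\mathsf{2} - p_\mathsf{3})$ to \Cref{cor:h2_mixture} to obtain $\eps(p_\mathsf{1} - p_\mathsf{2}) \le \negl(\kappa)$ in one step, whereas you detour through $p_\mathsf{1} - p_\mathsf{3} \le \negl(\kappa)$ and then reassemble via a telescoping identity—but the two derivations are algebraically equivalent.
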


\begin{proof}
Recall from \Cref{cor:h2_mixture} that
\[
\eps p_\mathsf{1}(\mathcal{A}) + \left(1 - \eps\right) p_\mathsf{3}(\mathcal{A}) - p_\mathsf{2}(\mathcal{A}) \le \negl(\kappa).
\]
Adding $\left(1 - \eps\right)\left( p_\mathsf{2}(\mathcal{A}) - p_\mathsf{3}(\mathcal{A}) \right)$ to both sides and applying \Cref{thm:h2_h3} gives
\[
\eps \left( p_\mathsf{1}(\mathcal{A}) - p_\mathsf{2}(\mathcal{A}) \right) \le \negl(\kappa).
\]
Multiplying through by $\frac{1}{\eps}$ yields
\[
p_\mathsf{1}(\mathcal{A}) - p_\mathsf{2}(\mathcal{A}) \le \frac{\negl(\kappa)}{\eps} \le \negl(\kappa),
\]
because $\frac{1}{\eps} \le O\left(n\right) \le \poly(\kappa)$.
\end{proof}

Finally, we prove indistinguishability of the remaining two hybrids.

\begin{claim}
\label{claim:h3_h4}
For all $\mathcal{A}$, $p_\mathsf{3}(\mathcal{A}) = p_\mathsf{4}(\mathcal{A})$.
\end{claim}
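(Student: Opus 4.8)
The plan is to reduce this equality to the elementary fact that a single copy of a uniformly random binary phase state is the maximally mixed state, which was in effect already recorded in the proof of \Cref{cor:h2_mixture} via \eqref{eq:p3_I}. First I would observe that in both $\mathsf{H_3}$ and $\mathsf{H_4}$ the oracle $A$ is sampled uniformly at random and is independent of the input state $\ket{\psi}$. For $\mathsf{H_4}$ this is immediate. For $\mathsf{H_3}$ it requires a small remark: although the random function $h$ is used both to form the challenge state $\ket{\Phi_h}$ and to set $g_k = g'_k \cdot h$, the functions $\{(f_k, g_k)\}_k$ (and the rest of $A$) are nonetheless marginally distributed uniformly at random and independently of $h$, since each $g'_k$ is uniform and independent of $h$, so $g'_k \cdot h$ is uniform and independent of $h$, and $f_k = f'_k$ is uniform and independent of $h$. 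Hence in $\mathsf{H_3}$ the pair $(\ket{\psi}, A)$ consists of $\ket{\psi} = \ket{\Phi_h}$ together with a uniformly random $A$ that is independent of $h$ (and therefore of $\ket{\psi}$), exactly as in $\mathsf{H_4}$.

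Next, I would use that for any fixed oracle $A$, the adversary's acceptance probability $\Pr\mbracket{\mathcal{A}^{\mathcal{O}[A]}\mparen{1^\kappa, \ket{\psi}} = 1}$ is linear in $\ket{\psi}\bra{\psi}$ (it equals $\mathrm{tr}(M_A \ket{\psi}\bra{\psi})$ for the ``accept'' POVM element $M_A$ determined by $\mathcal{A}$ and $A$). So, after conditioning on $A$, the acceptance probability depends on the challenge only through $\E\mbracket{\ket{\psi}\bra{\psi} \mid A}$. In $\mathsf{H_3}$ this reduced state is $\E_{h}\mbracket{\ket{\Phi_h}\bra{\Phi_h}} = \frac{I}{2^n}$, exactly as computed for \eqref{eq:p3_I} (averaging the uniformly random signs $h(x)\in\{\pm1\}$ kills the off-diagonal entries and leaves $\frac{1}{2^n}$ on the diagonal). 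In $\mathsf{H_4}$ the reduced state is $\E_{\ket{\psi}\sim\mu_{\mathrm{Haar}}^n}\mbracket{\ket{\psi}\bra{\psi}} = \frac{I}{2^n}$ by unitary invariance of the Haar measure: this average is a density matrix that commutes with every unitary, hence is $\frac{I}{2^n}$. Since these two reduced states agree for every $A$, taking $\E_A$ of both conditional acceptance probabilities yields $p_\mathsf{3}(\mathcal{A}) = p_\mathsf{4}(\mathcal{A})$ exactly.

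There is essentially no obstacle here: this is precisely the ``simpler observation'' flagged in the footnote to the discussion of $\mathsf{H_4}$ (we do not need the full strength of \cite{BS19-binary} since only one copy of the challenge is given), and the only point requiring a moment's care is the marginal independence of $A$ from $h$ in $\mathsf{H_3}$, which I would spell out as above.
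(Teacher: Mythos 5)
Your proof is correct and follows the same route as the paper: both reduce the equality to the facts that in each of $\mathsf{H_3}$ and $\mathsf{H_4}$ the oracle $A$ is uniformly random and independent of the challenge state, and that $\E_{h}\mbracket{\ket{\Phi_h}\bra{\Phi_h}} = \E_{\ket{\psi}\sim\mu_{\mathrm{Haar}}^n}\mbracket{\ket{\psi}\bra{\psi}} = \tfrac{I}{2^n}$ (as already recorded in \eqref{eq:p3_I}). Your explicit justification of the marginal independence of $A$ from $h$ in $\mathsf{H_3}$ is a useful spelling-out of a step the paper only asserts, but it is not a different argument.
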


\begin{proof}
Recall from \eqref{eq:p3_I} in the proof of \Cref{cor:h2_mixture} that
\[
p_\mathsf{3}(\mathcal{A}) = \Pr_A\mbracket{\mathcal{A}^{\mathcal{O}[A]}\mparen{1^\kappa, \frac{I}{2^n}} = 1}.
\]
To complete the proof, notice that
\[
p_\mathsf{4}(\mathcal{A}) = \Pr_{A, \ket{\psi} \sim \mu_{\mathrm{Haar}}^n}\mbracket{\mathcal{A}^{\mathcal{O}[A]}\mparen{1^\kappa, \ket{\psi}} = 1} = \Pr_A\mbracket{\mathcal{A}^{\mathcal{O}[A]}\mparen{1^\kappa, \frac{I}{2^n}} = 1}
\]
as well, which follows from the fact that
\[
\E_{\ket{\psi} \sim \mu_{\mathrm{Haar}}^n}\mbracket{\ket{\psi}\bra{\psi}} = \frac{I}{2^n}.\qedhere
\]
\end{proof}

We can now show that the distinguishing advantage of any efficient adversary is negligible when averaged over the random oracle $A$.

\begin{theorem}
\label{thm:prs_advantage_small}
Let $\mathcal{A}$ be a polynomial-time quantum adversary. Then:
\[
\E_{A}\mbracket{\Pr_{k^* \sim \{0,1\}^\kappa} \mbracket{\mathcal{A}^{\mathcal{O}[A]}\mparen{1^\kappa, \ket{\varphi_{k^*}}} = 1} - \Pr_{\ket{\psi} \sim \mu_{\mathrm{Haar}}^n} \mbracket{\mathcal{A}^{\mathcal{O}[A]}\mparen{1^\kappa, \ket{\psi}} = 1}} \le \negl(\kappa).
\]
\end{theorem}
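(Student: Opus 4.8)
The plan is to recognize that the left-hand side of the statement is exactly $p_\mathsf{0}(\mathcal{A}) - p_\mathsf{4}(\mathcal{A})$ and then telescope through the five hybrids using the claims already established. Unwinding \Cref{def:prs_oracle_ensemble} together with the description of $\mathsf{H_0}$, drawing $(\ket{\psi}, A) \sim \mathsf{H_0}$ amounts to sampling a uniformly random oracle $A$ and an independent uniform key $k^* \sim \{0,1\}^\kappa$ and setting $\ket{\psi} = \ket{\varphi_{k^*}}$, so by linearity of expectation $p_\mathsf{0}(\mathcal{A}) = \E_A\mbracket{\Pr_{k^* \sim \{0,1\}^\kappa}\mbracket{\mathcal{A}^{\mathcal{O}[A]}(1^\kappa, \ket{\varphi_{k^*}}) = 1}}$. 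Similarly, $\mathsf{H_4}$ draws a uniform $A$ together with an independent Haar-random $\ket{\psi}$, so $p_\mathsf{4}(\mathcal{A}) = \E_A\mbracket{\Pr_{\ket{\psi} \sim \mu_{\mathrm{Haar}}^n}\mbracket{\mathcal{A}^{\mathcal{O}[A]}(1^\kappa, \ket{\psi}) = 1}}$. Hence the expression in the theorem equals $p_\mathsf{0}(\mathcal{A}) - p_\mathsf{4}(\mathcal{A})$.

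Next I would write (suppressing the argument $\mathcal{A}$)
\[
p_\mathsf{0}(\mathcal{A}) - p_\mathsf{4}(\mathcal{A}) = \mparen{p_\mathsf{0} - p_\mathsf{1}} + \mparen{p_\mathsf{1} - p_\mathsf{2}} + \mparen{p_\mathsf{2} - p_\mathsf{3}} + \mparen{p_\mathsf{3} - p_\mathsf{4}}
\]
and bound each summand in turn: the first vanishes by \Cref{claim:h0_h1}; the second is at most $\negl(\kappa)$ by \Cref{cor:h1_h2}; the third is at most $\negl(\kappa)$ by \Cref{thm:h2_h3}; and the fourth vanishes by \Cref{claim:h3_h4}. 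Adding the four bounds yields $p_\mathsf{0}(\mathcal{A}) - p_\mathsf{4}(\mathcal{A}) \le \negl(\kappa)$, which is the claimed inequality.

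At this point there is essentially no obstacle left: the real content has already been carried out in \Cref{lem:tau_rho_sigma} and \Cref{cor:h2_mixture} (which control the gap between $\mathsf{H_1}$ and $\mathsf{H_2}$ through the trace-distance bound on $\sigma_A$ versus $\tau_A$) and in \Cref{thm:h2_h3} (the adaptation of the $\OR \circ \textsc{Forrelation}$ lower bound of \cite{AIK21-acrobatics}). The only points worth double-checking are that the telescoping sum is one-sided in the direction matching the statement — the two-sided/absolute-value form required by \Cref{def:single_copy_prs} then follows by running the same argument on the adversary that outputs the complement of $\mathcal{A}$'s bit — and that the $\tfrac1\eps = O(n) = \poly(\kappa)$ factor lost when deducing \Cref{cor:h1_h2} from \Cref{cor:h2_mixture} preserves negligibility, which it does since a polynomial times a negligible function is still negligible. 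From this expectation bound, \Cref{thm:single_copy_prs_prob_1} then follows by a standard Markov plus Borel--Cantelli argument over the countably many uniform adversaries, though that is a separate step.
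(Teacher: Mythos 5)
Your proof is correct and matches the paper's own proof of \Cref{thm:prs_advantage_small}: identify the left-hand side as $p_\mathsf{0}(\mathcal{A}) - p_\mathsf{4}(\mathcal{A})$, telescope through the four hybrid gaps, and invoke \Cref{claim:h0_h1}, \Cref{cor:h1_h2}, \Cref{thm:h2_h3}, and \Cref{claim:h3_h4} in turn. The extra remarks you add (one-sidedness, the $1/\eps$ factor) are accurate but are handled by the paper in \Cref{cor:h1_h2} and \Cref{cor:prs_abs_advantage_small} rather than here.
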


\begin{proof}
Observe that the quantity that we wish to bound is exactly $p_\mathsf{0}(\mathcal{A}) - p_\mathsf{4}(\mathcal{A})$. From \Cref{claim:h0_h1}, \Cref{cor:h1_h2}, \Cref{thm:h2_h3}, and \Cref{claim:h3_h4}, we know that $p_i(\mathcal{A}) - p_{i+1}(\mathcal{A}) \le \negl(\kappa)$ for every $i \in \{\mathsf{0}, \mathsf{1}, \mathsf{2}, \mathsf{3}\}$. Summing these bounds then gives the desired result.
\end{proof}

Using techniques similar to Yao's distinguisher/predictor lemma \cite{Yao82}, this also yields a bound on the absolute advantage of any adversary. The rough idea is that $\mathcal{A}$ can try to guess the sign of its own distinguishing advantage.

\begin{corollary}
\label{cor:prs_abs_advantage_small}
Let $\mathcal{A}$ be a polynomial-time quantum adversary. Then:
\[
\E_{A} \left|\Pr_{k^* \sim \{0,1\}^\kappa} \mbracket{\mathcal{A}^{\mathcal{O}[A]}\mparen{1^\kappa, \ket{\varphi_{k^*}}} = 1} - \Pr_{\ket{\psi} \sim \mu_{\mathrm{Haar}}^n} \mbracket{\mathcal{A}^{\mathcal{O}[A]}\mparen{1^\kappa, \ket{\psi}} = 1}\right| \le \negl(\kappa).
\]
\end{corollary}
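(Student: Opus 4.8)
The plan is to run a Yao-style distinguisher-to-predictor argument: from the fixed adversary $\mathcal{A}$ we build a new polynomial-time adversary $\mathcal{B}$ whose \emph{signed} advantage (the quantity controlled by \Cref{thm:prs_advantage_small}) is, up to an arbitrarily small polynomial loss, equal to $\E_A$ of the \emph{absolute} advantage of $\mathcal{A}$; then we apply \Cref{thm:prs_advantage_small} to $\mathcal{B}$. Write $d(A) \coloneqq \Pr_{k^*}\mbracket{\mathcal{A}^{\mathcal{O}[A]}(1^\kappa, \ket{\varphi_{k^*}}) = 1} - \Pr_{\ket{\psi} \sim \mu_{\mathrm{Haar}}^n}\mbracket{\mathcal{A}^{\mathcal{O}[A]}(1^\kappa, \ket{\psi}) = 1}$ and denote its two terms by $p(A)$ and $q(A)$; the goal is to show $\E_A|d(A)| \le \negl(\kappa)$.

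The key point is that $\mathcal{B}$ can \emph{estimate} $d(A)$ by itself, using only oracle access to $\mathcal{O}[A]$ and the public generation algorithm $G$. It can sample $k' \sim \{0,1\}^\kappa$, prepare $\ket{\varphi_{k'}} = G(k')$, and run $\mathcal{A}$ on it to obtain a sample of $\mathrm{Bernoulli}(p(A))$; and it can prepare the maximally mixed state $I/2^n$ (e.g.\ by choosing a uniformly random computational basis state, or keeping half of $n$ EPR pairs) and run $\mathcal{A}$ on it to obtain a sample of $\mathrm{Bernoulli}(q(A))$, using the fact that $\E_{\ket\psi \sim \mu_{\mathrm{Haar}}^n}\mbracket{\ket\psi\bra\psi} = I/2^n$, so that $\mathcal{A}$ accepts $I/2^n$ with probability exactly $q(A)$ (as already exploited in the proof of \Cref{claim:h3_h4}). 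Averaging $T = \poly(\kappa)$ independent such samples and subtracting yields an estimate $\tilde d$ with $|\tilde d - d(A)| \le T^{-1/3}$ except with probability $\negl(\kappa)$, by Hoeffding's inequality (\Cref{fact:hoeffding}). The adversary $\mathcal{B}$, on its actual challenge state $\ket{\psi}$, computes such a $\tilde d$ with fresh internal randomness, runs $\mathcal{A}^{\mathcal{O}[A]}(1^\kappa, \ket{\psi})$ to obtain a bit $b$, and outputs $b$ if $\tilde d \ge 0$ and $1 - b$ otherwise. A short computation shows that the signed advantage of $\mathcal{B}$ on oracle $A$ is exactly $d(A)\cdot\bigl(2\Pr[\tilde d \ge 0] - 1\bigr)$: whenever $|d(A)| \ge 2T^{-1/3}$ the sign of $\tilde d$ matches that of $d(A)$ except with negligible probability, so the contribution is at least $|d(A)| - \negl(\kappa)$, while whenever $|d(A)| < 2T^{-1/3}$ the contribution is trivially at least $-|d(A)| \ge |d(A)| - 4T^{-1/3}$. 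Hence in all cases it is at least $|d(A)| - 4T^{-1/3} - \negl(\kappa)$.

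Taking $\E_A$ and applying \Cref{thm:prs_advantage_small} to $\mathcal{B}$ — which is polynomial-time, since it makes $O(T)$ calls to $\mathcal{A}$, $G$, and $\mathcal{O}[A]$ — gives $\E_A|d(A)| \le 4T^{-1/3} + \negl(\kappa)$. Since $T$ ranges over all polynomials while the left-hand side does not depend on $T$, instantiating $T = \kappa^c$ for each constant $c$ shows that $\E_A|d(A)|$ is bounded by every inverse polynomial for sufficiently large $\kappa$, i.e.\ $\E_A|d(A)| \le \negl(\kappa)$, which is the claim. The only step requiring real care is the error analysis tying $\mathcal{B}$'s signed advantage to $\E_A|d(A)|$ and, in particular, confirming that the estimation error can be pushed below any inverse polynomial while keeping $\mathcal{B}$ polynomial-time; this is precisely where the ``arbitrary polynomial $T$'' quantifier is used. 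Everything else — $\mathcal{B}$'s efficiency, and the preparability of $I/2^n$ as a stand-in for a single copy of a Haar-random state — is routine.
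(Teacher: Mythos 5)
Your proof is correct, and it follows the same high-level Yao-style ``guess the sign of your own advantage'' template that the paper uses, but the execution is genuinely different. The paper's $\mathcal{B}$ uses a \emph{single} coin flip $c \in \{\pm 1\}$ and a single auxiliary run of $\mathcal{A}$ (on either the averaged PRS state or $I/2^n$), multiplies the three $\pm 1$ outputs, and computes the signed advantage of $\mathcal{B}$ to be exactly $\E_A\bigl[(a(A)-b(A))^2\bigr]$; a direct application of Jensen's inequality and the observation that $\sqrt{\negl(\kappa)} = \negl(\kappa)$ then finishes the proof, with no free parameter to tune. Your $\mathcal{B}$ instead runs $T$ fresh trials of each branch, applies Hoeffding to estimate the sign of $d(A)$ reliably, and shows the signed advantage is at least $|d(A)| - 4T^{-1/3} - \negl(\kappa)$ pointwise in $A$; you then need the extra quantifier argument ``$T$ ranges over all polynomials, so the inverse-polynomial slack vanishes.'' That last step is subtle but valid: for each target exponent $c$ you fix the polynomial $T(\kappa) = \kappa^{3c+3}$, giving a single uniform polynomial-time $\mathcal{B}_T$ to which \Cref{thm:prs_advantage_small} applies, and the Hoeffding failure probability $\exp(-\Omega(T^{1/3}))$ stays negligible. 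Both routes are sound; the paper's one-shot trick is more compact and avoids the free parameter, while yours is closer to the textbook amplification argument and arguably more transparent about where the sign estimation is happening.
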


\begin{proof}
We assume that $\mathcal{A}$ outputs a bit in $\{\pm1\}$. For an oracle $A$, let
\[
a(A) \coloneqq \Pr_{k^* \sim \{0,1\}^\kappa} \mbracket{\mathcal{A}^{\mathcal{O}[A]}\mparen{1^\kappa, \ket{\varphi_{k^*}}} = 1}
\]
and
\[
b(A) \coloneqq \Pr_{\ket{\psi} \sim \mu_{\mathrm{Haar}}^n}\mbracket{\mathcal{A}^{\mathcal{O}[A]}\mparen{1^\kappa, \ket{\psi}} = 1}
\]
so that the quantity we need to bound is $\E_{A} \left| a(A) - b(A) \right|$. Consider an adversary $\mathcal{B}\mparen{1^\kappa, \ket{\psi}}$ that does the following:
\begin{enumerate}
\item Toss a coin $c \in \{\pm1\}$.
\item If $c = 1$, then execute $\mathcal{A}\mparen{1^\kappa, \E_{k \sim \{0,1\}^\kappa}\mbracket{\ket{\varphi_k}\bra{\varphi_k}}}$ once and call the output $d$. Otherwise, if $c = -1$, then execute $\mathcal{A}\mparen{1^\kappa, \frac{I}{2^n}}$ and call the output $d$.
\item Output $c \cdot d \cdot \mathcal{A}\mparen{1^\kappa, \ket{\psi}}$.
\end{enumerate}

Observe that $\mathcal{B}$ runs in polynomial time. Also note that $c \cdot d$ is sampled to be $1$ with probability $\frac{1 + a(A) - b(A)}{2}$ and $-1$ with probability $\frac{1 - a(A) + b(A)}{2}$. As a result, we may compute:
\begin{align}
& \E_{A} \left[\Pr_{k^* \sim \{0,1\}^\kappa} \left[\mathcal{B}^{\mathcal{O}[A]}\left(1^\kappa, \ket{\varphi_{k^*}} \right) = 1 \right] - \Pr_{\ket{\psi} \sim \mu_{\mathrm{Haar}}^n} \left[\mathcal{B}^{\mathcal{O}[A]}\left(1^\kappa, \ket{\psi} \right) = 1 \right] \right] \nonumber\\
&\qquad\qquad = \E_{A} \left[\Pr[cd = 1]\left(a(A) - b(A)\right) + \Pr[cd = -1]\left(b(A) - a(A)\right) \right]\nonumber\\
&\qquad\qquad = \E_{A} \left[\left(a(A) - b(A)\right)^2\right].\label{eq:exp_(a-b)^2}
\end{align}

To complete the proof, we bound the quantity:
\begin{align*}
\E_{A} \left| a(A) - b(A) \right| &\le \sqrt{\E_{A} \left[ (a(A) - b(A))^2 \right]}\\
&= \sqrt{\E_{A} \left[\Pr_{k^* \sim \{0,1\}^\kappa} \left[\mathcal{B}^{\mathcal{O}[A]}\left(1^\kappa, \ket{\varphi_{k^*}} \right) = 1 \right] - \Pr_{\ket{\psi} \sim \mu_{\mathrm{Haar}}^n} \left[\mathcal{B}^{\mathcal{O}[A]}\left(1^\kappa, \ket{\psi} \right) = 1 \right] \right]}\\
&\le \sqrt{\negl(\kappa)}\\
&\le \negl(\kappa),
\end{align*}
where in the first line we applied Jensen's inequality, in the second line we substituted \eqref{eq:exp_(a-b)^2}, and in the third line we applied \Cref{thm:prs_advantage_small}.
\end{proof}

Finally, we complete the proof of \Cref{thm:single_copy_prs_prob_1} to show that $\{\ket{\varphi_k}\}_{k \in \{0,1\}^\kappa}$ is pseudorandom.

\begin{proof}[Proof of \Cref{thm:single_copy_prs_prob_1}]
It is clear that the ensemble satisfies the efficient generation criterion of \Cref{def:single_copy_prs}. Indeed, for any $k \in \{0,1\}^\kappa$, $U_{f_k}$ and $U_{g_k}$ can each be implemented using a single query to $\mathcal{A}$ on an input of length $n + \kappa + 1 \le \poly(\kappa)$, and hence $\ket{\varphi_k}$ can be prepared in polynomial time.

It remains only to establish computational indisginguishability from the Haar measure. By \Cref{cor:prs_abs_advantage_small}, for every polynomial-time quantum adversary $\mathcal{A}$, there exists a negligible function $\delta(\kappa)$ for which:
\[
\E_{A} \left|\Pr_{k^* \sim \{0,1\}^\kappa} \left[\mathcal{A}^{\mathcal{O}[A]}\left(1^\kappa, \ket{\varphi_{k^*}} \right) = 1 \right] - \Pr_{\ket{\psi} \sim \mu_{\mathrm{Haar}}^n} \left[\mathcal{A}^{\mathcal{O}[A]}\left(1^\kappa, \ket{\psi} \right) = 1 \right] \right| \le \delta(\kappa).
\]
Hence, by Markov's inequality, we know that:
\[
\Pr_{A} \left[\left|\Pr_{k^* \sim \{0,1\}^\kappa} \left[\mathcal{A}^{\mathcal{O}[A]}\left(1^\kappa, \ket{\varphi_{k^*}} \right) = 1 \right] - \Pr_{\ket{\psi} \sim \mu_{\mathrm{Haar}}^n} \left[\mathcal{A}^{\mathcal{O}[A]}\left(1^\kappa, \ket{\psi} \right) = 1 \right] \right| \ge \sqrt{\delta(\kappa)}\right] \le \sqrt{\delta(\kappa)}.
\]
Since $\delta(\kappa)$ is a negligible function, the infinite sum $\sum_{\kappa = 0}^\infty \sqrt{\delta(\kappa)}$ must be bounded by some constant. Therefore, by the Borel--Cantelli Lemma, with probability $1$ over $A$ we have
\[
\left|\Pr_{k^* \sim \{0,1\}^\kappa} \left[\mathcal{A}^{\mathcal{O}[A]}\left(1^\kappa, \ket{\varphi_{k^*}} \right) = 1 \right] - \Pr_{\ket{\psi} \sim \mu_{\mathrm{Haar}}^n} \left[\mathcal{A}^{\mathcal{O}[A]}\left(1^\kappa, \ket{\psi} \right) = 1 \right] \right| \le \sqrt{\delta(\kappa)}
\]
for all but at most finitely many $\kappa$. This is to say that with probability $1$ over $A$,
\begin{equation}
\label{eq:A_advantage_negligible}
\left|\Pr_{k^* \sim \{0,1\}^\kappa} \left[\mathcal{A}^{\mathcal{O}[A]}\left(1^\kappa, \ket{\varphi_{k^*}} \right) = 1 \right] - \Pr_{\ket{\psi} \sim \mu_{\mathrm{Haar}}^n} \left[\mathcal{A}^{\mathcal{O}[A]}\left(1^\kappa, \ket{\psi} \right) = 1 \right] \right| \le \negl(\kappa).
\end{equation}
Since there are only countably many uniform quantum adversaries $\mathcal{A}$, a union bound over all such $\mathcal{A}$ implies that with probability $1$ over $A$, \textit{every} polynomial-time quantum adversary $\mathcal{A}$ satisfies \eqref{eq:A_advantage_negligible}. This is to say that the state ensemble satisfies \Cref{def:single_copy_prs}.
\end{proof}

\section{Implications for the Standard Model}
\label{sec:standard_model}

We now make a few remarks about how the security proof above can be ported to the nonoracular setting. In particular, we argue that our security proof gives a way to instantiate real-world pseudorandom states without assuming the existence of one-way functions. We do so by considering the following security property of a (nonoracular) set of functions $F = \{(f_k, g_k)\}_{k \in \{0, 1\}^\kappa}$:

\begin{property}
    \label{property}
    Let $F = \{(f_k, g_k)\}_{k \in \{0,1\}^\kappa}$ be a set of pairs of functions $f_k, g_k: \{\pm 1\}^n \to \{\pm 1\}$ keyed by $k$, for some $\kappa + 1 \le n \le \poly(\kappa)$. We assume $F$ satisfies the following:
    \begin{enumerate}[(i)]
  \item \label{item:property_efficient_gen} (Efficient computation) For all $k$, $f_k$ and $g_k$ can be evaluated in time $\poly(\kappa)$.
  \item \label{item:fourier_smoothness} (Smoothness of Fourier spectrum of $f_k$) For all sufficiently large $\kappa$, $k \in \{0, 1\}^\kappa$, and $i \in \{0, 1\}^n$, we have that:
  \begin{equation}
    \label{eq:fourier_f_bounded}
      \left|\hat f_k(i)\right| \le \frac1{\sqrt{\eps 2^n}},
  \end{equation}
  where $\eps$ is given in \Cref{def:forrelation}. 
  Additionally, we have:
  \begin{equation}
  \label{eq:fourier_f_squared_uniform}
      \sum_{i \in \{\pm 1\}^n}\abs{\frac1{2^n} - \E_{k \in \{0, 1\}^\kappa}\left[\hat f_k(i)^2\right]} \le \negl(\kappa).
  \end{equation}
  \item \label{item:shifted_forrelation} (Hardness of shifted Forrelation)
    Let $h \sim \mathcal{H}_\kappa$ denote that we sample $h$ as follows. First, we choose $k \sim \{0,1\}^\kappa$. Then, we sample a function $g: \{\pm 1\}^n \to \{\pm 1\}$ by sampling $g(x)$ with bias $\sqrt{\eps 2^n} \hat{f}_k(x)$, independently for each $x \in \{\pm 1\}^n$. Equivalently, we sample from the conditional probability distribution $g \sim \mathcal{F}_n \mid f = f_k$. Finally, we let $h = g_k \cdot g$.
    
    For any polynomial-time quantum adversary $\mathcal{A}$ with quantum query access to $h$, we require that:
    \begin{equation}
    \left|
    \Pr_{h \sim \mathcal{H}_\kappa}\left[\mathcal{A}^h\left(1^\kappa \right) = 1 \right]
    -
    \Pr_{h : \{\pm 1\}^n \to \{\pm 1\}}\left[\mathcal{A}^h\left(1^\kappa \right) = 1 \right]
    \right| \le \negl(\kappa).
    \end{equation}
\end{enumerate}
\end{property}
The high level intuition of \eqref{item:shifted_forrelation} is that an efficient quantum algorithm on input a shift function $h$ (via oracle access) cannot approximate the maximum shifted Forrelation value: $\max_{k \in \{0, 1\}^\kappa} \abs{\braket{+^n|\Phi_{(f_k, g_k \cdot h)}}}$; or equivalently, the algorithm is not able to distinguish a uniformly random $h$ from $h$ such that for some $k$, $f_k$ and $g_k \cdot h$ are noticeably Forrelated.

\subsection{Usefulness of \texorpdfstring{\Cref{property}}{Property \ref{property}}}

In light of our proofs in \Cref{sec:single_security}, we observe that \Cref{property} simultaneously:
\begin{enumerate}[(a)]
    \item \label{item:sufficient_for_prs} Suffices to construct $n$-qubit single-copy pseudorandom states,
    \item \label{item:holds_random} Holds for a random oracle, and thus plausibly holds for existing cryptographic hash functions like SHA-3, and
    \item \label{item:independent_p_np} Is independent of $\mathsf{P}$ vs.\ $\mathsf{NP}$ in the black-box setting. \comment{L: Is it fair to say that this holds for a random oracle?}
\end{enumerate}

We briefly sketch why this is the case. To establish \eqref{item:sufficient_for_prs}, we note that the same general hybrid argument suffices to establish the single-copy pseudorandomness of the ensemble $\{\ket{\varphi_k} \coloneqq \ket{\Phi_{(f_k, g_k)}}\}_{k \in \{0,1\}^\kappa}$, assuming $F$ satisfies \Cref{property}. Consider a sequence of hybrids where in $\mathsf{H_1}$, the adversary receives $\ket{\varphi_k}$ for a random k; in $\mathsf{H_2}$, the adversary receives $\ket{\Phi_h}$ for $h \sim \mathcal{H}_\kappa$; in $\mathsf{H_3}$, the adversary receives $\ket{\Phi_h}$ for a uniformly random $h$; and in $\mathsf{H_4}$, the adversary receives a Haar-random state $\ket{\psi}$. Then, \cref{item:fourier_smoothness} of \Cref{property} serves as a substitute for \Cref{lem:fourier_bounded_prob_1} and \Cref{lem:fourier_f_squared_uniform_prob_1}, and implies via the same argument as \Cref{lem:tau_rho_sigma} that $\mathsf{H_2}$ is statistically indistinguishable from a non-negligible mixture of $\mathsf{H_1}$ and $\mathsf{H_3}$. \cref{item:shifted_forrelation} of \Cref{property} implies that $\mathsf{H_2}$ and $\mathsf{H_3}$ are computationally indistinguishable, because $\ket{\Phi_h}$ can be prepared efficiently with a single query to $h$. Finally, $\mathsf{H_3}$ and $\mathsf{H_4}$ are statistically indistinguishable, just because
\[
\E_{\ket{\psi} \sim \mu_{\mathrm{Haar}}^n}\mbracket{\ket{\psi}\bra{\psi}} = \E_{h : \{\pm 1\}^n \to \{\pm 1\}}\mbracket{\ket{\Phi_h}\bra{\Phi_h}} = \frac{I}{2^n},
\]
as observed in \Cref{claim:h3_h4}. Together, these imply that $\mathsf{H_1}$ and $\mathsf{H_4}$ are computationally indistinguishable, which proves the pseudorandomness of the ensemble.

On the other hand, \eqref{item:holds_random} and \eqref{item:independent_p_np} can be established simultaneously by showing that \Cref{property} holds relative to $\mathcal{O}[A]$, with probability $1$ over a random oracle $A$. This is because $A$ is a sub-oracle of $\mathcal{O}[A]$, so if \Cref{property} holds relative to $\mathcal{O}[A]$, then it certainly holds relative to $A$ alone, because the functions $f_k$ and $g_k$ depend only on $A$. Additionally, we know that $\mathsf{P}^A \neq \mathsf{NP}^A$ with probability $1$ over $A$ \cite{BG81-random-oracle}, whereas $\mathsf{P}^{\mathcal{O}[A]} = \mathsf{NP}^{\mathcal{O}[A]}$, by \Cref{prop:oracle_p_equals_np}.

\cref{item:property_efficient_gen} clearly holds relative to $\mathcal{O}[A]$, by \Cref{def:prs_oracle_ensemble}, whereas \eqref{item:fourier_smoothness} was shown to hold with probability $1$ in \Cref{lem:fourier_bounded_prob_1} and \Cref{lem:fourier_f_squared_uniform_prob_1}. Finally, \eqref{item:shifted_forrelation} was essentially established in the proof of \Cref{thm:h2_h3} in \Cref{app:proof_of_h2_h3}. Technically, one caveat is that our proof only showed:
\begin{equation}
\label{eq:oracle_caveat_1}
\E_A \mbracket{\Pr_{h \sim \mathcal{H}_\kappa}\left[\mathcal{A}^{\mathcal{O}[A], h}\left(1^\kappa \right) = 1 \right]
    -
    \Pr_{h : \{\pm 1\}^n \to \{\pm 1\}}\left[\mathcal{A}^{\mathcal{O}[A], h}\left(1^\kappa \right) = 1 \right]
     } \le \negl(\kappa),
\end{equation}
and this is not the same as
\begin{equation}
\label{eq:oracle_caveat_2}
\E_A\abs{\Pr_{h \sim \mathcal{H}_\kappa}\left[\mathcal{A}^{\mathcal{O}[A], h}\left(1^\kappa \right) = 1 \right]
    -
    \Pr_{h : \{\pm 1\}^n \to \{\pm 1\}}\left[\mathcal{A}^{\mathcal{O}[A], h}\left(1^\kappa \right) = 1 \right]
     } \le \negl(\kappa).
\end{equation}
The latter is needed to conclude, by the Borel--Cantelli lemma, that \eqref{item:shifted_forrelation} holds with probability $1$ relative to $\mathcal{O}[A]$. We believe that a minor modification of our proof strategy would yield this stronger claim.\footnote{Indeed, it would suffice to prove that \eqref{eq:oracle_caveat_1} holds for adversaries $\mathcal{A}$ with a single bit of advice in order to conclude \eqref{eq:oracle_caveat_2}, since in that case we can assume without loss of generality that the distinguishing advantage of $\mathcal{A}$ is nonnegative. Alternatively, it might be possible to prove an analogue of \Cref{cor:prs_abs_advantage_small} by having $\mathcal{A}$ guess the direction of its own distinguishing advantage. The difficulty is that there does not seem to be a way to efficiently simulate queries to the security challenge $h \sim \mathcal{H}_\kappa$, as we discuss further below.
}

\subsection{Further Remarks}


A few additional comments are in order. First, we emphasize that \eqref{item:shifted_forrelation} is the only computational hardness property assumed in \Cref{property}. Indeed, \eqref{item:fourier_smoothness} is merely a \textit{statistical} property of the functions $f_k$. Thus, we could gain confidence that \eqref{item:fourier_smoothness} holds for a specific $F$ by verifying on small values of $\kappa$, or we might even be able to prove that it holds unconditionally for certain instantiations of $F$. Furthermore, this statistical property as stated is sufficient but perhaps not necessary for our proofs to go through. For instance, we believe that one could relax $\eqref{eq:fourier_f_bounded}$ to only hold with overwhelming probability over uniformly chosen $k \in \{0, 1\}^\kappa$.

One might object that the security property \eqref{item:shifted_forrelation} is impractical and unrealistic, because there is no way to efficiently simulate quantum query access to a random $h \sim \mathcal{H}_\kappa$. In the language of Naor \cite{Naor03-assumption} and Gentry and Wichs \cite{GentryW11-falsifiable}, \Cref{property} is not \textit{falsifiable}, because the security property cannot (apparently) be modeled as an interactive game between an adversary and an efficient challenger, in which the challenger can decide whether the adversary won the game. However, we emphasize that efficient simulation is not actually necessary for a security property to be useful! Indeed, it is quite common for cryptographic security reductions to proceed via a hybrid argument in which one or more of the hybrids has no efficient simulation, as we have done here. We also note that exactly the same criticism could be leveled against the definition of pseudorandom states itself, because Haar-random quantum states cannot be prepared in polynomial time. And yet, we know that pseudorandom states \textit{are} useful for instantiating a wide variety of cryptographic schemes \cite{JLS18-prs,AQY22-prs,MY22-prs}.

\section{Conjectured Multi-Copy Security}

\label{sec:conjectured_multi_security}

In this section, we outline a plausible path towards proving that our oracle construction remains secure in the multi-copy case, assuming a strong conjecture about $t$-Forrelation states. To motivate this conjecture, it will be helpful to identify the step in our proof of single-copy security that breaks down in the multi-copy case. The key step appears to be \Cref{lem:tau_rho_sigma}, which essentially states that the view of the adversary under $\mathsf{H_2}$ is equivalent to a probabilistic mixture of its views under $\mathsf{H_1}$ and $\mathsf{H_3}$. This relies on the fact that, for a given $A$ and $k^*$, the state $\ket{\Phi_h}$ sampled in $\mathsf{H_2}$ will have $\E_{\mathsf{H_2}}\left[\left|\braket{\Phi_h | \varphi_{k^*}} \right| \mid A, k^* \right] = \delta$ for some non-negligible $\delta$. Unfortunately, this does not appear to hold in the multi-copy case: the expected overlap between $T$ copies of the states $\E_{\mathsf{H_2}} \left[\left|\bra{\Phi_h}^{\otimes T}\ket{\varphi_{k^*}}^{\otimes T} \right| \mid A, k^* \right]$ could be much smaller, typically on the order of $\delta^T$, which can quickly become negligible for $T = \poly(\kappa)$. Thus, the correlation between $\ket{\Phi_h}$ and $\ket{\varphi_{k^*}}$ is too small to directly prove indistinguishability in the multi-copy case. Note, however, that we have \textit{not} demonstrated multi-copy insecurity of our construction from \Cref{sec:single_security}. Rather, it just appears that proving multi-copy security would require different ideas.

\comment{M: Actually I think the current construction should remain secure for $T \approx \kappa$ copies since the distinguishing advantage of $\text{OR} \circ \text{Forrelation}$ is $2^{-\kappa}$ instead of $\negl(\kappa)$ so we can tolerate $T$ satisfying $\delta^T \approx 2^{-\kappa}$. We should be able to prove this just by some concentration arguments hopefully, although probably not before the QIP deadline. 

Also, I think a construction based on $\text{OR} \circ ($t$\text{-fold Forrelation})$ should allow us to improve it to any polynomial number of copies since the distinguishing advantage for $t$-fold Forrelation should be roughly $2^{-t\kappa}$. The only reason this does not work with the current proof is because in the current hard distribution for t-fold Forrelation $\delta = 2^{-t}$ (also decreases with $t$), but if we could improve it slightly to say $\delta = 2^{-t^{c}}$ for $c<1$, then we should be able to trade off the parameters. This is a much weaker conjecture than the one below which requires $\delta = 1-\negl(t)$.
}

\subsection{Our Conjecture}

To overcome this issue, we conjecture the existence of ``Forrelation-like'' distributions with much stronger correlation properties. The formal statement of our conjecture is the following:


\begin{conjecture}
\label{conj:strong_forrelation}
For some $t = \poly(n)$, for every $t$-tuple $G = (g^1, g^2, ..., g^t)$ where $g^i : \{\pm 1\}^n \to \{\pm 1 \}$, there exists a distribution $\mathcal{D}_G$ over $t$-tuples of functions $F = (f^1, f^2, ..., f^t)$ where $f^i : \{\pm 1\}^n \to \{\pm 1 \}$ such that:
\begin{enumerate}[(i)]
\item \label{item:conj_pseudorandom} (Pseudorandom against $\mathsf{AC^0}$) For every $C \in \mathsf{AC^0}[2^{\poly(n)}, O(1)]$,
\[
\left|\E_{F \sim \mathcal{D}_G}\left[C\left(\ttable(f^1), \ttable(f^2), \ldots, \ttable(f^t) \right)\right] - \E_{z \sim \{\pm 1\}^{t2^n}}\left[C(z)\right]\right| \le \negl(n).
\]
\item \label{item:conj_closeness} (Statistical closeness to $\ket{\Phi_G}$)
\[
\E_{F \sim \mathcal{D}_G} \left[\TD(\ket{\Phi_F}, \ket{\Phi_G}) \right] \le \mathrm{negl}(n).
\]
\item \label{item:conj_compound_uniform} (Compound distribution is uniform) If $G$ is a uniformly random $t$-tuple of functions, then sampling $F \sim \mathcal{D}_G$ yields a uniformly random $t$-tuple of functions (averaged over $G$).
\end{enumerate}
\end{conjecture}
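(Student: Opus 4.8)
The plan is to construct $\mathcal{D}_G$ as the endpoint of a polarizing random walk, in direct analogy with the proof of \Cref{thm:new_BQP_PH_General}. Write $N = 2^n$, and recall that the amplitudes of $\ket{\Phi_F}$ are multilinear polynomials of degree $t$ in the $tN$ bits $\ttable(f^1), \ldots, \ttable(f^t)$; in particular $\ket{\Phi_F}$ is affine in $\ttable(f^t)$ once $f^1, \ldots, f^{t-1}$ are fixed, since the amplitude of $\ket{x}$ gets multiplied by $f^t(x)$. Fixing the target tuple $G$, one would designate a large subset of the coordinates of $\ttable(f^1), \ldots, \ttable(f^{t-1})$ as the ``free'' part, to be driven by small Gaussians and polarized toward $\pm 1$, and use the remaining coordinates as a ``dependent'' part that is continuously slaved to the free part so as to keep $\abs{\braket{\Phi_F | \Phi_G}}$ within $\negl(N)$ of $1$ throughout the walk. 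This generalizes the $t = 2$ construction, where $f$ is free and $\ttable(g)$ tracks $\widehat f$ (a fixed linear image of $\ttable(f)$ via $H$); the new subtlety is that the slaving map is now nonlinear, since $F \mapsto \ket{\Phi_F}$ is a product of $2t - 1$ alternating Hadamard and diagonal-sign matrices. As in \Cref{thm:new_BQP_PH_General}, the walk runs for $m = \polylog(N)$ steps; each step adds an independent Gaussian from $N(0,\eps)$ to each free coordinate (scaled by its current distance to $\pm 1$, as in \Cref{claim:CHHL}), propagates the induced change to the dependent coordinates, and truncates everything to $[-1,1]$.

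Granting this setup, property \eqref{item:conj_closeness} would be immediate, since the walk actively maintains closeness of $\ket{\Phi_F}$ to $\ket{\Phi_G}$, and property \eqref{item:conj_pseudorandom} would follow by exactly the telescoping argument of \Cref{thm:new_BQP_PH_General}, provided one can establish analogues of its three claims: a ``truncations are irrelevant'' claim (cf.\ \Cref{claim:Y}) showing that with overwhelming probability the slaving map stays in its linear regime; a ``free coordinates polarize'' claim (cf.\ \Cref{claim:X_polarize}, via \Cref{claim:CHHL_polarizing}) so that replacing the free coordinates by their signs changes nothing detectable by $\mathsf{AC^0}$; and a ``small step'' claim (cf.\ \Cref{claim:small_step}) bounding the per-step change of $\E[C]$ by $O(\poly(n)\cdot \eps/\sqrt{N})$ for every $C \in \mathsf{AC^0}[2^{\poly(n)}, O(1)]$, using $L_{1,2}(C) \le \poly(n)$. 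The small-step claim is proved by feeding the joint Gaussian (free step, induced step) into the Raz--Tal theorem (\Cref{thm:RT}) after a random restriction as in \Cref{claim:CHHL}, which requires showing that its coordinates have variance $O(\eps^2)$ and pairwise covariance $O(\eps/\sqrt{N})$. Summing over the $m = \polylog(N)$ steps then gives total error $O(m\cdot \poly(n)\cdot \eps/\sqrt{N}) = \poly(n)/\sqrt{2^n} = \negl(n)$.

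For property \eqref{item:conj_compound_uniform}, the plan is a counting/symmetry argument, in the same spirit as \Cref{claim:h0_h1}. If $\mathcal{D}_G$ is (up to $\negl(N)$) uniform on the fiber $S_G = \{F : \TD(\ket{\Phi_F}, \ket{\Phi_G}) \le \negl(N)\}$, then, because the relation $F \in S_G$ is symmetric in $F$ and $G$, averaging $\mathcal{D}_G$ over a uniform $G$ assigns to $F$ a probability proportional to $\sum_{G : F \in S_G} 1/\abs{S_G}$; this equals $1$, hence the distribution is uniform, provided all fibers have essentially the same size. That regularity should follow from the symmetries of $t$-Forrelation states, such as $U_s \ket{\Phi_F} = \ket{\Phi_{(f^1, \ldots, f^{t-1}, s \odot f^t)}}$ for a sign vector $s \in \{\pm 1\}^N$, together with similar transformations.

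The main obstacle is unquestionably the ``small step'' claim. In the $t = 2$ case the slaving map is linear, and the covariance bound reduces to orthogonality of the rows of $H$ plus a Hoeffding bound on $\sum_\ell H_{j,\ell} H_{k,\ell} D_\ell^2$. For general $t$, the constraint surface $\{F : \ket{\Phi_F} \approx \ket{\Phi_G}\}$ is cut out by a degree-$t$ multilinear map, so choosing the free/dependent split, linearizing the slaving map in a way that is robust over $\polylog(N)$ steps, and bounding the covariances of the induced perturbations all require structural control over ``Hadamard/phase cocktail'' states that is currently out of reach --- this is exactly the kind of fine understanding of $t$-fold Forrelation that makes the multi-copy separation open. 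We also expect $t$ to need to be a sufficiently large polynomial in $n$ for the fibers $S_G$ to carry enough min-entropy to support an $\mathsf{AC^0}$-pseudorandom distribution at all, which is why the conjecture is stated for some (large) $t = \poly(n)$ rather than a small constant.
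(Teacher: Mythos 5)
The statement you are addressing is \Cref{conj:strong_forrelation}, which the paper does not prove: it is stated as a conjecture precisely because no construction of $\mathcal{D}_G$ with these three properties is known, and the paper itself points out the core obstruction --- the best known $t$-fold Forrelation distributions (e.g.\ \cite{BS21}) are pseudorandom against $\mathsf{AC^0}$ but achieve overlap with the target state only on the order of $2^{-\Omega(t)}$, whereas \cref{item:conj_closeness} demands $\TD(\ket{\Phi_F},\ket{\Phi_G}) \le \negl(n)$, i.e.\ overlap negligibly close to $1$. Your proposal is therefore not comparable to a paper proof; it is a research program, and by your own admission its central step is missing. Concretely: (1) the ``slaving map'' that keeps $\ket{\Phi_F}$ within $\negl$ of $\ket{\Phi_G}$ during the walk is never defined, and this is not a technicality --- maintaining overlap $1-\negl$ forces the dependent coordinates to absorb an enormous constraint, which is in direct tension with \cref{item:conj_pseudorandom}; note that even in the $t=2$ base case the walk of \Cref{thm:new_BQP_PH_General} does nothing of this sort, since $\mathcal{F}_n$ (\Cref{def:forrelation}) only produces \emph{weakly} Forrelated pairs with overlap about $\sqrt{\eps}$, so ``generalizing the walk'' does not by itself address the new requirement. (2) The analogue of \Cref{claim:small_step} for the nonlinear, degree-$t$ constraint surface is exactly where the covariance structure is unknown, as you concede, so \cref{item:conj_pseudorandom} is not established even modulo the construction.

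There is also a gap you did not flag in your argument for \cref{item:conj_compound_uniform}. The conjecture asks that the compound distribution be \emph{exactly} uniform, and your fiber-counting argument needs (a) $\mathcal{D}_G$ to be exactly uniform on the fiber $S_G$, (b) all fibers to have exactly equal size, and (c) the relation to be an exact symmetric partition rather than an approximate one; a walk endpoint plus rounding gives none of these, and ``uniform up to $\negl$'' on approximately-equal fibers only yields approximate uniformity of the compound distribution. The reason the $t=2$ analogue (\Cref{claim:h0_h1}) works is not a generic fiber symmetry but the specific product structure of $\mathcal{F}_n$: the marginal of $f$ is exactly uniform and the shift by an independent uniform $h$ exactly uniformizes $g$; any construction of $\mathcal{D}_G$ intended to satisfy \cref{item:conj_compound_uniform} would likely need to build in an analogous exact algebraic structure rather than derive it from regularity heuristics. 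In short, the statement remains a conjecture, and your proposal, while a reasonable sketch of one possible attack, does not close (or even clearly reduce) the gap the paper identifies.
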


To provide some intuition, we state a weaker conjecture that is implied by \Cref{conj:strong_forrelation}, and that is more directly comparable to the currently known properties of the Forrelation distribution. Note, however, that we only know how to prove multi-copy security assuming the stronger \Cref{conj:strong_forrelation}; it is not clear whether the weaker conjecture below suffices.

\begin{conjecture}
\label{conj:weak_forrelation}
For some $t = \poly(n)$, there exists a distribution $\mathcal{D}$ over $t$-tuples of functions $F = (f^1, f^2, ..., f^t)$ where $f^i : \{\pm 1\}^n \to \{\pm 1 \}$ such that:
\begin{enumerate}[(i)]
\item \label{item:conj_pseudorandom_weak} (Pseudorandom against $\mathsf{AC^0}$) For every $C \in \mathsf{AC^0}[2^{\poly(n)}, O(1)]$,
\[
\left|\E_{F \sim \mathcal{D}}\left[C\left(\ttable(f^1), \ttable(f^2), \ldots, \ttable(f^t) \right)\right] - \E_{z \sim \{\pm 1\}^{t2^n}}\left[C(z)\right]\right| \le \negl(n).
\]
\item \label{item:conj_closeness_weak} (Statistical closeness to $\ket{+^n}$)
\[
\E_{F \sim \mathcal{D}} \left[\braket{+^n | \Phi_F} \right] \ge 1 - \mathrm{negl}(n).
\]
\end{enumerate}
\end{conjecture}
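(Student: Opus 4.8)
The plan is to derive \Cref{conj:weak_forrelation} from \Cref{conj:strong_forrelation}; note that we will not need property~\eqref{item:conj_compound_uniform} of the latter, which is only used in the conditional multi-copy security argument. The starting observation is that $\ket{+^n}$ is itself a $t$-Forrelation state: when $t$ is odd and $G_0 = (\mathbf 1, \mathbf 1, \ldots, \mathbf 1)$ is the tuple of all constant-$1$ functions, each $U_{\mathbf 1} = I$ and the $t-1$ interleaved Hadamards satisfy $H^{t-1} = I$, so $\ket{\Phi_{G_0}} = \ket{+^n}$. Since \Cref{conj:strong_forrelation} is a statement ``for some $t = \poly(n)$,'' we invoke the instance for an odd such $t$, apply it with $G = G_0$ to obtain the distribution $\mathcal D_{G_0}$, and set $\mathcal D \coloneqq \mathcal D_{G_0}$.

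With this choice, property~\eqref{item:conj_pseudorandom_weak} of \Cref{conj:weak_forrelation} is exactly property~\eqref{item:conj_pseudorandom} of \Cref{conj:strong_forrelation} specialized to $G_0$, so there is nothing to prove there. For property~\eqref{item:conj_closeness_weak}, we start from property~\eqref{item:conj_closeness}, which gives $\E_{F \sim \mathcal D_{G_0}}[\TD(\ket{\Phi_F}, \ket{+^n})] \le \negl(n)$. Since every phase oracle $U_{f^i}$ and every Hadamard $H$ is a real matrix and $\ket{+^n}$ is a real vector, $\ket{\Phi_F}$ is a real unit vector; hence $\braket{+^n | \Phi_F} \in [-1, 1]$ is real and the pure-state identity $\TD(\ket{\Phi_F}, \ket{+^n})^2 = 1 - \braket{+^n|\Phi_F}^2$ holds. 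Using $\TD \le 1$, the elementary inequality $1 - |x| \le 1 - x^2$ for $x \in [-1, 1]$, and Jensen's inequality,
\begin{gather*}
\E_F\mbracket{1 - \abs{\braket{+^n|\Phi_F}}} \le \E_F\mbracket{1 - \braket{+^n|\Phi_F}^2} = \E_F\mbracket{\TD(\ket{\Phi_F}, \ket{+^n})^2}\\
\le \E_F\mbracket{\TD(\ket{\Phi_F}, \ket{+^n})} \le \negl(n),
\end{gather*}
so that $\E_F\mbracket{\abs{\braket{+^n|\Phi_F}}} \ge 1 - \negl(n)$.

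The one genuine subtlety, and the step I expect to require the only real care, is passing from $\abs{\braket{+^n | \Phi_F}}$ to the \emph{signed} overlap $\braket{+^n | \Phi_F}$ that property~\eqref{item:conj_closeness_weak} literally asks for, since trace distance is blind to the global $\pm 1$ phase of the real vector $\ket{\Phi_F}$. I would close this in one of two ways. First, any construction of $\mathcal D_{G_0}$ of the kind one would actually build --- a small perturbation of the point mass at $G_0$, in the spirit of the Forrelation distribution $\mathcal F_n$ --- keeps $\braket{+^n | \Phi_F}$ near $+1$ rather than near $-1$, because a small perturbation cannot send a real unit vector to its antipode; equivalently, one should read clause~\eqref{item:conj_closeness} of \Cref{conj:strong_forrelation} in the stronger (and morally equivalent) form $\E_F\bigl[\|\,\ket{\Phi_F} - \ket{\Phi_G}\,\|_2\bigr] \le \negl(n)$ with phases matched, which makes the signed bound immediate. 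Alternatively, to obtain the implication from the trace-distance formulation verbatim, one can redefine $\mathcal D$ by replacing $f^t$ with $-f^t$ whenever $\braket{+^n | \Phi_F} < 0$; this leaves the state $\ket{\Phi_F}$ unchanged up to phase, and the remaining task is to check that this sign normalization does not destroy pseudorandomness against $\mathsf{AC^0}$, which I expect but is the least routine point of the argument.
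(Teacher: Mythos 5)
The paper gives no proof here: \Cref{conj:weak_forrelation} is simply stated, with a one-line assertion that it follows from \Cref{conj:strong_forrelation}, and your plan---specializing $\mathcal D_G$ to a tuple $G_0$ with $\ket{\Phi_{G_0}} = \ket{+^n}$---is surely the intended route. You also correctly isolate the real obstruction: item~(ii) of \Cref{conj:strong_forrelation} controls only $\TD(\ket{\Phi_F},\ket{+^n})$, which is insensitive to the global $\pm 1$ sign of the real vector $\ket{\Phi_F}$, so your chain yields only $\E\mbracket{\abs{\braket{+^n|\Phi_F}}} \ge 1 - \negl(n)$, whereas item~(ii) of \Cref{conj:weak_forrelation} as written asks for the signed overlap. (Minor: Jensen's inequality plays no role in your chain; $\E[\TD^2] \le \E[\TD]$ just uses $\TD \le 1$.)

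Two points need more than you gave. First, \Cref{conj:strong_forrelation} only guarantees \emph{some} $t = \poly(n)$; for even $t$ the all-constant tuple gives $\ket{\Phi_{G_0}} = H\ket{+^n} \ne \ket{+^n}$, and exhibiting any $G_0$ with $\ket{\Phi_{G_0}} = \ket{+^n}$ at $t=2$ requires a bent function, which does not exist for odd $n$---so ``invoke the instance for an odd such $t$'' is not a free move. Second and more importantly, your fallback fix---negating $f^t$ exactly when $\braket{+^n|\Phi_F} < 0$---does not obviously preserve $\mathsf{AC^0}$-pseudorandomness, and I do not believe it does in general. If $E$ denotes the event $\braket{+^n|\Phi_F} \ge 0$, $\mathcal D'$ the flipped distribution, and $\neg_t$ the involution that negates the bits of $\ttable(f^t)$, then for any circuit $C$,
\[
\E_{\mathcal D'}[C] - \E_U[C] = \bigl(\E_{\mathcal D}[C] - \E_U[C]\bigr) + \E_{\mathcal D}\bigl[(C\circ\neg_t - C)\cdot\ind\{\neg E\}\bigr],
\]
and the second term is controlled only when $\Pr_{\mathcal D}[\neg E]$ is negligible---which is precisely the case in which no fix was needed. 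Your first fix (reading item~(ii) of \Cref{conj:strong_forrelation} as an $\ell_2$ bound with matched global phases) is the clean one, but it genuinely strengthens the conjecture as written rather than deriving the weak form from it; that is presumably what the authors intend, and is worth saying explicitly rather than leaving as an alternative.
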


In plain words, \Cref{conj:weak_forrelation} posits the existence of a distribution that is pseudorandom against $\mathsf{AC^0}$, and that samples \textit{highly $t$-Forrelated} functions with high probability. If we compare to what is known about the $t=2$ case, we know by \Cref{thm:forrelation_AC0_pseudorandom} that the Forrelation distribution $\mathcal{F}_n$ is also pseudorandom against $\mathsf{AC^0}$. However, $\mathcal{F}_n$ only samples functions that are \textit{weakly Forrelated}, i.e.\ $\E_{F \sim \mathcal{F}_n} \left[\braket{+^n | \Phi_F} \right] = \delta$ for some non-negligible $\delta$, rather than a $\delta$ that is close to $1$. For values of $t > 2$, the current state of the art for $t$-fold Forrelation~\cite{BS21} gives a distribution over $t$-tuples of functions that is pseudorandom against $\mathsf{AC^0}$ circuits (in fact, the pseudorandomness parameter is $2^{-\Omega(nt)}$ as opposed to $\negl(n)$), however the expected overlap of $\ket{\Phi_F}$ with the $\ket{+^n}$ state is roughly $2^{-\Omega(t)}$ which is not sufficient for our purposes.

We expect that it may be necessary to choose $t$ to be some large polynomial, say $t = n^2$, in order for either \Cref{conj:strong_forrelation} or \Cref{conj:weak_forrelation} to hold. The reason is that, for small $t$, there could be very few $F$s for which $\ket{\Phi_F}$ has large overlap with the $\ket{+^n}$ state, and so it might not be possible for a distribution over such $F$s to also be pseudorandom against $\mathsf{AC^0}$. In more technical terms, a counting argument suggests that random $2$-Forrelation states would not form an $\eps$-net to the set of $n$-qubit states with real amplitudes, at least not for small $\eps$. However, for larger $t$, it seems plausible that $t$-Forrelation states could form an $\eps$-net for some exponentially small $\eps$, and proving this might be a useful first step towards establishing either of our two conjectures.


\subsection{Security Proof Sketch}

Throughout this section, we assume \Cref{conj:strong_forrelation} holds for some fixed $t(n)$, and take this value of $t$ in \Cref{def:prs_oracle_ensemble}. Like in \Cref{sec:single_security}, our conditional security proof proceeds via a hybrid argument. As in \Cref{sec:single_security}, each hybrid defines a security challenge for the quantum adversary consisting of a state $\ket{\psi}$ and an oracle $A$, and the adversary makes queries to $\mathcal{O}[A]$. Unlike in the single-copy case, the adversary is given an arbitrary polynomial number of copies of $\ket{\psi}$ (i.e.\ $\ket{\psi}^{\otimes T}$ for any $T = \poly(\kappa)$ chosen by the adversary).

For convenience, in each of these hybrids we only specify the part of $A$ that corresponds to the functions $\{(f_k^1, f_k^2, \ldots, f_k^t)\}_{k \in \{0,1\}^\kappa}$ that are used to construct the states with security parameter $\kappa$. Recall that $f_k^i(x) =  A(x, k, i)$. Otherwise, the rest of $A$ is always sampled uniformly at random.

\paragraph{Hybrid $\mathsf{H_0}$:} Sample $k^* \sim \{0,1\}^\kappa$. For each $k \in \{0,1\}^\kappa$, sample $F_k$ uniformly at random. The adversary gets $\ket{\psi} = \ket{\varphi_{k^*}}$ as input.

\paragraph{Hybrid $\mathsf{H_1}$:}
Sample $G = (g^1, g^2, \ldots, g^t)$ uniformly at random. Sample $k^* \sim \{0,1\}^\kappa$. For each $k \in \{0,1\}^\kappa$, sample $F_k$ as follows:
\begin{itemize}
\item If $k = k^*$, draw $F_k \sim \mathcal{D}_{G}$.
\item If $k \neq k^*$, draw $F_k$ uniformly at random.
\end{itemize}
The adversary gets $\ket{\psi} = \ket{\Phi_{F_{k^*}}}$ as input.

\paragraph{Hybrid $\mathsf{H_2}$:} The same as Hybrid $\mathsf{H_1}$, but the adversary instead receives the state $\ket{\psi} = \ket{\Phi_{G}}$.

\paragraph{Hybrid $\mathsf{H_3}$:} The same as Hybrid $\mathsf{H_2}$, but now we draw $F_{k^*}$ uniformly at random.

\paragraph{Hybrid $\mathsf{H_4}$:} For each $k \in \{0,1\}^\kappa$, sample $F_k$ uniformly at random. The adversary gets a Haar-random state $\ket{\psi}$ as input.\\

Assuming \Cref{conj:strong_forrelation}, the rough idea of the security proof is largely the same as in the single-copy case, so we provide only a brief sketch of security.

\begin{theorem}
Assuming \Cref{conj:strong_forrelation}, with probability $1$ over $A$, $\{\ket{\varphi_k}\}_{k \in \{0,1\}^\kappa}$ is multi-copy pseudorandom relative to $\mathcal{O}[A]$.
\end{theorem}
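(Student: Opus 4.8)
The plan is to run the same five-hybrid argument as in the single-copy case (\Cref{sec:single_security}), now with the hybrids $\mathsf{H_0},\dots,\mathsf{H_4}$ as defined above and the adversary $\mathcal{A}$ given $\ket{\psi}^{\otimes T}$ for an arbitrary $T = \poly(\kappa)$ of its choosing. As before, set $p_i(\mathcal{A}) \coloneqq \Pr_{(\ket{\psi},A)\sim \mathsf{H}_i}\mbracket{\mathcal{A}^{\mathcal{O}[A]}(1^\kappa,\ket{\psi}^{\otimes T}) = 1}$. First I would bound $p_i(\mathcal{A}) - p_{i+1}(\mathcal{A}) \le \negl(\kappa)$ for each $i$, in expectation over $A$; then, as in \Cref{cor:prs_abs_advantage_small}, upgrade this to a bound on $\E_A\abs{p_\mathsf{0}(\mathcal{A}) - p_\mathsf{4}(\mathcal{A})}$ by having an auxiliary adversary first guess the sign of its own distinguishing advantage, preparing $\ket{\varphi_k}^{\otimes T}$ for a fresh uniform $k$ (using $\mathcal{O}[A]$) and $\ket{\Phi_G}^{\otimes T}$ for a fresh uniform $t$-tuple $G$ as a stand-in for a Haar-random state; and finally, as in the proof of \Cref{thm:single_copy_prs_prob_1}, pass from ``negligible in expectation over $A$'' to ``negligible with probability $1$ over $A$'' via Markov's inequality, the Borel--Cantelli lemma, and a union bound over the countably many uniform adversaries.

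Most of the transitions are statistical and use only \cref{item:conj_closeness} and \cref{item:conj_compound_uniform} of \Cref{conj:strong_forrelation}. For $\mathsf{H_0}\to\mathsf{H_1}$, \cref{item:conj_compound_uniform} guarantees that the oracle $A$ sampled in $\mathsf{H_1}$ is uniformly random and independent of $k^*$, so $(\ket{\psi}, A)$ has the same joint law as in $\mathsf{H_0}$ and $p_\mathsf{0} = p_\mathsf{1}$. For $\mathsf{H_1}\to\mathsf{H_2}$, conditioned on $G$, $k^*$, and $F_{k^*}$ the oracle is identical in the two hybrids and only the input state changes, so by subadditivity of trace distance under tensoring, $p_\mathsf{1}(\mathcal{A}) - p_\mathsf{2}(\mathcal{A}) \le \E\mbracket{\TD\mparen{\ket{\Phi_{F_{k^*}}}^{\otimes T}, \ket{\Phi_G}^{\otimes T}}} \le T\cdot \E_{G}\,\E_{F\sim\mathcal{D}_G}\mbracket{\TD(\ket{\Phi_F},\ket{\Phi_G})} \le T\cdot\negl(n) \le \negl(\kappa)$, using \cref{item:conj_closeness} and $\kappa < n \le \poly(\kappa)$. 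For $\mathsf{H_3}\to\mathsf{H_4}$, the oracle in $\mathsf{H_3}$ is uniform and independent of the state, so the only freedom is to distinguish $\ket{\Phi_G}^{\otimes T}$ for uniform $G = (g^1,\dots,g^t)$ from $\ket{\psi}^{\otimes T}$ for Haar-random $\ket{\psi}$ as states; fixing $g^2,\dots,g^t$ and writing $\ket{\Phi_G} = W\ket{\Phi_{g^1}}$ with $W = U_{g^t}H\cdots H$ a fixed unitary and $\ket{\Phi_{g^1}} = U_{g^1}\ket{+^n}$ a random binary phase state, \cite{BS19-binary} gives that $\E_{g^1}\mbracket{\ket{\Phi_{g^1}}\bra{\Phi_{g^1}}^{\otimes T}}$ is $\negl(n)$-close in trace distance to the $T$-copy Haar state, and conjugating by $W^{\otimes T}$ preserves both this trace distance and, by unitary invariance, the Haar state; averaging over $g^2,\dots,g^t$ yields $p_\mathsf{3}(\mathcal{A}) - p_\mathsf{4}(\mathcal{A}) \le \negl(\kappa)$.

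The crux, and the only place the pseudorandomness hypothesis enters, is the transition $\mathsf{H_2}\to\mathsf{H_3}$, the multi-copy analogue of \Cref{thm:h2_h3}: the oracle carries a planted sample $F_{k^*} \sim \mathcal{D}_G$ at a uniformly random location $k^*$ in $\mathsf{H_2}$ versus an entirely uniform oracle in $\mathsf{H_3}$, while in both cases $\mathcal{A}$ holds $\ket{\Phi_G}^{\otimes T}$. The plan is to condition on $G$, so that $\ket{\Phi_G}^{\otimes T}$ becomes fixed quantum advice which does not affect query complexity, and then to run the argument of \Cref{app:proof_of_h2_h3}: by \cite{FSS84-circuit-oracle} each bit of $\mathcal{O}[A]$ is an exponential-size $\mathsf{AC^0}$ circuit of $A$, and a bounded-query quantum algorithm on top of these bits has bounded low-level Fourier $L_1$ weight, so the acceptance probability of $\mathcal{A}^{\mathcal{O}[A]}$, as a function of the truth tables $\mbrace{\ttable(f_k^i)}$, lies in a restriction-closed family to which \cref{item:conj_pseudorandom} of \Cref{conj:strong_forrelation} applies; a hybrid argument over the $2^\kappa$ possible planted locations $k^*$, exactly as in \cite[Section 4.2]{AIK21-acrobatics}, then gives $p_\mathsf{2}(\mathcal{A}) - p_\mathsf{3}(\mathcal{A}) \le \negl(\kappa)$ in expectation over $A$.

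I expect this last step to be the main obstacle to turning the sketch into a full proof. One has to verify that the \cite{AIK21-acrobatics} machinery composes cleanly with the $t$-fold (rather than $2$-fold) Forrelation structure and, more importantly, with the external quantum advice $\ket{\Phi_G}^{\otimes T}$ --- in particular, whether \cref{item:conj_pseudorandom} as stated (pseudorandomness against $\mathsf{AC^0}[2^{\poly(n)}, O(1)]$) suffices, or whether it must be restated as pseudorandomness against the restriction-closed class of bounded-$L_{1,2}$ Boolean functions that actually arises. One also has to recover an \emph{absolute}-value advantage bound over $A$ for Borel--Cantelli to apply, which requires guessing the sign of the advantage without efficiently simulating the planted challenge; the sign-guessing reduction above should do this, but its correctness in the presence of $\mathcal{O}[A]$ should be checked. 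The deeper obstacle, of course, is \Cref{conj:strong_forrelation} itself, which remains open; showing that $t$-Forrelation states form an exponentially fine $\eps$-net of the real $n$-qubit states for some $t = \poly(n)$ would plausibly be a first step toward it.
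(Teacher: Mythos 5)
Your proposal matches the paper's proof sketch essentially exactly: the same five hybrids, the same use of items~(iii), (ii), and (i) of \Cref{conj:strong_forrelation} for the $\mathsf{H_0}\to\mathsf{H_1}$, $\mathsf{H_1}\to\mathsf{H_2}$, and $\mathsf{H_2}\to\mathsf{H_3}$ transitions respectively (with $\mathsf{H_2}\to\mathsf{H_3}$ rerunning the \Cref{app:proof_of_h2_h3} argument with $\mathcal{D}_G$ in place of $\mathcal{F}_n$), and the same reduction of a random $t$-Forrelation state to a random binary phase state conjugated by a fixed unitary for $\mathsf{H_3}\to\mathsf{H_4}$ via \cite{BS19-binary}. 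Your explicit observation that the sign-guessing adversary of \Cref{cor:prs_abs_advantage_small} must now use $\ket{\Phi_G}^{\otimes T}$ for a fresh uniform $G$ as an efficiently preparable stand-in for $T$ copies of a Haar-random state is a detail the paper's sketch leaves implicit, and it is indeed the right fix for the multi-copy setting.
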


\begin{proof}[Proof sketch]
    Hybrids $\mathsf{H_0}$ and $\mathsf{H_1}$ are indistinguishable (indeed, statistically identical from the view of the adversary) by \cref{item:conj_compound_uniform} of \Cref{conj:strong_forrelation}.
    
    Hybrids $\mathsf{H_1}$ and $\mathsf{H_2}$ are indistinguishable because for any $T = \poly(\kappa)$,
    \begin{align*}
        \E_{F \sim \mathcal{D}_G}\left[\TD\left(\ket{\Phi_F}^{\otimes T}, \ket{\Phi_G}^{\otimes T} \right) \right] &\le T \cdot \E_{F \sim \mathcal{D}_G}\left[\TD\left(\ket{\Phi_F}, \ket{\Phi_G} \right) \right]\\
        &\le T \cdot \negl(n)\\
        &\le \negl(\kappa),
    \end{align*}
    where the first line holds by the subadditivity of trace distance under tensor products, the second line holds by \cref{item:conj_closeness} of \Cref{conj:strong_forrelation}, and the third line holds because $n > \kappa$.

    Hybrids $\mathsf{H_2}$ and $\mathsf{H_3}$ are indistinguishable via the same argument as \Cref{thm:h2_h3} (proved in \Cref{app:proof_of_h2_h3}), by replacing the distribution $\mathcal{F}_n$ with $\mathcal{D}_G$. Indeed, the only property of $\mathcal{F}_n$ that is needed to prove \Cref{thm:h2_h3} is its pseudorandomness against $\mathsf{AC^0}$ (\Cref{thm:forrelation_AC0_pseudorandom}), and \cref{item:conj_pseudorandom} of \Cref{conj:strong_forrelation} posits that a similar pseudorandomness property holds for $\mathcal{D}_G$.

    Hybrids $\mathsf{H_3}$ and $\mathsf{H_4}$ are indistinguishable by the result of Brakerski and Shmueli \cite{BS19-binary}. In more detail, we know that in $\mathsf{H_3}$ the adversary always receives a random $t$-Forrelation state $\ket{\Phi_G}$, whereas in $\mathsf{H_4}$ the adversary always receives a Haar-random state $\ket{\psi}$. Hence, for any $T = \poly(\kappa)$, the distinguishing advantage of the adversary with $T$ copies of the state is upper bounded by:

    \begin{align*}
        \TD \left(\E_{G} \left[\ket{\Phi_G}\bra{\Phi_G}^{\otimes T} \right], \E_{\ket{\psi} \sim \mu_{\mathrm{Haar}}} \left[\ket{\psi}\bra{\psi}^{\otimes T} \right] \right) &\le \negl(n)\\
        &\le \negl(\kappa),
    \end{align*}
    where the first line was shown in \cite{BS19-binary}\footnote{Strictly speaking, \cite{BS19-binary} only showed this for the case in which $\ket{\Phi_G}$ is a random $1$-Forrelation state (i.e.\ a random binary phase state), not a $t$-Forrelation state. However, a random $t$-Forrelation state can only be ``more random'': indeed, it can be obtained from a random $1$-Forrelation state via multiplication by an independent random unitary of the form $U_{g^t} \cdot H \cdots U_{g^2} \cdot H$. Thus, by the unitary invariance of the Haar measure, the generalization to $t$-Forrelation states also holds.} and the second line holds because $n > \kappa$.

    Putting these bounds together, we have that (c.f. \Cref{thm:prs_advantage_small}), for any polynomial-time quantum adversary $\mathcal{A}$:
\[
\E_{A} \left[\Pr_{k^* \sim \{0,1\}^\kappa} \left[\mathcal{A}^{\mathcal{O}[A]}\left(1^\kappa, \ket{\varphi_{k^*}} \right) = 1 \right] - \Pr_{\ket{\psi} \sim \mu_{\mathrm{Haar}}^n} \left[\mathcal{A}^{\mathcal{O}[A]}\left(1^\kappa, \ket{\psi} \right) = 1 \right] \right] \le \negl(\kappa).
\]
    By the same arguments as \Cref{cor:prs_abs_advantage_small} and \Cref{thm:single_copy_prs_prob_1}, this implies that with probability $1$ over $A$, $\{\ket{\varphi_k}\}_{k \in \{0,1\}^\kappa}$ is multi-copy pseudorandom relative to $\mathcal{O}[A]$.
\end{proof}

\ifauthors
\section*{Acknowledgments}
We thank Scott Aaronson, Chinmay Nirkhe, and Henry Yuen for insightful discussions. Part of this work was done while the authors attended the 2022 Extended Reunion for the Quantum Wave in Computing at the Simons Institute for the Theory of Computing.
\fi

\bibliographystyle{alphaurl}
\bibliography{bibliography}

\appendix

\section{Proof of \texorpdfstring{\Cref{thm:h2_h3}}{Theorem \ref{thm:h2_h3}}}
\label{app:proof_of_h2_h3}

We require a few lemmas that were stated and proved in \cite{AIK21-acrobatics}. First, we give a quantitative version of the BBBV Theorem \cite{BBBV97-search}.

\begin{lemma}[{\cite[Lemma 37]{AIK21-acrobatics}}]
\label{lem:average_case_bbbv}
Consider a quantum algorithm $Q^x$ that makes $T$ queries to $x \in \{\pm 1\}^N$. Let $y \in \{\pm 1\}^N$ be drawn from some distribution such that, for all $i \in [N]$, $\Pr_{y}\left[x_i \neq y_i\right] \le p$. Then for any $r > 0$:
\[
\Pr_{y} \left[ \left|\Pr\left[Q^y = 1 \right] - \Pr\left[Q^x = 1 \right] \right| \ge r \right] \le \frac{64pT^2}{r^2}.
\]
\end{lemma}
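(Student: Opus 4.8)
The plan is to run the standard hybrid argument of Bennett--Bernstein--Brassard--Vazirani, but with a Markov step at the end to absorb the randomness of $y$. First I would introduce query magnitudes. Let $\ket{\psi_{t-1}}$ be the state of $Q$ immediately before its $t$-th query to the oracle $x$, and for $i \in [N]$ set $q_{i,t} \coloneqq \| \Pi_i \ket{\psi_{t-1}}\|^2$, where $\Pi_i$ projects the query register onto $\ket{i}$. Since $\sum_i \Pi_i = I$ on the query register, $\sum_i q_{i,t} = 1$ for each $t$, so the total magnitude $q_i \coloneqq \sum_{t=1}^T q_{i,t}$ satisfies $\sum_{i=1}^N q_i = T$.

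Next I would invoke the BBBV perturbation bound. Fix $x$ and let $y \in \{\pm 1\}^N$ be any string, and let $S \coloneqq \{i : x_i \neq y_i\}$. Running $Q$ with oracle $y$ in place of $x$ perturbs the state just before query $t$ by a vector of norm at most a constant times $\|\Pi_S \ket{\psi_{t-1}}\|$ (the constant accounting for the $\pm1$ phase-oracle convention rather than the usual bit-flip one), and these perturbations accumulate additively across the $T$ queries. Applying Cauchy--Schwarz to $\sum_{t=1}^T \|\Pi_S \ket{\psi_{t-1}}\| \le \sqrt{T}\,\sqrt{\sum_{t=1}^T\sum_{i\in S} q_{i,t}} = \sqrt{T \sum_{i\in S} q_i}$, we get that the two final states are within Euclidean distance $O(\sqrt{T \sum_{i\in S}q_i})$. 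Since acceptance is a measurement statistic, $|\Pr[Q^y=1]-\Pr[Q^x=1]|$ is bounded by the trace distance between the final states, hence by their Euclidean distance, so $|\Pr[Q^y=1]-\Pr[Q^x=1]|^2 \le c\, T \sum_{i : x_i \neq y_i} q_i$ for an absolute constant $c$ (one readily checks $c = 4$ is enough, and the claimed $64$ is comfortably larger, leaving room for any constant slack in the perturbation step).

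Finally I would apply Markov's inequality over $y$. The event $|\Pr[Q^y=1]-\Pr[Q^x=1]| \ge r$ forces $\sum_{i : x_i \neq y_i} q_i \ge r^2/(cT)$, so
\[
\Pr_y\!\left[\, \bigl|\Pr[Q^y=1]-\Pr[Q^x=1]\bigr| \ge r \,\right] \le \frac{cT}{r^2}\,\E_y\!\left[\sum_{i=1}^N q_i \cdot \ind\{x_i \neq y_i\}\right] = \frac{cT}{r^2}\sum_{i=1}^N q_i \Pr_y[x_i \neq y_i] \le \frac{cT}{r^2}\cdot p\sum_{i=1}^N q_i = \frac{cpT^2}{r^2},
\]
using linearity of expectation, the hypothesis $\Pr_y[x_i \neq y_i] \le p$, and $\sum_i q_i = T$; since $c \le 64$ this is exactly the stated bound.

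I do not expect a genuine obstacle — the lemma is essentially a repackaging of BBBV — but two points deserve care. The first is keeping the perturbation estimate as a \emph{deterministic} inequality phrased purely in terms of the disagreement set $S$, and only introducing the (possibly adversarially correlated) distribution of $y$ at the very last Markov step; this is what lets the argument tolerate arbitrary correlations among the coordinates of $y$, since only the one-coordinate marginals $\Pr_y[x_i \neq y_i] \le p$ are used. The second is simply tracking the constant through the $\pm 1$ phase-oracle convention versus the standard bit-oracle convention, which is where the slack between $4$ and $64$ comfortably lives.
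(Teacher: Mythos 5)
Your proposal is correct: the deterministic BBBV hybrid bound $|\Pr[Q^y=1]-\Pr[Q^x=1]|^2 \le O(T\sum_{i: x_i\neq y_i} q_i)$ with query magnitudes taken from the unperturbed $x$-run, followed by Markov's inequality using only the single-coordinate marginals $\Pr_y[x_i\neq y_i]\le p$ and $\sum_i q_i = T$, is exactly the argument behind this lemma. The paper itself does not reprove it but imports it as Lemma 37 of \cite{AIK21-acrobatics}, whose proof is this same hybrid-plus-Markov argument (with a looser constant, which your $c\le 64$ slack already accommodates).
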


For $A: \{\pm 1\}^* \to \{\pm 1\}$ an oracle, let $B$ be defined depending on $A$ as in \Cref{def:PH_oracle}. For any $\ell \in \Naturals$, denote by $A_{\le \ell}$ (respectively, $B_{\le \ell}$) the concatenation of $A(x)$ (respectively, $B(x)$) over all strings $x$ of length at most $\ell$. The next lemma uses the standard connection between $\mathsf{PH}$ algorithms and $\mathsf{AC^0}$ circuits \cite{FSS84-circuit-oracle} to show that each bit of an oracle constructed as $B$ can be computed by a small $\mathsf{AC^0}$ circuit in the bits of $A$.

\begin{lemma}[{c.f.\ \cite[Lemma 35]{AIK21-acrobatics}}]
\label{lem:PH_to_AC0}
Fix $\ell, d \in \Naturals$, and let $\ell' \le \ell^{2^d}$. For each $x \in \{\pm 1\}^{\ell'}$, there exists an $\mathsf{AC^0}\left[O\left(2^{\ell^{2^d}}\right), 2d \right]$ circuit that takes as input $A_{\le \ell^{2^{d-1}}}$ and $B_{\le \ell}$ and computes $B(x)$.
\end{lemma}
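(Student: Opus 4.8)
The plan is to induct on $d$, peeling off one layer of the recursive definition of $B$ at each step via the standard Furst--Saxe--Sipser translation of bounded-alternation computations into constant-depth circuits \cite{FSS84-circuit-oracle}. The base case $d = 0$ is trivial: a string $x$ of length $\ell' \le \ell^{2^0} = \ell$ has $B(x)$ equal to one of the bits of $B_{\le \ell}$, which a depth-$0$ ``circuit'' (a wire) computes.

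For the single unrolling step, fix $x$ with $|x| = \ell' \le \ell^{2^d}$ and view it as an encoding of $\langle M, y\rangle$, where $M$ is an $\mathsf{NP}$ oracle machine running in fewer than $\ell'$ steps and querying $A$ and $B$ only at lengths $\le \lfloor \sqrt{\ell'}\rfloor \le \ell^{2^{d-1}}$. Writing $M(y) = \bigvee_w V(y,w)$ over the fewer than $2^{\ell'}$ candidate witnesses $w$, each run $V(y,w)$ (with $y,w$ now fixed) queries fewer than $\ell'$ oracle bits, so it is a depth-$<\ell'$ decision tree in the bits of $A_{\le \ell^{2^{d-1}}}$ and $B_{\le \ell^{2^{d-1}}}$, hence a DNF; taking the $\mathsf{OR}$ over $w$ expresses $B(x) = M(y)$ as an $\mathsf{AC^0}[2^{O(\ell^{2^d})}, 2]$ circuit over those bits. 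For $d = 1$ this is already the claim, since $\ell^{2^0} = \ell$ and only $B_{\le \ell}$ is read.

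For $d \ge 2$ I would then apply the induction hypothesis at parameter $d-1$ to every bit $B(x')$ appearing as an input, where $|x'| \le \ell^{2^{d-1}}$: by hypothesis this bit is computed by an $\mathsf{AC^0}[2^{O(\ell^{2^{d-1}})}, 2(d-1)]$ circuit reading only $A_{\le \ell^{2^{d-2}}}$ and $B_{\le \ell}$. Substituting these into the depth-$2$ circuit from the previous step, and using $\ell^{2^{d-2}} \le \ell^{2^{d-1}}$, yields a circuit over $A_{\le \ell^{2^{d-1}}}$ and $B_{\le \ell}$ of size $2^{O(\ell^{2^d})} \cdot 2^{O(\ell^{2^{d-1}})} = 2^{O(\ell^{2^d})}$ and depth $2 + 2(d-1) = 2d$ (the $\mathsf{AND}$ layer of the outer DNF feeds into the $\mathsf{OR}$-topped inner circuits, so no layers collapse to lower the depth, and normalizing the inner circuits at their leaves keeps it from growing above $2d$).

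The substance is entirely in the one FSS-style unrolling; everything else is bookkeeping. The point needing the most care is the arithmetic of the length bounds — verifying $\lfloor \sqrt{\ell'}\rfloor \le \ell^{2^{d-1}}$ at each level, so that after $d$ unrollings every surviving $B$-query has length $\le \ell$ — together with normalizing the alternation pattern so the composed depth is exactly $2d$ rather than $2d + O(1)$ (if only depth $O(d)$ were needed downstream, even the latter could be skipped). One should also confirm that the advertised size $O(2^{\ell^{2^d}})$ follows with the intended choice of encoding, or else read it as $2^{O(\ell^{2^d})}$ as above.
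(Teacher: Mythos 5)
Your proof is correct and is essentially the standard argument: the paper does not prove this lemma itself but cites \cite[Lemma 35]{AIK21-acrobatics}, whose proof is exactly this FSS-style unrolling of the recursive definition of $B$, with the nondeterministic guesses and query decision trees flattened into a depth-$2$ OR-of-ANDs at each of the $d$ levels. Your closing caveat is also fine to dismiss: the distinction between size $O\bigl(2^{\ell^{2^d}}\bigr)$ and $2^{O(\ell^{2^d})}$ is immaterial for how the lemma is used in the proof of \Cref{thm:h2_h3}, where only a $\quasipoly(M)$ size bound is needed.
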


Finally, we require the following concentration result for block sensitivity of $\mathsf{AC^0}$ circuits.

\begin{lemma}[{\cite[Lemma 45]{AIK21-acrobatics}}]
\label{lem:ac0_block_averaged_sensitivity_tail_bound}
Let $f: \{\pm 1\}^{MN} \to \{\pm 1\}$ be a circuit in $\mathsf{AC^0}[s, d]$. Let $x \in \{\pm 1\}^{MN}$ be an input, viewed as an $M \times N$ array with $M$ rows and $N$ columns. Let $y$ be sampled depending on $x$ as follows: uniformly select one of the rows of $x$, randomly reassign all of the bits of that row, and leave the other rows of $x$ unchanged. Then for any $p > 0$:
\[
\Pr_{x \sim \{\pm 1\}^{MN}}\left[\Pr_y\left[f(x) \neq f(y)\right] \ge p \right] \le 8M^2N \cdot 2^{-\Omega\left(\frac{p M}{(\log(s + MN))^d}\right)}.
\]
\end{lemma}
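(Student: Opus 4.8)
The plan is to reduce the lemma to a concentration statement about the number of \emph{sensitive rows}, and then obtain that concentration by combining a random‑restriction self‑reduction with an iterated switching lemma. Say that row $i\in[M]$ is \emph{sensitive at} $x$ if there is some $z\in\{\pm1\}^N$ for which replacing the $i$‑th row of $x$ by $z$ changes the value of $f$, and write $\mathrm{sens}(x)$ for the number of sensitive rows at $x$. Re‑randomizing a non‑sensitive row never changes $f$, so $\Pr_y[f(x)\ne f(y)]\le \mathrm{sens}(x)/M$ for every $x$; hence it suffices to prove $\Pr_x[\mathrm{sens}(x)\ge pM]\le 8M^2N\cdot 2^{-\Omega(pM/w^d)}$, where $w:=\log(s+MN)$.

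Next I would introduce a two‑stage description of the uniform input $x$. Fix $\delta:=\Theta\!\left(1/w^{d}\right)$, chosen small enough for the switching lemma below. First sample a \emph{block restriction} $\rho$: independently declare each row \emph{free} with probability $\delta$ and \emph{fixed} with probability $1-\delta$, and assign uniformly random bits to the fixed rows; then complete $x$ by assigning uniformly random bits to the free rows. The free/fixed labels are independent of all bit values, so conditioned on any $x$ with $\mathrm{sens}(x)\ge pM$ the number of sensitive rows that are free stochastically dominates $\mathrm{Bin}(pM,\delta)$, and is therefore at least $\delta pM/2$ except with probability $e^{-\Omega(\delta pM)}$. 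We may assume $\delta pM\ge 16$ (otherwise $pM/w^{d}=\Theta(\delta pM)=O(1)$, so the claimed bound already exceeds $1$), which gives $\Pr_x[\mathrm{sens}(x)\ge pM]\le 2\Pr_{x,\rho}[\text{at least }\delta pM/2\text{ free rows are sensitive at }x]$. Now if a free row $i$ is sensitive at $x$, then replacing row $i$ of $x$ changes $f$, which says exactly that block $i$ is a sensitive block of the restricted function $f_\rho$ at the point $x$; consequently the number of free sensitive rows is at most the block sensitivity of $f_\rho$ at $x$, which is at most the decision‑tree depth $\mathrm{DT}(f_\rho)$. Since the event $\{\mathrm{DT}(f_\rho)\ge \delta pM/2\}$ depends only on $\rho$, we conclude $\Pr_x[\mathrm{sens}(x)\ge pM]\le 2\Pr_\rho[\mathrm{DT}(f_\rho)\ge \delta pM/2]$.

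It remains to bound the probability that a random block restriction fails to simplify $f$ to a shallow decision tree. For this I would invoke the iterated switching lemma: for $f\in\mathsf{AC^0}[s,d]$ and a $\delta$‑block‑restriction $\rho$ with $\delta\le 1/(c\,w)^{d-1}$ for a suitable absolute constant $c$, one has $\Pr_\rho[\mathrm{DT}(f_\rho)\ge K]\le O(s)\cdot 2^{-K}$. This follows by applying H{\aa}stad's switching lemma once per layer of the circuit (each application keeping a $1/O(w)$ fraction of the surviving rows and collapsing the bottom two layers into a width‑$O(w)$ DNF or CNF), union bounding over the at most $s$ gates at each layer, and finishing with one further switching step in which the product of the restriction parameter and the width is below $1/2$; since a block restriction only assigns entire rows, the usual counting (``witness‑labelling'') proof of the switching lemma applies verbatim with rows in place of individual variables. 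Taking $K=\delta pM/2$ yields $\Pr_\rho[\mathrm{DT}(f_\rho)\ge \delta pM/2]\le O(s)\cdot 2^{-\delta pM/2}$, and since $\delta pM=\Theta(pM/w^{d})$ and $s\le 2^{w}$, this is at most $8M^2N\cdot 2^{-\Omega(pM/w^{d})}$ — the inequality being trivial unless $pM$ exceeds roughly $w^{d+1}$, in which case halving the exponent easily dominates the $O(s)$ factor. Chaining the displayed inequalities proves the lemma.

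The main obstacle is the switching‑lemma step, in two respects. First, one must use a version valid for \emph{block} restrictions — whole rows either free or fixed — rather than the usual independent‑bit restrictions; this is precisely what makes the implication ``a free sensitive row forces $f_\rho$ to be sensitive to that block'' go through, and hence is essential to the reduction. Second, one must iterate the switching lemma across all $d$ layers while keeping the surviving fraction as large as $\delta=\Theta(1/w^{d})$ and the final failure probability genuinely exponentially small in the decision‑tree depth, so that the composed bound scales as $2^{-\Omega(pM/w^{d})}$ rather than only inverse‑polynomially (which is all a first‑moment/Markov argument would give, since $\E_x[\mathrm{sens}(x)]=O(w^{d})$). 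As a sanity check, the special case $N=1$ recovers the familiar statement that a uniformly random input to an $\mathsf{AC^0}[s,d]$ circuit has at most $O((\log s)^{d-1})$ sensitive coordinates with exponential concentration.
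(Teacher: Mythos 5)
Your very first reduction already aims at a false statement, and this is fatal to the whole plan. You replace the quantity $\Pr_y[f(x)\neq f(y)]$ by $\mathrm{sens}(x)/M$, where $\mathrm{sens}(x)$ counts rows for which \emph{some} reassignment changes $f$, and then set out to prove the tail bound $\Pr_x[\mathrm{sens}(x)\ge pM]\le 8M^2N\cdot 2^{-\Omega(pM/w^d)}$. That statement is simply not true for $\mathsf{AC^0}$. Take $f=\bigvee_{i\in[M]}\bigwedge_{j\in[N]}x_{i,j}$ (an ``OR of row-ANDs,'' size $M+1$, depth $2$). For a uniformly random $x$ with, say, $M=N$, with probability at least $1-M2^{-N}$ no row is all-ones, and then \emph{every} row is sensitive in your existential sense (replace it by the all-ones string), so $\mathrm{sens}(x)=M$ with probability $1-o(1)$; your target bound with $p=1$ would force this probability to be $2^{-\Omega(M/\mathrm{polylog}(M))}$. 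The original lemma survives this example only because the quantity it controls is the \emph{probability} that a uniformly re-randomized row changes $f$, which here is about $2^{-N}$ per row even though every row is existentially sensitive. Your $N=1$ sanity check is exactly what hides the problem: for single-bit rows the existential and probabilistic notions differ by a factor $2$, but for $N$-bit rows they differ by up to $2^N$, and the reduction throws away precisely the feature that makes the lemma true. (Note also that this paper does not prove the lemma at all; it imports it verbatim as \cite[Lemma 45]{AIK21-acrobatics}, so the proof you should compare against is the one there.)

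A second, related gap is the ``block switching lemma'' you invoke, namely $\Pr_\rho[\mathrm{DT}(f_\rho)\ge K]\le O(s)\cdot 2^{-K}$ for whole-row restrictions, with decision-tree depth measured in individual variables. As stated this is false: for $f$ equal to the AND of all $N$ bits of a single row (size $O(1)$), a row restriction leaves that row free with probability $\delta$, in which case $\mathrm{DT}(f_\rho)=N$, so the failure probability is $\delta$ rather than $O(1)\cdot 2^{-N}$. The usual witness-labelling proof does \emph{not} go through ``verbatim with rows in place of variables'': the $N$ bits inside a free row are not independently killed, so wide bottom gates confined to one row never shrink, and in the encoding step fixing a queried row would require recording an $N$-bit path value per row, destroying the counting. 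Any repair has to measure both term width and tree depth at the level of rows (block-width and block-decision-trees), and even then the statement being concentrated must remain the probabilistic resampling sensitivity rather than $\mathrm{sens}(x)$, since by the first paragraph the latter does not concentrate. As it stands, both the reduction and the key lemma it relies on fail, so the proposal does not establish the lemma.
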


We extend \Cref{lem:ac0_block_averaged_sensitivity_tail_bound} to the case where $y$ is sampled by drawing one row of $x$ from $\mathcal{F}_n$. The proof is largely copied from \cite[Lemma 46]{AIK21-acrobatics}, but with different choices of parameters. In particular, we care more about the scaling in $M$ (which corresponds to the number of keys, $2^\kappa$) rather than $N$ (which corresponds to $2^{n + 1}$).

\begin{lemma}[{c.f. \cite[Lemma 46]{AIK21-acrobatics}}]
\label{lem:ac0_single_forrelated_block_indistinguishable_b}
Let $M \le N = 2^{n + 1} \le \quasipoly(M)$, and suppose that $f: \{\pm 1\}^{MN} \to \{\pm 1\}$ is a circuit in $\mathsf{AC^0}[\quasipoly(M), O(1)]$. Let $x \in \{\pm 1\}^{MN}$ be an input, viewed as an $M \times N$ array with $M$ rows and $N$ columns. Let $y$ be sampled depending on $x$ as follows: uniformly select one of the rows of $x$, randomly sample that row from $\mathcal{F}_n$\footnote{Viewing $\mathcal{F}_n$ as a distribution over $\{\pm 1\}^{2^{n + 1}}$ by converting functions into their truth tables.}, and leave the other rows of $x$ unchanged. Then for some $p = \frac{\polylog(M)}{\sqrt{M}}$, we have:
\[
\Pr_{x \sim \{\pm 1\}^{MN}}\left[\Pr_y\left[f(x) \neq f(y)\right] \ge p \right] \le 8M^2N \cdot 2^{-\Omega\left(\frac{\sqrt{M}}{\polylog(M)}\right)}.
\]
\end{lemma}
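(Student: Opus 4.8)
The plan is to reduce to \Cref{lem:ac0_block_averaged_sensitivity_tail_bound} (the version where a row is resampled uniformly) using the $\mathsf{AC^0}$-pseudorandomness of $\mathcal{F}_n$ from \Cref{thm:forrelation_AC0_pseudorandom}. Fix $x \in \{\pm 1\}^{MN}$, write $x_j \in \{\pm 1\}^N$ for its $j$-th row, and let $x^{j \to w}$ denote $x$ with its $j$-th row replaced by $w$. Unpacking the definition of $y$,
\[
\Pr_y\mbracket{f(x) \ne f(y)} = \E_{j \sim [M]} \Pr_{w \sim \mathcal{F}_n}\mbracket{f(x^{j \to w}) \ne f(x)},
\]
while the analogous quantity with $w$ drawn uniformly at random is exactly $\Pr_{y'}\mbracket{f(x) \ne f(y')}$ for the uniform-resampling coupling $y'$ of \Cref{lem:ac0_block_averaged_sensitivity_tail_bound}. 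For each fixed $x$ and $j$, the map $w \mapsto \ind\mbrace{f(x^{j \to w}) \ne f(x)}$ is a restriction of $f$ (possibly composed with a single $\NOT$ gate, since $f(x)$ is a fixed bit once $x$ is fixed), hence is computable by a circuit in $\mathsf{AC^0}[\quasipoly(M), O(1)]$ on $N = 2^{n+1}$ variables.

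Next I would invoke \Cref{thm:forrelation_AC0_pseudorandom}. The hypothesis $M \le N \le \quasipoly(M)$ forces $n + 1 = \log N \le \polylog(M)$, so $\quasipoly(M) \le 2^{\poly(n)}$ and the indicator circuit above lies in the class covered by \Cref{thm:forrelation_AC0_pseudorandom}; moreover its error bound $\poly(n)/\sqrt{2^n}$ is at most $\polylog(M)/\sqrt{N} \le \polylog(M)/\sqrt{M}$ since $M \le N$. Writing $\eta = \polylog(M)/\sqrt{M}$ for this quantity, applying \Cref{thm:forrelation_AC0_pseudorandom} to each term $\Pr_{w \sim \mathcal{F}_n}[f(x^{j\to w}) \ne f(x)]$ and averaging over $j$ gives, for every $x$,
\[
\Pr_y\mbracket{f(x) \ne f(y)} \le \Pr_{y'}\mbracket{f(x) \ne f(y')} + \eta.
\]

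Finally, set $p = \eta + 1/\sqrt{M} = \polylog(M)/\sqrt{M}$. Then
\[
\Pr_{x}\mbracket{\Pr_y[f(x) \ne f(y)] \ge p} \le \Pr_{x}\mbracket{\Pr_{y'}[f(x) \ne f(y')] \ge 1/\sqrt{M}},
\]
and \Cref{lem:ac0_block_averaged_sensitivity_tail_bound}, applied with $s = \quasipoly(M)$, $d = O(1)$, and its threshold parameter equal to $1/\sqrt{M}$, bounds this by $8M^2 N \cdot 2^{-\Omega\mparen{(M/\sqrt{M})/(\log(s + MN))^{O(1)}}}$; since $s + MN \le \quasipoly(M)$, the denominator is $\polylog(M)$ and the exponent is $\Omega(\sqrt{M}/\polylog(M))$, which is precisely the claimed bound. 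The steps are all mechanical, and the proof is essentially that of \cite[Lemma 46]{AIK21-acrobatics} with $\mathcal{F}_n$ in place of the Gaussian Forrelation distribution; the only points needing care are (i) confirming that forming $\ind\mbrace{f(x^{j\to w}) \ne f(x)}$ keeps us inside the $\mathsf{AC^0}$ class to which \Cref{thm:forrelation_AC0_pseudorandom} applies — the subtlety being that $f(x)$ is data-dependent, but it is a constant for fixed $x$ and so costs only one extra gate — and (ii) checking that the regime $M \le N \le \quasipoly(M)$ collapses the $\poly(n)$, $\polylog(N)$, and $\log(s+MN)$ factors into a common $\polylog(M)$, so that the combined threshold is genuinely $p = \polylog(M)/\sqrt{M}$ and the tail exponent remains $\Omega(\sqrt{M}/\polylog(M))$. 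I do not expect any substantive obstacle beyond this bookkeeping.
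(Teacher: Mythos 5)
Your proposal is correct and follows essentially the same route as the paper's proof of \Cref{lem:ac0_single_forrelated_block_indistinguishable_b}: both introduce the indicator ``did the output flip when the selected row was resampled,'' note that for fixed $x$ and fixed row index it is a restriction of $f$ (up to a possible \NOT) and hence remains in $\mathsf{AC^0}[\quasipoly(M), O(1)] \subseteq \mathsf{AC^0}[2^{\poly(n)}, O(1)]$, invoke \Cref{thm:forrelation_AC0_pseudorandom} to swap $\mathcal{F}_n$ for the uniform distribution at a uniform additive cost $\polylog(M)/\sqrt{M}$, split the threshold as that cost plus $1/\sqrt{M}$, and finish with \Cref{lem:ac0_block_averaged_sensitivity_tail_bound}. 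The only cosmetic difference is that the paper packages the indicator into a single circuit $C(x,z,i)$ and restricts inside the argument, whereas you restrict first and then apply pseudorandomness; your bookkeeping for the parameter regime $M \le N \le \quasipoly(M)$ matches the paper's.
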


\begin{proof}
Consider a Boolean function $C(x, z, i)$ that takes inputs $x \in \{\pm 1\}^{MN}$, $z \in \{\pm 1\}^N$, and $i \in [M]$. Let $\tilde{y}$ be the string obtained from $x$ by replacing the $i$th row with $z$. Let $C$ output $1$ if $f(x) \neq f(\tilde{y})$, and $-1$ otherwise. Clearly, $C \in \mathsf{AC^0}[\quasipoly(M), O(1)]$. Observe that for any fixed $x$:
\begin{equation}
\label{eq:forrelation_block_same_distribution}
\Pr_{i \sim [M],z \sim \mathcal{F}_N}\left[C(x, z, i) = 1\right] = \Pr_{y}[f(x) \neq f(y)].
\end{equation}

By \Cref{thm:forrelation_AC0_pseudorandom}, there exists some $q \le \frac{\polylog(N)}{\sqrt{N}} \le \frac{\polylog(M)}{\sqrt{M}}$ such that:
\begin{equation}
\label{eq:forrelation_block_indistinguishable}
\left|\Pr_{i \sim [M],z \sim \mathcal{F}_N}\left[C(x, z, i) = 1\right] - \Pr_{i \sim [M],z \sim \{\pm 1\}^N}\left[C(x, z, i) = 1\right] \right| \le q.
\end{equation}

Choose $p = q + \frac{1}{\sqrt{M}}$, which is clearly at most $\frac{\polylog(M)}{\sqrt{M}}$. Putting these together, we obtain:
\begin{align*}
\Pr_{x \sim \{\pm 1\}^{MN}}\left[\Pr_y\left[f(x) \neq f(y)\right] \ge p \right]
&= \Pr_{x \sim \{\pm 1\}^{MN}}\left[\Pr_{i \sim [M], z \sim \mathcal{F}_N}\left[C(x, z, i) = 1\right] \ge p \right]\\
&\le \Pr_{x \sim \{\pm 1\}^{MN}}\left[\Pr_{i \sim [M], z \{\pm 1\}^N}\left[C(x, z, i) = 1\right] \ge p - q \right]\\
&= \Pr_{x \sim \{\pm 1\}^{MN}}\left[
\Pr_{i \sim [M],z \sim \{\pm 1\}^N}\left[f(x) \neq f(\tilde{y})\right]
\ge \frac{1}{\sqrt{M}}
\right]\\
&\le 8M^2N \cdot 2^{-\Omega\left(\frac{\sqrt{M}}{(\log(s + MN))^d}\right)}\\
&\le 8M^2N \cdot 2^{-\Omega\left(\frac{\sqrt{M}}{\polylog(M)}\right)},
\end{align*}
where the first line substitutes \eqref{eq:forrelation_block_same_distribution}; the second line holds by \eqref{eq:forrelation_block_indistinguishable} and the triangle inequality; the third line holds by the definition of $C$ and $\tilde{y}$ in terms of $i$ and $z$; the fourth line invokes \Cref{lem:ac0_block_averaged_sensitivity_tail_bound} for some $s = \quasipoly(M)$ and $d = O(1)$; and the last line uses these bounds on $s$ and $d$ along with the assumption that $N \le \quasipoly(M)$.
\end{proof}



We now proceed to the proof of \Cref{thm:h2_h3}, which we have restated here for convenience. The proof is conceptually similar to \cite[Theorem 47]{AIK21-acrobatics}

\thmappendix*

\begin{proof}
We will actually prove the stronger statement that the distinguishing advantage of $\mathcal{A}$ remains negligible between $\mathsf{H_2}$ and $\mathsf{H_3}$ even when $\mathcal{A}$ is given the entire truth table of $h$ as input, rather than $\ket{\Phi_h}$, so long as it only makes polynomially-bounded queries to $\mathcal{O}[A]$.

For $(A, h)$ sampled from $\mathsf{H_3}$, let $A^* \sim \mathcal{D}_{A, h}$ mean that we take $A^*$ to be an ``adjacent'' sample from $\mathsf{H_2}$. Formally, this means that we let $A^*$ be identical to $A$, except that for a uniformly random $k^* \in \{0,1\}^\kappa$ we replace $(f'_{k^*}, g'_{k^*}) \sim \mathcal{F}_n$. Note that this is well-defined because the functions $\{(f'_k, g'_k) \}_{k \in \{0,1\}^\kappa}$ are uniquely determined by $A$ and $h$ in the definition of $\mathsf{H_3}$. Additionally, observe that sampling $(A, h) \sim \mathsf{H_3}$ followed by $A^* \sim \mathcal{D}_{A, h}$ is equivalent to sampling $(A^*, h) \sim \mathsf{H_2}$. Hence, we may write:

\begin{align}
p_\mathsf{2}(\mathcal{A}) - p_\mathsf{3}(\mathcal{A}) &= \Pr_{(A, h) \sim \mathsf{H_2}} \left[
\mathcal{A}^{\mathcal{O}[A]}\left(1^\kappa, h \right) = 1 \right] - \Pr_{(A, h) \sim \mathsf{H_3}} \left[\mathcal{A}^{\mathcal{O}[A]}\left(1^\kappa, h \right) = 1 \right]\nonumber\\
&= \E_{(A, h) \sim \mathsf{H_3}} \left[
\Pr_{A^* \sim \mathcal{D}_{A, h}}\left[\mathcal{A}^{\mathcal{O}[A^*]}\left(1^\kappa, h \right) = 1 \right] - \Pr\left[\mathcal{A}^{\mathcal{O}[A]}\left(1^\kappa, h \right) = 1 \right] \right]\nonumber\\
&\le \E_{(A, h) \sim \mathsf{H_3}}\left|
\Pr_{A^* \sim \mathcal{D}_{A, h}} \left[\mathcal{A}^{\mathcal{O}[A^*]}\left(1^\kappa, h \right) = 1 \right] - \Pr\left[\mathcal{A}^{\mathcal{O}[A]}\left(1^\kappa, h \right) = 1 \right] \right|.\label{eq:p2_p3_A*}
\end{align}

Let $T \le \poly(\kappa)$ be an upper bound on the number of queries that $\mathcal{A}$ makes to $\mathcal{O}[A]$, and also on the length of the queries that it makes $\mathcal{O}[A]$. For some $p$ that we choose later, call $(A, h)$ ``good'' if, for all $x \in \{\pm 1\}^{\le T}$, $\Pr_{A^* \sim \mathcal{D}_{A, h}}\left[\mathcal{O}[A](x) \neq \mathcal{O}[A^*](x)\right] \le p$. Continuing from \eqref{eq:p2_p3_A*}, we have that for any $r \in [0, 1]$,

\begin{align}
p_\mathsf{2}(\mathcal{A}) - p_\mathsf{3}(\mathcal{A}) &\le \E_{(A, h) \sim \mathsf{H_3}}\left[ r + 
\Pr_{A^* \sim \mathcal{D}_{A, h}}\left[\left|\Pr\left[\mathcal{A}^{\mathcal{O}[A^*]}\left(1^\kappa, h \right) = 1 \right] - \Pr\left[\mathcal{A}^{\mathcal{O}[A]}\left(1^\kappa, h \right) = 1 \right]\right| \ge r\right] \right]\nonumber\\
&\le r + \frac{64pT^2}{r^2} + \Pr_{(A, h) \sim \mathsf{H_3}}[(A, h) \text{ not good}],\label{eq:p2_p3_Ah_not_good}
\end{align}
where in the second line we have applied \Cref{lem:average_case_bbbv}, which is applicable whenever $(A, h)$ is good.

It remains to upper bound the probability that $(A, h)$ is not good. By \Cref{lem:PH_to_AC0} with $\ell = \kappa$ and $\ell' = T$, for every $x \in \{\pm 1\}^{\le T}$, $\mathcal{O}[A](x)$ can be computed by an $\mathsf{AC^0}\left[O\left(2^{T^2} \right), O(1) \right]$ circuit whose inputs are $A_{\le T}$ and $B_{\le \kappa}$. Note that the only inputs to this circuit that can change between $A$ and $A^*$ are those corresponding to $\{(f_k, g_k)\}_{k \in \{0,1\}^\kappa}$; the rest can be viewed as fixed. We can further transform this into a circuit $C \in \mathsf{AC^0}\left[O\left(2^{T^2} \right), O(1) \right]$ whose inputs are $\{(f_k', g_k')\}_{k \in \{0,1\}^\kappa}$ and $h$, because $(f_k, g_k) = (f'_k, g'_k \cdot h)$ and each bit of this equivalence can be computed by a constant-size gadget. Using the notation of \Cref{lem:ac0_single_forrelated_block_indistinguishable_b} with $M = 2^\kappa$ and $N = 2^{n+1}$, we have that for some $p = \frac{\poly(\kappa)}{2^{\kappa/2}}$,
\begin{align*}
    \Pr_{(A, h) \sim \mathsf{H_3}}\left[\Pr_{A^* \sim \mathcal{D}_{A, h}}\left[\mathcal{O}[A](x) \neq \mathcal{O}[A^*](x)\right] > p\right] &= \Pr_{x \sim \{\pm 1\}^{MN}}\left[\Pr_{y}\left[C(x) \neq C(y)\right] > p\right]\\
    &\le 8M^2 N \cdot 2^{-\Omega\left(\frac{\sqrt{M}}{\polylog(M)} \right)}\\
    &\le 16 \cdot 2^{2 \kappa + n} \cdot 2^{-\Omega\left(\frac{2^{\kappa/2}}{\poly(\kappa)} \right)}\\
    &\le 2^{-2^{\Omega(\kappa)}}
\end{align*}
where the second line holds by \Cref{lem:ac0_single_forrelated_block_indistinguishable_b}, the third line substitutes $M$ and $N$, and the fourth line uses the assumption that $n \le \poly(\kappa)$. Hence, by a union bound over all $x \in \{\pm 1\}^{\le T}$, we conclude that
\begin{align*}
    \Pr_{(A, h) \sim \mathsf{H_3}}[(A, h) \text{ not good}] &= \Pr_{(A, h) \sim \mathsf{H_3}}\left[ \exists x \in \{\pm 1\}^{\le T} : \Pr_{A^* \sim \mathcal{D}_{A, h}}\left[\mathcal{O}[A](x) \neq \mathcal{O}[A^*](x)\right] > p\right]\\
    &\le \sum_{x \in \{\pm 1\}^{\le T}}\Pr_{(A, h) \sim \mathsf{H_3}}\left[ \Pr_{A^* \sim \mathcal{D}_{A, h}}\left[\mathcal{O}[A](x) \neq \mathcal{O}[A^*](x)\right] > p\right]\\
    &\le 2^{T + 1} \cdot 2^{-2^{\Omega(\kappa)}}\\
    &\le 2^{-2^{\Omega(\kappa)}},
\end{align*}
using in the third line the fact that $T \le \poly(\kappa)$. Combining with \eqref{eq:p2_p3_Ah_not_good} and choosing $r = 2^{-\kappa / 6}$ gives us the final bound
\begin{align*}
    p_\mathsf{2}(\mathcal{A}) - p_\mathsf{3}(\mathcal{A})
    &\le r + \frac{64pT^2}{r^2} + 2^{-2^{\Omega(\kappa)}}\\
    &\le r + \frac{\poly(\kappa)}{r^2 2^{\kappa/2}} + 2^{-2^{\Omega(\kappa)}}\\
    &\le \frac{\poly(\kappa)}{2^{\kappa / 6}}\\
    &\le \negl(\kappa),
\end{align*}
again because $T \le \poly(\kappa)$.
\end{proof}

\end{document}